\newtheorem{theorem}{Theorem}
\newtheorem{remark}{Remark}
\newcolumntype{Y}{>{\centering\arraybackslash}X}
\begin{document}
\begin{frontmatter}
\title{Random batch list method for metallic system with embedded atom potential}
\author[NWU]{Jieqiong~Zhang}  %
\author[LSEC,UCAS]{Jizu~Huang\corref{cor1}}
\cortext[cor1]{Corresponding author.}
\ead{huangjz@lsec.cc.ac.cn}
\author[NWPU,SH]{Zihao~Yang\corref{cor1}}
\ead{yangzihao@nwpu.edu.cn}

\address[NWU]{School of mathematics, Northwest University, Xi'an, 710127, China.}
\address[LSEC]{LSEC, Academy of Mathematics and Systems Science, Chinese Academyof Sciences, Beijing, 100190, China}
\address[UCAS]{School of Mathematical Sciences, University  of Chinese Academy of Sciences, Beijing, 100049, China.}
\address[NWPU]{School of mathematics and statistics, Northwestern Polytechnical University, Xi'an, 710072, China.}
\address[SH]{Shenzhen Research Institute of Northwestern Polytechnical University, Shenzhen, 518063, China}

\journal{Elsevier}

\begin{abstract}
The embedded atom method (EAM) is one of the most widely used many-body, short-range potentials in molecular dynamics simulations, particularly for metallic systems. To enhance the efficiency of calculating these short-range interactions, we extend the random batch list (RBL) concept to the EAM potential, resulting in the RBL-EAM algorithm. The newly presented method introduces two “core-shell” lists for approximately computing the host electron densities and the force terms, respectively. 
Direct interactions are computed in the core regions, while in the shell zones a random batch list is used to reduce the number of interaction pairs, leading to significant reductions in both computational complexity and storage requirements. We provide a theoretical, unbiased estimate of the host electron densities and the force terms. Since metallic systems are Newton-pair systems, we extend the RBL-EAM algorithm to exploit this property, thereby halving the computational cost. Numerical examples, including the lattice constant, the radial distribution function, and the elastic constants, demonstrate that the RBL-EAM method significantly accelerates simulations several times without compromising accuracy.

\end{abstract}
    
\begin{keyword}
  Random batch method; molecular dynamics; EAM potential; Newton-pair;  metallic system;
\end{keyword}
    
\end{frontmatter}

\section{Introduction}
%
In recent decades, the molecular dynamics (MD) method \cite{zhou2022MD,frenkel2023MD,hollingsworth2018MD} has achieved remarkable success in various fields such as chemical physics, materials science, and biophysics \cite{hospital2015MD,wang2021MD,krishna2021MD}, emerging as a widely used computational simulation technique to analyze the physical movements of atoms and molecules. 
By numerically integrating Newton's equations of motion, MD predicts the time evolution of atomic trajectories within a molecular system and enables the measurement of the equilibrium and dynamical properties of physical systems based on the resulting atomic configurations \cite{tuckerman2023statistical}.
The force terms in Newton's equations are determined by the interactions between particles, which are often calculated using interatomic potentials or molecular mechanical force fields. 
For nonbonded interactions, such as Lennard-Jones (LJ) \cite{heinz2008LJ}, the embedded atom method (EAM) \cite{daw1993embedded,daw1984embedded}, and Coulomb potentials \cite{pasichnyk2004coulomb}, the computational cost associated with force calculations typically accounts for 90\% or more of the total computational effort \cite{hollingsworth2018MD}. This substantial computational burden significantly limits both the efficiency and the scalability of the MD method.

Coulomb interactions, a typical example of long-range interactions, are commonly computed using Ewald-type summation methods \cite{ewald1921berechnung}. These methods handle the long-range smooth component in a uniform mesh using fast Fourier transforms, while the short-range contribution is computed in real space \cite{belhadj1991Ewald, deserno1998Ewald}. 
Ewald-type methods reduce computational complexity from ${\cal O}(N^2)$ to ${\cal O}(N\log N)$ or even ${\cal O}(N)$ and have been successfully applied in various contexts \cite{deserno1998Ewald,plimpton1997Ewald,batcho2001Ewald}. 
Inspired by the “random mini-batch” idea \cite{jin2020random}, the random batch Ewald (RBE) method was proposed to further enhance Ewald-type approaches, enabling an ${\cal O}(N)$ Ewald method for efficient molecular dynamics simulations \cite{jin2021Coulomb,jin2021Convergence,jin2022Coulomb,liang2022RBM,liang2023random,GAN2025Coulomb,huang2025RBM}.
To improve the efficiency of force calculations in short-range interactions, extensive researches has been conducted \cite{plimpton1995fast,cornwell2000parallel,karakasidis2005parallel,friedrichs2009GPU,pechlaner2021multiple}. 
A common strategy for reducing computational cost is to introduce a cutoff radius, approximating the interaction force as zero when the interatomic distance exceeds this threshold. 
Neighbor lists, such as the Verlet-style neighbor list \cite{frenkel2023MD,spreiter1999Verlet}, are widely used to store neighboring particles within this radius. Due to its advantages in parallel efficiency and memory usage, the Verlet-style neighbor list method has become a standard technique in widely used MD software, such as LAMMPS \cite{THOMPSON2022LAMMPS} and GROMACS \cite{valdes2021GROMACS}.
The prefactor of the linear-scaling neighbor list algorithm depends on the average number of particles within the cutoff radius, which can be large in heterogeneous systems due to the need for a larger radius.


To further enhance the Verlet-style neighbor list method,  inspired by the random batch method (RBM) introduced by Jin et al.\thinspace\cite{jin2020random}, Xu and Liang et al.\thinspace\cite{liang2021RBL,liang2023random,liang2024NVE,gao2024RBMD} proposed the random batch list (RBL) method for calculating short-range interactions in MD simulations. This method decomposes the Verlet neighbor list into two separate neighbor lists based on the “core-shell” structure surrounding each particle. The random batch concept is then applied to construct a mini-batch of particles randomly selected from the shell zone, thereby reducing the computational cost of calculating interactions in the this zone. Current research efforts are primarily focused on applying the RBL method to the liquid systems with LJ potential. 
To the best of our knowledge, no studies have explored the application of the RBL method to other long-range potentials, such as the EAM potential.

The  EAM potential is a semi-empirical, many-body potential used to compute the total energy of a metallic system and has been successfully applied to bulk and interface problems \cite{daw1985application,foiles1988application,foiles1989application}.
It is derived from the second-moment approximation to tight-binding theory, which models the energy of the metal as the energy obtained by embedding an atom into the host electron density provided by the surrounding atoms. In the EAM potential, interactions between atoms are decomposed into pairwise interactions and atom-host interaction. The pairwise interactions share a structure similar to that of the LJ potential, while the atom-host interaction is inherently more complex than the simple pair-bond model.  As a result, the embedding function accounts for important many-atom interactions \cite{daw1993embedded}. In contrast to the pairwise LJ potential, the many-body EAM potential requires computing the host electron density, which involves a summation over neighboring atoms. This summation makes it challenging to directly generalize the RBL method to the EAM potential.

In this work, we extend the RBL method to metallic systems with the EAM potential, resulting in the Random Batch List for EAM potential (RBL-EAM) method. Similar to the RBL method, the proposed RBL-EAM constructs a "core-shell" list to efficiently approximate the host electron density. Both pairwise interactions and the embedding force within the EAM potential are then approximately computed using a separate "core-shell" list. Drawing inspiration from the “random mini-batch” idea \cite{jin2020random}, the batch size can be much smaller than the number of particles in the shell zone, leading to significant reductions in computational complexity and storage requirements by several folds. 
Under the assumption that the embedding energy is a quadratic function of the host electron density, we provide a theoretical, unbiased estimate of the host electron density and the force acting on each particle. Since the particle system with the EAM potential is a Newton-pair system, the computational complexity of force terms is halved by utilizing a half neighbor list. To further enhance efficiency, we combine the RBL-EAM with the half neighbor list and present a theoretical, unbiased estimate of the force in this case. The accuracy and efficiency of the RBL-EAM method are rigorously validated through several benchmark simulations.


The remainder of this paper is organized as follows. In \autoref{sec:2}, we briefly introduce the MD simulation for metallic systems with the EAM potential. \autoref{sec:3} presents a detailed description of the newly proposed RBL-EAM method and the theoretical unbiased estimate of the force terms. Numerical examples, including the lattice constant, radial distribution function, and elastic constants, are presented in \autoref{sec:4} to validate the accuracy and effectiveness of the RBL-EAM method. Finally, we draw our conclusions in \autoref{sec:5}.

\section{Molecular dynamic simulation for metallic system with EAM potential}\label{sec:2}
We consider a metallic system composed of $N$ atoms. The $i$-th atom, located at position $\bm{q}_i$, $i = 1, 2, \cdots$, $N$, is confined within a rectangular box with side length $L_1 \times L_2 \times L_3$.
The total energy of this system can be defined as a summation of the following form:
\begin{equation}\label{eq:potential_all}
    \mathcal{U}(\bm{q}_1, \bm{q}_2, \cdots \bm{q}_{N})
       = \sum_{i=1}^{N} \mathcal{U}_i, 
\end{equation}
where $\mathcal{U}_i$ represents the energy associated with the $i$-th atom, determined by its interactions with the surrounding atoms. 
For metallic systems, the Embedded Atom Method (EAM) potential \cite{daw1984embedded} is a widely adopted interaction potential to construct $ \mathcal{U}_i$, which is a second moment approximation to the tight binding theory.
According to the EAM potential,
the energy $ \mathcal{U}_i$ of the $i$-th atom can be uniformly expressed as follows:
\begin{equation}\label{eq:potential_u_i}
  \mathcal{U}_i = \mathcal{F}(\bar{\rho}_i)
  +
  \frac{1}{2}\sum_{\substack{j(j\neq i)\\  |\bm{r}_{ij}| \leq r_s}} \mathcal{V} (|\bm{r}_{ij}|),
\end{equation}
where $|\bm{r}_{ij}|$ with $\bm{r}_{ij} = \bm{q}_i - \bm{q}_j$ denotes the distance between the $i$-th and $j$-th atoms,  
$\mathcal{V}(|\bm{r}_{ij}|)$ is a pairwise potential energy as a function of $|\bm{r}_{ij}|$, and $r_s$ is the cutoff radius, respectively. 
Atoms within the cutoff radius $|\bm{r}_{ij}| \leq r_s$ are denoted as neighbors of the $i$-th atom.
We represent the set of all neighbors of atom $i$ as $\mathfrak{R}_i = \bigl\{j\big |\,|\bm{r}_{ij}| \leq r_s\bigr\}$.
In \eqref{eq:potential_u_i}, $\mathcal{F}(\bar{\rho}_i)$ is the embedding energy that represents the energy required to place the $i$-th atom into the electron cloud.
The host electron density $\bar{\rho}_i$ for atom $i$ is closely approximated by a sum of the atomic density $\rho(|\bm{r}_{ij}|)$ contributed by its neighbor atoms $j$: 
\begin{equation}\label{eq:bar_rho_i}
  \bar{\rho}_i = \sum_{\substack{j(j\neq i)\\  |\bm{r}_{ij}| \leq r_s}} \rho(|\bm{r}_{ij}|).
\end{equation}
%
The EAM potential defined in \eqref{eq:potential_u_i} requires the specification of three scalar functions:
the embedding function  $\mathcal{F}(\cdot)$, a pair-wise interaction $\mathcal{V}(\cdot)$, and an electron cloud contribution function $\rho(\cdot)$.
These functions are typically treated as fitting functions with undetermined parameters. According to \cite{daw1984embedded}, the function ${\cal F}$ should be fitted under several constrains, i.e., ${\cal F}$ has a single minimum and is linear at higher densities. 
The pairwise function ${\cal V}$ usually takes the following particular form: 
\begin{equation}
\label{pairwiseform}
   {\cal V}(|\bm{r}_{ij}|)=\frac{{\cal Z}_i(|\bm{r}_{ij}|){\cal Z}_j(|\bm{r}_{ij}|)}{|\bm{r}_{ij}|},
\end{equation}
where ${\cal Z}_i(|\bm{r}_{ij}|)$ is the effective charge of the $i$-th atom. 
To ensure the zero of energy correspond to neutral atoms separated to infinity, ${\cal Z}_i(|\bm{r}_{ij}|)$ should be monotonic and vanish continuously beyond a certain distance \cite{daw1984embedded}.
The electron density function $\rho(|\bm{r}|)$ \cite{banerjea1988origins} is defined as 
\begin{equation}
\label{exp-rho}
\rho(|\bm{r}|)= \rho_e\exp(-a(|\bm{r}|-r_e)),
\end{equation}
where $\rho_e>0$ is a scaling constant, $r_e$ is the equilibrium nearest distance, and $a>0$ is an adjustable parameter. For systems without external electric fields, the equilibrium nearest distance $r_e$ is typically set to zero.
For further details on the EAM potential, including interatomic potentials for various metals, we refer readers to  \cite{becker2013eam,hale2018eam,web_eam}.

For a system with $N$ atoms, the Lagrangian is defined as ${\cal L}=\sum_{i=1}^N \frac{  \bm{p}_i^2}{2m_i}-\mathcal{U}(\bm{q}_1, \bm{q}_2, \cdots \bm{q}_{N})$,
where $\bm{p}_i=m_i\dot{\bm{q}}_i$ represents the momenta of the $i$-th atom and $m_i$ is the mass. The Hamiltonian ${\cal H}$ of this system is given by ${\cal H}=\sum_{i=1}^N  \bm{p}_i\dot{\bm{q}}_i -{\cal L}$.  In this work, we assume that the system is in the canonical ensemble (NVT) \cite{nose1984NVT}. 
By introducing the Langevin thermostat,  the motion of the $N$ atoms is modeled as 
\begin{equation}\label{motion}
  \dot{\bm{q}}_i := \frac{\partial{\cal H}}{\partial \bm{p}_i}=\frac{\bm{p}_i}{m_i},\quad \dot{\bm{p}}_i := -\frac{\partial{\cal H}}{\partial \bm{q}_i}=-\frac{\partial \mathcal{U}(\bm{q}_1, \bm{q}_2, \cdots, \bm{q}_{N})}{\partial \bm{q}_i}-\gamma \bm{p}_i+\xi W_i,\quad i=1,\,2,\,\ldots,\,N.
\end{equation}
To ensure the system recovers the canonical ensemble distribution, the parameters $\gamma$ and $\xi$ are connected by a fluctuation-dissipation relation $\xi^2=2\gamma m_i k_B T$ with $k_B$ and $T$ being the Boltzmann constant and temperature, respectively. 

The MD simulation is performed by numerically solving the motion equations \eqref{motion} with a numerical time integration such as Verlet algorithm \cite{spreiter1999Verlet}.  
During the time integration, the primary computational cost arises from calculating the potential $\mathcal{U}(\bm{q}_1, \bm{q}_2, \cdots, \bm{q}_{N})$ and the force $\bm{\sigma}_i:=-{\partial \mathcal{U}(\bm{q}_1, \bm{q}_2, \cdots, \bm{q}_{N})}/{\partial \bm{q}_i}$. 
In EAM potential, the  potential is the summation of the embedding term and the pairwise term. Therefore, the force $\bm{\sigma}_i$ can be rewritten as
\begin{align}\label{eq:sigma_i}
  \begin{split}
    \bm{\sigma}_i  = -\frac{\partial \mathcal{U}(\bm{q}_1, \bm{q}_2, \cdots, \bm{q}_{N})}{\partial \bm{q}_i}
      := \bm{\sigma}^{e}_{i}  + \sum_{\substack{j(j\neq i)\\  |\bm{r}_{ij}| \leq r_s}} \bm{\sigma}^{p}_{ij},
  \end{split} 
\end{align}
where $\bm{\sigma}^{e}_{i}$ is the embedding force and $\bm{\sigma}^{p}_{ij}$ is the pairwise force between the $i$-th atom and the $j$-th atom. According to \eqref{eq:potential_all} and \eqref{eq:potential_u_i}, we derive 
\begin{align}\label{eq:sigma_i_components}
\bm{\sigma}^{e}_{i} & =-\sum_{j=1}^N\frac{\partial {\cal F}(\bar{\rho}_j)}{\partial \bm{q}_i}= - \sum_{\substack{j(j\neq i)\\  |\bm{r}_{ij}| \leq r_s}} 
\left(\mathcal{F}'(\bar{\rho}_i) {\rho}'(|\bm{r}_{ij}|)\frac{\partial |\bm{r}_{ij}|}{\partial \bm{q}_i}+ \mathcal{F}'(\bar{\rho}_j) {\rho}'(|\bm{r}_{ji}|)\frac{\partial |\bm{r}_{ji}|}{\partial \bm{q}_i}\right),\\
  \bm{\sigma}^{p}_{ij} &= -\frac12 \Bigl[
  {\mathcal{V}'(|\bm{r}_{ij}|)}\frac{\partial |\bm{r}_{ij}|}{\partial \bm{q}_i} +
  {\mathcal{V}'(|\bm{r}_{ji}|)}\frac{\partial |\bm{r}_{ji}|}{\partial \bm{q}_i}
  \Bigr]={\mathcal{V}'(|\bm{r}_{ij}|)}\frac{\partial |\bm{r}_{ij}|}{\partial \bm{q}_i}. \label{eq:sigma_i_components-p}
\end{align}
For the pairwise force $\bm{\sigma}^{p}_{ij}$, we have $\bm{\sigma}^{p}_{ij}=-\bm{\sigma}^{p}_{ji}$.
Let $\bm{\sigma}^{e}_{ij,1}=-\mathcal{F}'(\bar{\rho}_i) {\rho}'(|\bm{r}_{ij}|)\frac{\partial |\bm{r}_{ij}|}{\partial \bm{q}_i}$ and $\bm{\sigma}^{e}_{ij,2}=- \mathcal{F}'(\bar{\rho}_j) {\rho}'(|\bm{r}_{ji}|)\frac{\partial |\bm{r}_{ji}|}{\partial \bm{q}_i}$ be the partial embedding force contributed by host electron density $\bar{\rho}_i$ and $\bar{\rho}_j$, respectively. 
As there exist ${\mathcal F}'(\bar{\rho}_i)$ and ${\mathcal F}'(\bar{\rho}_j)$ in \eqref{eq:sigma_i_components} and $\bar{\rho}_i= \bar{\rho}_j$ is usually unavailable, the embedding force cannot be taken as a pairwise force.
However, we can still obtain the following force balance
\begin{equation}\label{forcebalance}
\begin{aligned}
\bm{\sigma}^{e}_{ij,1}+\bm{\sigma}^{e}_{ij,2}&=-\mathcal{F}'(\bar{\rho}_i) {\rho}'(|\bm{r}_{ij}|)\frac{\partial |\bm{r}_{ij}|}{\partial \bm{q}_i}- \mathcal{F}'(\bar{\rho}_j) {\rho}'(|\bm{r}_{ji}|)\frac{\partial |\bm{r}_{ji}|}{\partial \bm{q}_i}\\&=\mathcal{F}'(\bar{\rho}_i) {\rho}'(|\bm{r}_{ji}|)\frac{\partial |\bm{r}_{ji}|}{\partial \bm{q}_j}+\mathcal{F}'(\bar{\rho}_j) {\rho}'(|\bm{r}_{ji}|)\frac{\partial |\bm{r}_{ji}|}{\partial \bm{q}_j}=-\bm{\sigma}^{e}_{ji,2}-\bm{\sigma}^{e}_{ji,1}.
\end{aligned}    
\end{equation}

According to \eqref{exp-rho}, we observe that $\bm{\sigma}^{e}_{ij,1}$ and $\bm{\sigma}^{e}_{ij,2}$ decay exponentially as $|\bm{r}_{ij}|$  approaches infinity. Due to the form of ${\cal V}$ given in \eqref{pairwiseform}, the pairwise force  $\bm{\sigma}^{p}_{ij}$ also decays as $|\bm{r}_{ij}|$ goes to infinity, with the decay rate depending on the effective charge ${\cal Z}$. 
This is why a cutoff radius is introduced in the EAM potential. It is well known that the computational cost of potential and force for EAM increase rapidly as the cutoff radius $r_s$ increases.The cutoff radius $r_s$ depends on the lattice structure of metals. 
In this paper, we consider three types of lattice structures (see \autoref{fig:RVE_Structure}): body-centered cubic (BCC) lattice structure, face-centered cubic (FCC) lattice structure, and hexagonal close-packed (HCP) lattice structure. 
The EAM potential as a function of $|\bm{r}|$ is plotted for metal Cu (FCC), $\alpha$-Fe (BCC), and Mg (HCP) in \autoref{fig:neighbor_FCC-BCC-HCP}.
Due to the lattice structure, the neighbor set $\mathfrak{R}_i$ can be decomposed into the $1$-st, $2$-nd, ..., and $7$-th nearest neighbors. By assuming that all atoms are in the equilibrium positions, we count the number of atoms in the $i$-th near neighbor and calculate the contributions of these atoms to the total potential energy. The results, shown in \autoref{fig:neighbor_FCC-BCC-HCP}, indicate that the first two nearest neighbors account for less than $25\%$ ($12.8\%$ for FCC lattice, $15.9\%$ for BCC lattice and $22.5\%$ for the HCP lattice) of the atoms but contribute more than $70\%$ of the energy. However, as the numerical results demonstrated in \autoref{sec:4}, considering only the first and second nearest neighbors in the neighbor set  $\mathfrak{R}_i$, may lead to non-physical simulation results. 
\begin{figure}[htbp]
  \centering
  \includegraphics[width=0.90\textwidth]{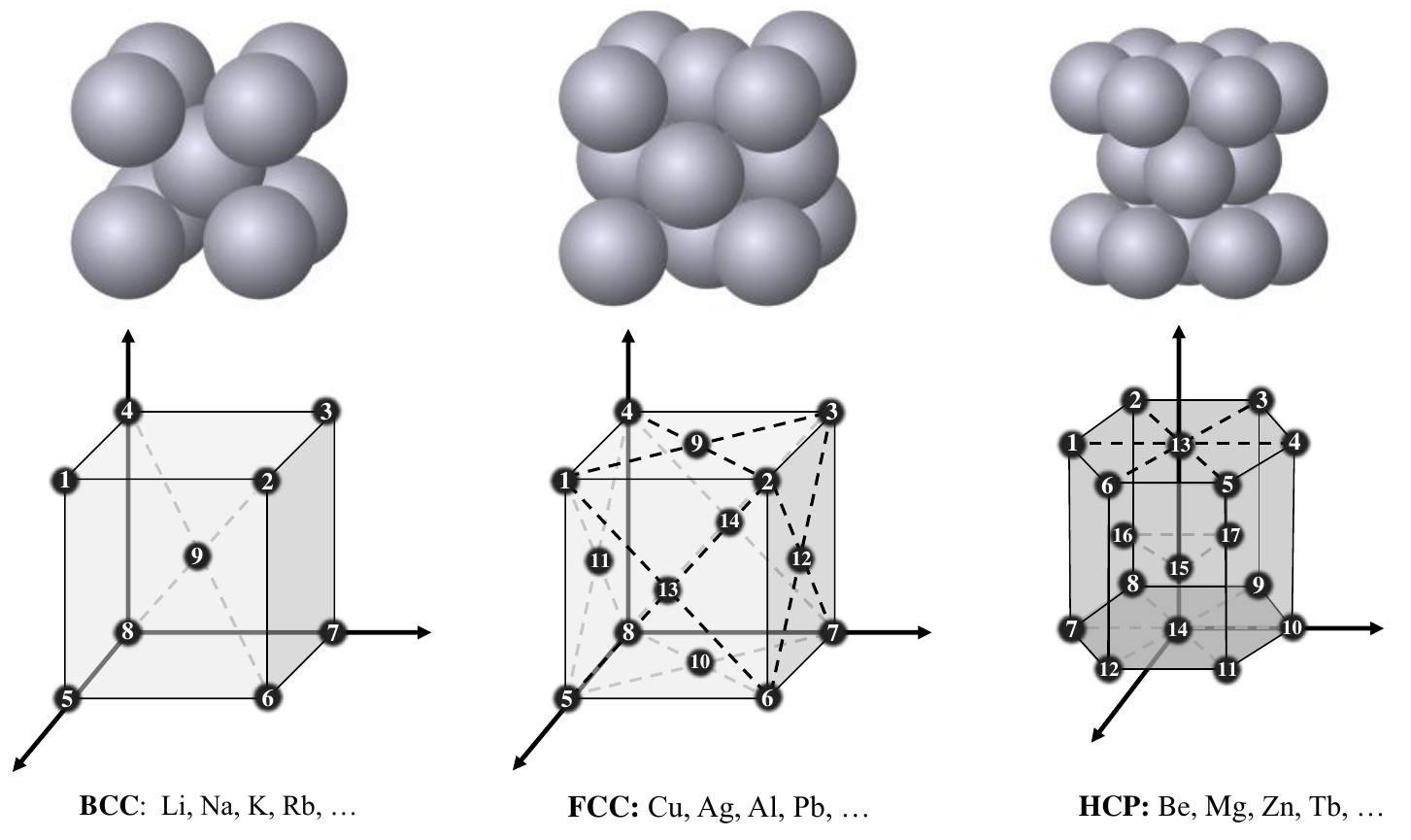}
  \caption{The lattice structures for the metal crystals.}\label{fig:RVE_Structure}
\end{figure} 

In recent years, the random batch method (RBM) proposed by Jin et al.\thinspace\cite{jin2020random,jin2021Convergence,jin2021Coulomb} aims to accelerate the calculation of the potential and force terms in MD simulation. 
Initially applied to long-range interactions (e.g., electrostatic interactions), RBM has been extended to short-range interactions \cite{liang2021RBL,gao2024RBMD,liang2024NVE} for the Lennard-Jones (LJ) potential.
Given the widespread use of the EAM potential in metallic materials,  extending RBM to EAM is of significant interest. 
However, unlike the LJ pairwise potential, the EAM potential includes an embedding term, which introduces additional complexity due to its non-pairwise nature. 
This additional truncation complicates the implementation of RBM and makes the estimation of numerical errors more challenging. 
These challenges motivate the proposal of the RBL-EAM method in the next section.

%
\begin{figure}[htbp]
  \centering
  \subfloat[Cu(FCC)]{\includegraphics[width = 0.65\textwidth]{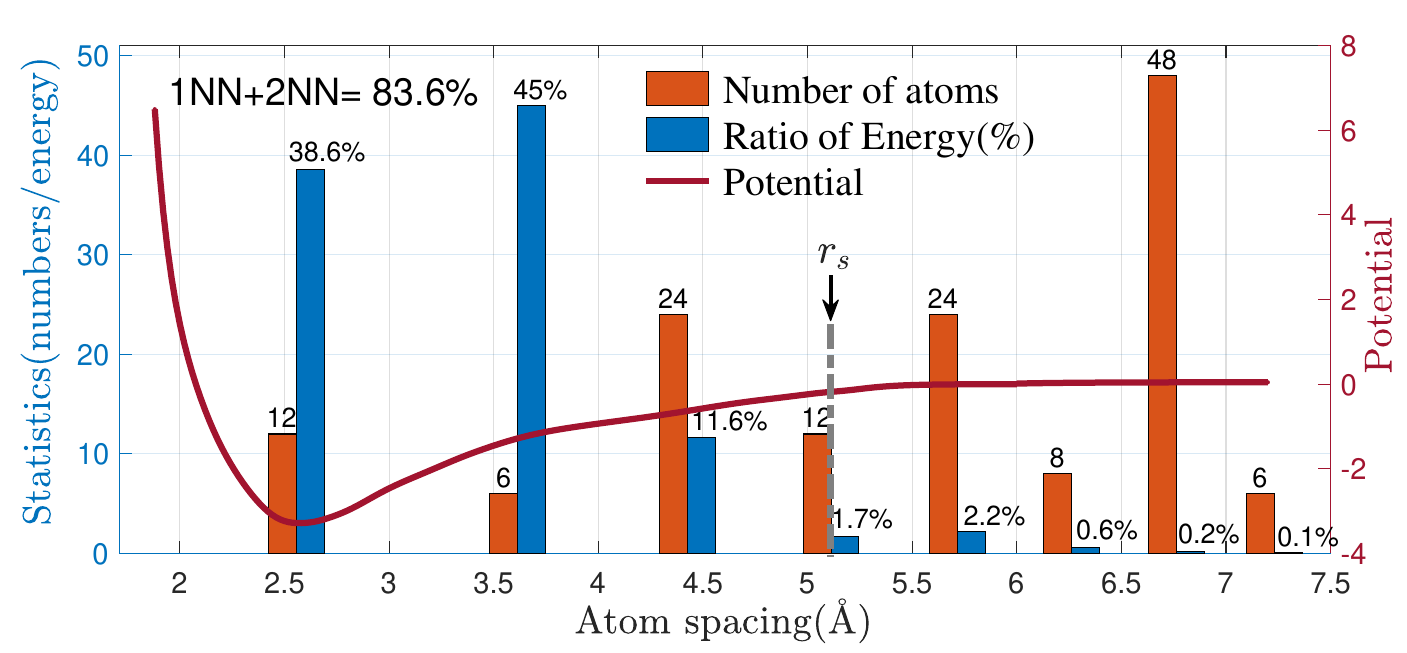}}\\
  \subfloat[Fe(BCC)]{\includegraphics[width = 0.65\textwidth]{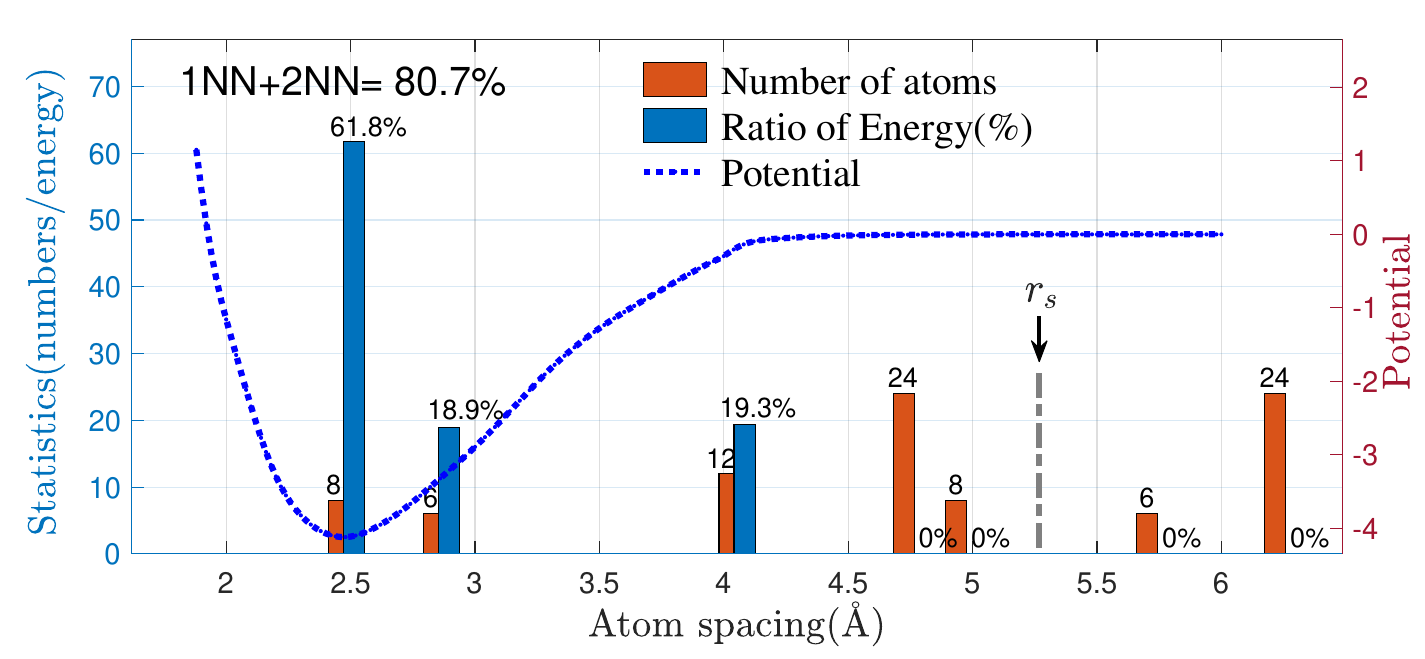}}\\
  \subfloat[Mg(HCP)]{\includegraphics[width = 0.65\textwidth]{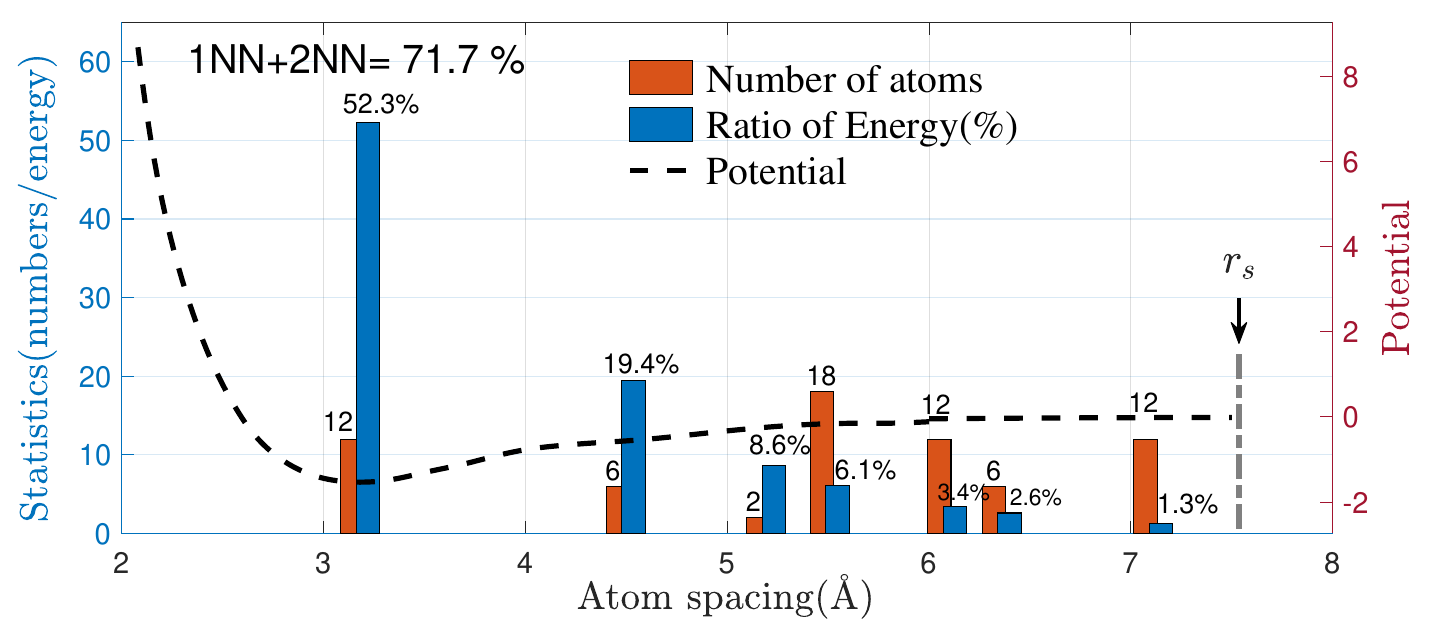}}
  \caption{EAM potential, the number of atoms, and the contributions to the total potential energy for each nearest neighbor. (a) Metal Cu (FCC), (b) Metal $\alpha$-Fe (BCC), and (c) Metal Mg (HCP).}\label{fig:neighbor_FCC-BCC-HCP}
\end{figure}

\section{The random batch list method for EAM potential}\label{sec:3}
Inspired by \cite{liang2021RBL}, we extend the random batch method to the metallic systems with EAM potential, resulting the RBL-EAM method. 
This newly presented method consist of two main components: the calculation of host electron density $\bar{\rho}_i$ and the computation of force terms.
%

\subsection{The random batch list method for host electron density}\label{sub:electron}
Following the idea of RBL method \cite{liang2021RBL}, we introduce a core-shell structured neighbor list for each atom, which is derived from the stenciled cell list and implements on the top of LAMMPS. 
As shown in \autoref{fig:rbl_box}, a small spherical core with radius $|\bm{r}_{ij}|\leq r_c$ and a shell with $r_c< |\bm{r}_{ij}|\leq r_s+\Delta r$ are constructed around each atom. Here $\Delta r$ is a “skin” distance that allows the neighbor list to be reused for multiple steps.  
Atoms within the core radius are treated using direct summation, while atoms in the shell are handled using a random batch approach. This strategy reduces the number of interacting pairs while maintaining computational accuracy.
In the neighbor list of atom $i$, atoms are classified into two types: 
(1) I-type atoms within the small core defined as $\mathfrak{R}_i^{c} = \bigl\{j\big| |\bm{r}_{ij}|\leq r_c \bigr\}$;
and (2) II-type atoms outside the core but within the shell, defined as $\mathfrak{R}_i^{s} = \bigl\{j\big| r_c<|\bm{r}_{ij}|\leq r_s +\Delta r \bigr\}$. 
For further details on the core-shell structured neighbor list, readers are referred to \cite{liang2021RBL}. 
\begin{figure}[htbp]
  \centering
  \subfloat[2-dimensional case]{\includegraphics[width = 0.42\textwidth]{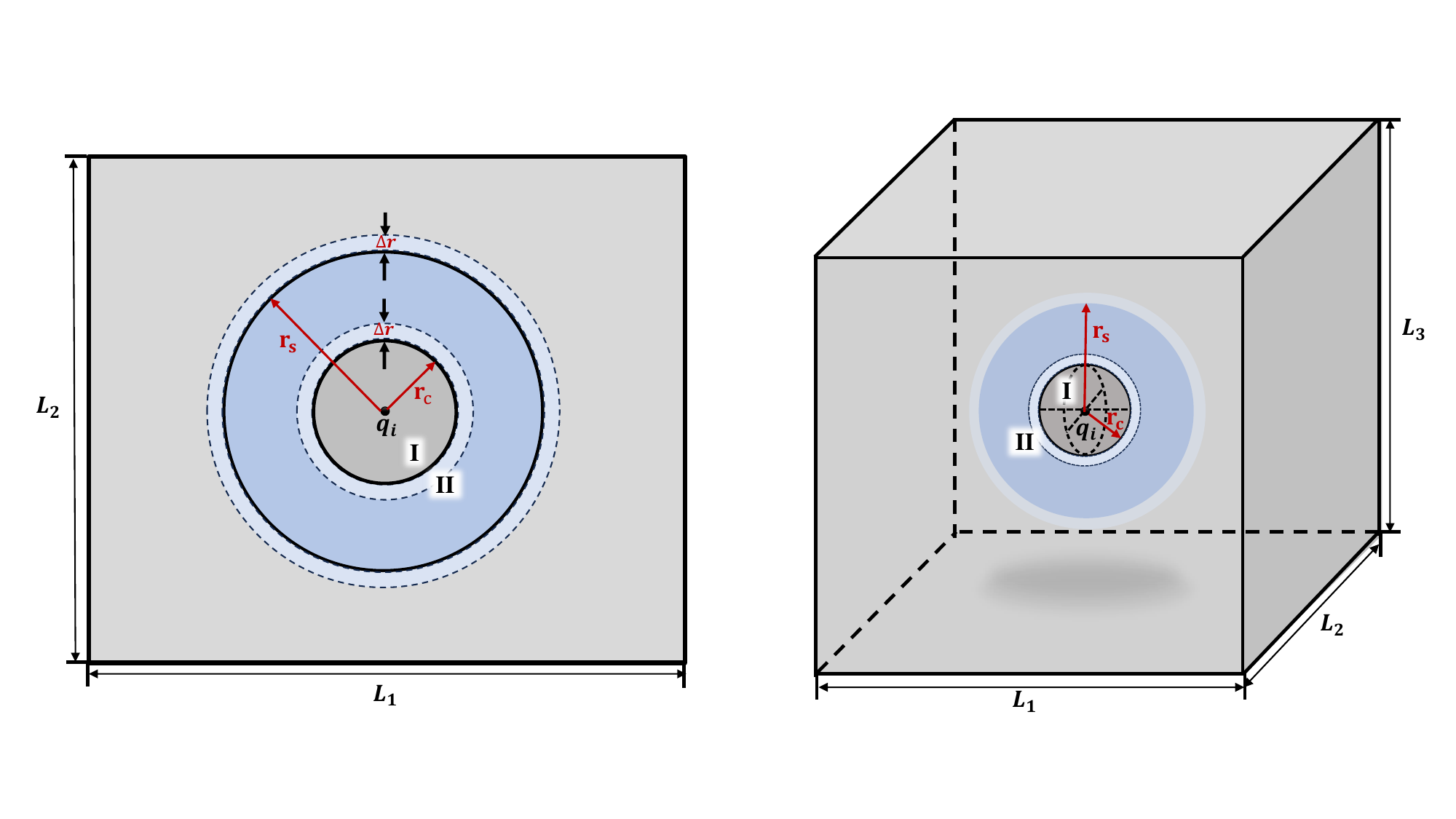}}
  \subfloat[3-dimensional case]{\includegraphics[width = 0.37\textwidth]{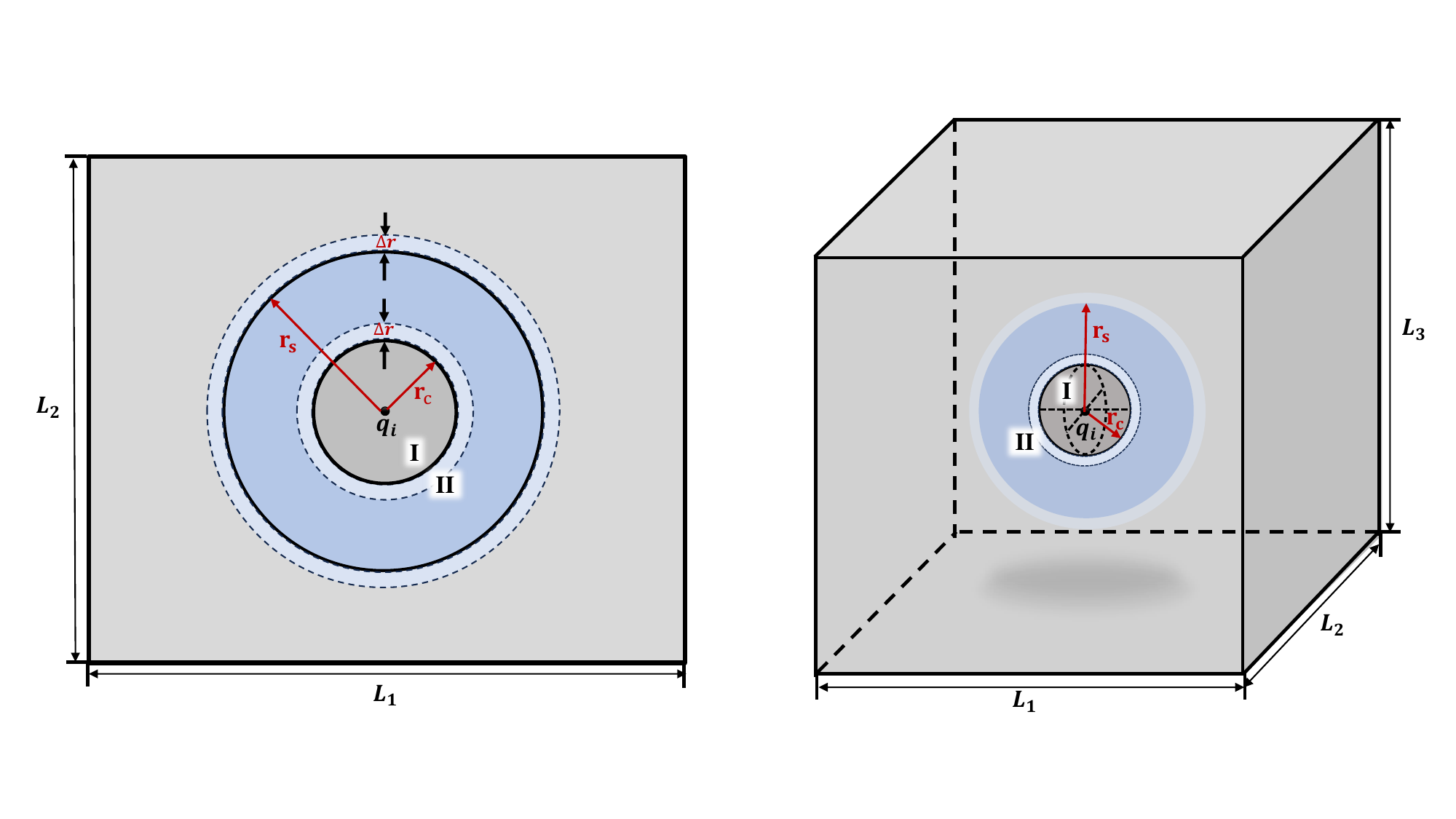}}
  \caption{The core-shell structured neighbor list for the $i$-th atom. (a) 2-dimensional case and (b) 3-dimensional case. Here $r_c$ is the cutoff radius of core region, $r_s$ is the cutoff radius for the shell region, and $\Delta r$ is a “skin” distance that set according to the format of neighbor list in LAMMPS.}\label{fig:rbl_box}
\end{figure}

We then develop a random method to efficiently calculate the host electron density $\tilde{\bar{\rho}}_i:=  \bar{\rho}_i^c + \tilde{\bar{\rho}}_i^s$, which is expressed as the summation of two parts from I-type atoms and II-type atoms. The contribution to the summation of atomic electron density $\rho(|\bm{r}_{ij}|)$ from I-type atoms is calculated directly as
\begin{align} \label{eq:rho_c}
  \bar{\rho}_i^c = \sum_{\substack{j(j\neq i)\\  |\bm{r}_{ij}| \leq r_c}} \rho(|\bm{r}_{ij}|).
\end{align}
For the II-type atoms, we employ a random batch approach. 
Given a batch size $p$, we randomly select $p$ atoms from all II-type atoms. Let $N_i^s$ denote the number of II-type atoms. 
If $N_i^s\leq p$, the host electron density $\tilde{\bar{\rho}}_i^s$ is calculated by considering all II-type atoms, resulting in $\tilde{\bar{\rho}}_i={\bar{\rho}}_i$. 
Otherwise, we select $p$ atoms from II-type atoms $\mathfrak{R}_i^{s}$ to form the batch ${\mathfrak B}_i$, and approximate the host electron density contributed by the II-type atoms as follows: 
\begin{align} \label{eq:rho_s}
  \tilde{\bar{\rho}}_i^s = \frac{N_i^s}{p} \sum_{j \in \mathfrak{B}_i} \rho(|\bm{r}_{ij}|),
\end{align}
where $\rho(|\bm{r}_{ij}|)=0$ with $|\bm{r}_{ij}|>r_s$.
The calculation process for the host electron density $\tilde{\bar{\rho}}_i$ using the RBL-EAM method is outlined in Algorithm \ref{Algorithm_1}. 

\begin{algorithm}[htbp] 
\footnotesize{
  \caption{RBL-EAM: the calculation of electron density $\tilde{\bar{\rho}}_i$.}\label{Algorithm_1}
  \SetAlgoLined
  \begin{algorithmic}[1]
   \STATE Divide the neighboring list of atom $i$ into $\mathfrak{R}_i^{c}$ and $\mathfrak{R}_i^{s}$.
    \STATE  For all {$j \in \mathfrak{R}_i^{c}$}, calculate the electron density $\bar{\rho}_i^c$ using \eqref{eq:rho_c}.
   \STATE  Select $p$ atoms randomly from $\mathfrak{R}_i^{s} $ into a batch $\mathfrak{B}_i$.
     \STATE  For all {$j \in \mathfrak{B}_i$}, compute the electron density $\tilde{\bar{\rho}}_i^s$ according to \eqref{eq:rho_s}.
   \end{algorithmic}
   \KwOut{The host electron density $\tilde{\bar{\rho}}_i =  \bar{\rho}_i^c + \tilde{\bar{\rho}}_i^s$.} 
}
\end{algorithm}

Let $\mathcal{Q}_i = \tilde{\bar{\rho}}_i - \bar{\rho}_i$. 
The following \autoref{theorem-1} demonstrates that the expectation of $\mathcal{Q}_i$ is zero and its variance is bounded, validating the unbiased nature of our approximation for the host electron density.
%
\begin{theorem}\label{theorem-1}
 The expectation of  $\mathcal{Q}_i$ is zero and its variance is bounded.
\end{theorem}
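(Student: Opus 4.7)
The plan is to reduce the statement to a question about the random batch estimator for the shell contribution only, since the core contribution $\bar{\rho}_i^c$ is computed by direct summation and is therefore deterministic. Writing $\bar{\rho}_i = \bar{\rho}_i^c + \bar{\rho}_i^s$ with $\bar{\rho}_i^s = \sum_{j \in \mathfrak{R}_i^s} \rho(|\bm r_{ij}|)$ (using the convention $\rho(|\bm r_{ij}|)=0$ for $|\bm r_{ij}|>r_s$), we have $\mathcal{Q}_i = \tilde{\bar{\rho}}_i^s - \bar{\rho}_i^s$. The case $N_i^s \le p$ is trivial since then $\tilde{\bar{\rho}}_i^s = \bar{\rho}_i^s$ deterministically, so I may restrict to $N_i^s > p$.

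For the expectation, I would use the fact that the batch $\mathfrak{B}_i$ is a uniformly random subset of size $p$ drawn from the $N_i^s$ indices in $\mathfrak{R}_i^s$. For each fixed $j \in \mathfrak{R}_i^s$, the event $\{j \in \mathfrak{B}_i\}$ has probability $p/N_i^s$. Linearity of expectation then gives
\begin{equation*}
\mathbb{E}\!\left[\sum_{j \in \mathfrak{B}_i} \rho(|\bm r_{ij}|)\right]
= \sum_{j \in \mathfrak{R}_i^s} \rho(|\bm r_{ij}|)\, \mathbb{P}(j \in \mathfrak{B}_i)
= \frac{p}{N_i^s}\, \bar{\rho}_i^s,
\end{equation*}
so that $\mathbb{E}[\tilde{\bar{\rho}}_i^s] = \bar{\rho}_i^s$ and hence $\mathbb{E}[\mathcal{Q}_i]=0$.

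For the variance, I would write $\tilde{\bar{\rho}}_i^s = \frac{N_i^s}{p}\sum_{k=1}^p X_k$ with $X_k=\rho(|\bm r_{i,j_k}|)$ for the sampled indices $j_k$. Using the standard covariance identity for simple random sampling without replacement,
\begin{equation*}
\var\!\left(\sum_{k=1}^p X_k\right) = p\,\frac{N_i^s - p}{N_i^s - 1}\, s_i^2,
\quad s_i^2 = \frac{1}{N_i^s}\sum_{j\in\mathfrak{R}_i^s}\!\left(\rho(|\bm r_{ij}|) - \tfrac{\bar{\rho}_i^s}{N_i^s}\right)^{\!2},
\end{equation*}
this yields
\begin{equation*}
\var(\mathcal{Q}_i) = \var(\tilde{\bar{\rho}}_i^s) = \frac{N_i^s(N_i^s-p)}{p(N_i^s-1)}\, s_i^2.
\end{equation*}

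The bounded-variance step is then a matter of bounding $s_i^2$. Since every $j \in \mathfrak{R}_i^s$ satisfies $|\bm r_{ij}| > r_c$, the monotone decay of $\rho$ in \eqref{exp-rho} gives $0 \le \rho(|\bm r_{ij}|) \le \rho(r_c) = \rho_e \exp(-a(r_c-r_e))$, so $s_i^2 \le \rho(r_c)^2$. Combining with the prefactor yields $\var(\mathcal{Q}_i) \le \frac{N_i^s(N_i^s-p)}{p(N_i^s-1)}\rho(r_c)^2 \le \frac{N_i^s}{p}\rho(r_c)^2$, which is finite. The only potentially delicate point is that $N_i^s$ is configuration-dependent; however, under a standard uniform-density assumption for the shell (as used throughout the RBL literature), $N_i^s$ is bounded by a constant proportional to the shell volume, so the variance is uniformly bounded in the system state. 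I expect this variance-bound step to be the main obstacle, since it is the only place where the physical assumptions on $\rho$ and the geometry of the shell come in; the expectation calculation is essentially automatic from the uniform sampling design.
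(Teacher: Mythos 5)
Your proof is correct in substance and follows the same overall strategy as the paper's: reduce $\mathcal{Q}_i$ to the shell estimator, obtain unbiasedness from the uniform inclusion probability $p/N_i^s$, and control the variance through the second-moment structure of sampling without replacement. The variance arguments differ only in packaging: you invoke the exact finite-population variance formula and then bound the population variance $s_i^2$ by $\rho(r_c)^2$ using the monotone decay of $\rho$ in \eqref{exp-rho}; the paper instead expands $\mathbb{E}\bigl[(\tilde{\bar{\rho}}_i^s)^2\bigr]$ by hand, bounds the cross-term coefficient $\tfrac{p-1}{N_i^s-1}$ by $1$ to get $\mathbb{V}[\mathcal{Q}_i]\leq \tfrac{N_i^s-p}{p}(\bar{\rho}_i^s)^2$, and then estimates $\bar{\rho}_i^s$ by a continuum integral over the shell to exhibit the explicit exponential decay in $r_c$. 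Your exact formula is in fact sharper (it vanishes when all shell contributions are equal), and your final bound still decays exponentially in $r_c$ through $\rho(r_c)$, so nothing essential is lost; the paper's integral estimate buys a more quantitative, geometry-aware constant under the homogeneous-density assumption. One algebraic slip worth fixing: since $\tilde{\bar{\rho}}_i^s = \tfrac{N_i^s}{p}\sum_{k}X_k$, applying the prefactor $(N_i^s/p)^2$ to $\var\bigl(\sum_k X_k\bigr)=p\,\tfrac{N_i^s-p}{N_i^s-1}\,s_i^2$ gives $\var(\mathcal{Q}_i)=\tfrac{(N_i^s)^2(N_i^s-p)}{p\,(N_i^s-1)}\,s_i^2$, not $\tfrac{N_i^s(N_i^s-p)}{p\,(N_i^s-1)}\,s_i^2$; the missing factor of $N_i^s$ changes the constant but not the boundedness conclusion.
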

\begin{proof}
We rewrite the host electron density ${\bar{\rho}}_i$ as the summation of contributions from the core and shell regions, i.e., ${\bar{\rho}}_i =  \bar{\rho}_i^c + {\bar{\rho}}_i^s$ with
\begin{align} \label{eq:rho_bar_s}
{\bar{\rho}}_i^s = \sum_{\substack{j(j\neq i)\\   r_c < |\bm{r}_{ij}| \leq r_s}} \rho(|\bm{r}_{ij}|).
\end{align}
To facilitate the analysis, let us introduce an indicative function $H_{ij}$, which is equal to $1$ for $j \in \mathfrak{B}_i$ and $0$ otherwise. 
The expectation of $\tilde{\bar{\rho}}_i^s$ is calculated as 
\begin{equation}\label{eq:Expec-rho}
  \mathbb{E} [\tilde{\bar{\rho}}_i^s] = \frac{N_i^s}{p} \sum_{j\in\mathfrak{R}^s_i} \rho(|\bm{r}_{ij}|) \mathbb{E} [H_{ij}] = \bar{\rho}_i^s,
\end{equation} 
which implies: 
\begin{equation}\label{eq:Expec-Q}
    \mathbb{E}[\mathcal{Q}_i]=\mathbb{E}[\tilde{\bar{\rho}}_i] -{\bar{\rho}}_i =\bar{\rho}_i^c+\mathbb{E} [\tilde{\bar{\rho}}_i^s] -(\bar{\rho}_i^c+\bar{\rho}_i^s)= 0.
\end{equation}
The variance of the quantity $\mathcal{Q}_i$ is given by 
\begin{equation}\label{eq:var_rho}
  \begin{split}
    \mathbb{V}[{\mathcal Q}_i] 
     = \mathbb{V}[\tilde{\bar{\rho}}^s_i - {\bar{\rho}}^s_i]  = \mathbb{E}\bigl[ (\tilde{\bar{\rho}}_i^s)^2\bigr] - ({\bar{\rho}}_i^s)^2. 
  \end{split}
\end{equation}
Using the indicative function $H_{ij}$, we can derive
\begin{align}\label{eq:17}
  \begin{split}
    \mathbb{E}\bigl[ (\tilde{\bar{\rho}}_i^s)^2\bigr] &= \frac{(N_i^s)^2}{p^2} \mathbb{E}\biggl[
     \Bigl( \sum_{r_c < |\bm{r}_{ij}| \leq r_s} \rho(|\bm{r}_{ij}|) H_{ij}\Bigr)^2
    \biggr] \\
    & = \frac{(N_i^s)^2}{p^2} 
    \mathbb{E}\biggl[\sum_{r_c < |\bm{r}_{ij}| \leq r_s} \rho^2(|\bm{r}_{ij}|) H_{ij}
      + \sum_{r_c < |\bm{r}_{ij}| \leq r_s} \sum_{\substack{ k(k\neq j)\\  r_c < |\bm{r}_{ik}| \leq r_s}} \rho(|\bm{r}_{ij}|)\rho(|\bm{r}_{ik}|) H_{ij} H_{ik}\biggr]  \\
    & = \frac{N_i^s}{p} \biggl[\sum_{r_c < |\bm{r}_{ij}| \leq r_s} \rho^2(|\bm{r}_{ij}|) + \frac{p-1}{N_i^s-1} 
    \sum_{r_c < |\bm{r}_{ij}| \leq r_s} \sum_{\substack{ k(k\neq j)\\  r_c < |\bm{r}_{ik}| \leq r_s}}\rho(|\bm{r}_{ij}|)\rho(|\bm{r}_{ik}|) \biggr]\\
    &\leq \frac{N_i^s}{p} \biggl[\sum_{r_c < |\bm{r}_{ij}| \leq r_s} \rho^2(|\bm{r}_{ij}|) +  
    \sum_{r_c < |\bm{r}_{ij}| \leq r_s} \sum_{\substack{ k(k\neq j)\\  r_c < |\bm{r}_{ik}| \leq r_s}}\rho(|\bm{r}_{ij}|)\rho(|\bm{r}_{ik}|) \biggr]=\frac{N_i^s}{p} ({\bar{\rho}}_i^s)^2, 
  \end{split}
  \end{align}
 where the last inequality holds due to $\rho(|\bm{r}_{ij}|)\geq 0$. 
%
Therefore, the variance of  $\mathcal{Q}_i$ is bounded as follows:
\begin{align}\label{eq:val_inequality}
    \mathbb{V}[{\mathcal Q}_i] 
    \leq \frac{N_i^s-p}{p}  ({\bar{\rho}}_i^s)^2. 
\end{align} 
For a homogeneous system with atomic density $\varrho$, the leading term of $\mathbb{V}[{\mathcal Q}_i]$ has the following asymptotic bound:
\begin{equation}
    \begin{split}
    \mathbb{V}[\mathcal{Q}_i] 
    & \leq \frac{N_i^s-p}{p} \left(\sum_{j\in{\mathfrak{R}_i}} \rho(|\bm{r}_{ij}|)\right)^2 \\
    & \lesssim \frac{N_i^s-p}{p} \left(\int_{r_c}^{r_s} 4\pi \varrho r^2\cdot \rho_e\exp(-ar)  \textnormal{d}r\right)^2 \\
    & \leq \frac{N_i^s-p}{p} \left[\frac{4\pi \varrho\rho_e}{a^3}(a^2 r_c^2 + 2a r_c+2)\exp(-ar_c) \right]^2.
    \end{split}
\end{equation}
This shows that the variance of ${\mathcal Q}_i$ decays rapidly as $r_c$ increase.
\end{proof}
\begin{figure}[htbp]
    \centering
    \includegraphics[width = 0.75\textwidth]{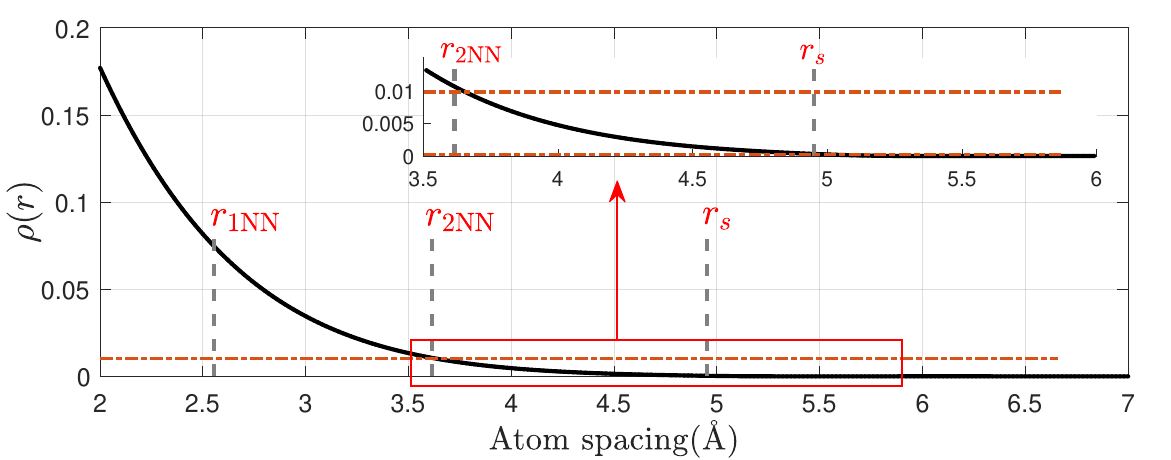}
    \caption{The electron density function $\rho(r)$ for metal Cu.}
    \label{fig:Rho_r}
\end{figure}
\begin{remark}\label{remark-1}
As suggested by \cite{banerjea1988origins}, the electron density function $\rho(r)$ decays exponentially. In \autoref{fig:Rho_r}, we plot the curve of electron density function $\rho(r)$ for metal Cu, which clearly shows that $\int_{r_c}^{r_s}\rho(r)\textnormal{d} r$ is bounded. 
Moreover, if the core radius $r_c$ is chosen such that all $1$-st and $2$-nd nearest neighbor atomic distance are included (i.e., $r_c>r_{\rm{2NN}}$), then $\int_{r_c}^{r_s}\rho(r)\textnormal{d} r$ is close to zero. In this case, the variance of ${\mathcal Q}_i$ is also close to zero. 
\end{remark}

\subsection{The random batch list method for the force terms}
\label{forcefullsystem}
Next, we consider the RBL approximation of the interaction force defined in Eq.\eqref{eq:sigma_i}. 
Similar to the host electron density $\tilde{\bar{\rho}}_i$, the approximate interaction force is expressed as:
\begin{equation}\label{eq:total_sigma}
  \tilde{\bm{\sigma}}_i = \tilde{\bm{\sigma}}_i^{e} + {\bm{\sigma}}_i^{pc} + \tilde{\bm{\sigma}}_i^{ps}:=\tilde{\bm{\sigma}}_i^{ec}+\tilde{\bm{\sigma}}_i^{es}+ {\bm{\sigma}}_i^{pc} + \tilde{\bm{\sigma}}_i^{ps},
\end{equation}
where 
\begin{subequations}
  \begin{align}
  \tilde{\bm{\sigma}}^{e}_{i} & 
  =\tilde{\bm{\sigma}}_i^{ec}+\tilde{\bm{\sigma}}_i^{es}, \\
   \tilde{\bm{\sigma}}_i^{ec} & = \sum_{\substack{j(j\neq i)\\|\bm{r}_{ij}| \leq r_c}} \tilde{\bm{\sigma}}_{ij}^{e}= \sum_{\substack{j(j\neq i)\\|\bm{r}_{ij}| \leq r_c}} 
\left(\tilde{\bm{\sigma}}^{e}_{ij,1}+ \tilde{\bm{\sigma}}^{e}_{ij,2}\right),\label{eq:sigma_ec}\\ 
    \tilde{\bm{\sigma}}_i^{es} & = \frac{N_i^s}{p}\sum_{j\in\mathfrak{B}'_i}\tilde{\bm{\sigma}}_{ij}^{e}=  \frac{N_i^s}{p}\left(
     \sum_{j\in\mathfrak{B}'_i} 
\tilde{\bm{\sigma}}^{e}_{ij,1}+ \sum_{ j\in\mathfrak{B}'_i} \tilde{\bm{\sigma}}^{e}_{ij,2}\right)
, \label{eq:sigma_es}\\ 
    \bm{\sigma}_i^{pc} & = \sum_{\substack{j(j\neq i)\\  |\bm{r}_{ij}| \leq r_c}} \bm{\sigma}^{p}_{ij}, \label{eq:sigma_pc}
    \\
    \tilde{\bm{\sigma}}_i^{ps} & = \frac{N_i^s}{p} \sum_{j \in \mathfrak{B}'_i}  \bm{\sigma}^{p}_{ij}. \label{eq:sigma_ps}
  \end{align}
\end{subequations}
Here $\tilde{\bm{\sigma}}^{e}_{ij,1}=-\mathcal{F}'(\tilde{\bar{\rho}}_i) {\rho}'(|\bm{r}_{ij}|)\frac{\partial |\bm{r}_{ij}|}{\partial \bm{q}_i}$ and $\tilde{\bm{\sigma}}^{e}_{ij,2}=- \mathcal{F}'(\tilde{\bar{\rho}}_j) {\rho}'(|\bm{r}_{ji}|)\frac{\partial |\bm{r}_{ji}|}{\partial \bm{q}_i}$. 
In \eqref{eq:total_sigma}, $\tilde{\bm{\sigma}}_i^{ec}$ and ${\bm{\sigma}}_i^{pc}$ are the embedding force and pairwise force in the core region, computed directly. Since the host electron density $\bar{\rho}_i$ and $\bar{\rho}_j$ are approximately by the RBL-EAM method in \autoref{sub:electron}, $\tilde{\bm{\sigma}}^{e}_{ij,k}$ (with $k=1,2$) differ from ${\bm{\sigma}}^{e}_{ij,k}$. 
For a given batch size $p$, we randomly select $p$ atoms from all II-type atoms, independent of the random selection used in the calculation of $\tilde{\bar{\rho}}_i$. 
This yields a new random batch set $\mathfrak{B}_i'$, which is used to construct $\tilde{\bm{\sigma}}_i^{es}$ as defined in \eqref{eq:sigma_es}.
Algorithm \ref{Algorithm_2} outlines the calculation process for the interaction force $\tilde{\bm{\sigma}}_i$ using the newly proposed RBL-EAM method.
After calculating the net force for all atoms, we subtract the average net force $\bm{\sigma}_{ave}$ on each particle to ensure zero net force, 
thereby conserving total momentum.
%

\begin{algorithm}[htbp] 
\footnotesize{
  \caption{RBL-EAM: the calculation of force terms $\tilde{\bm{\sigma}}_i$. }\label{Algorithm_2}
  \SetAlgoLined
  \KwIn{$\tilde{\bar{\rho}}_i$ obtained by Algorithm \ref{Algorithm_1} and neighbor lists $\mathfrak{R}_i^{c}$ and $\mathfrak{R}_i^{s}$.}
  \begin{algorithmic}[1]
   \STATE For all $j \in \mathfrak{R}_i^{c}$, calculate the embedding force $\tilde{\bm{\sigma}}_i^{ec}$  and pairwise force $\bm{\sigma}_i^{pc}$ using \eqref{eq:sigma_ec} and \eqref{eq:sigma_pc}.
   \STATE  Select $p$ atoms randomly from $\mathfrak{R}_i^{s}$ into a batch $\mathfrak{B}'_i$.
   \STATE For all $j \in \mathfrak{B}'_i$, compute the embedding force $\tilde{\bm{\sigma}}_i^{es}$ and pairwise force $\tilde{\bm{\sigma}}_i^{ps}$ using \eqref{eq:sigma_es} and \eqref{eq:sigma_ps}.
   \end{algorithmic}
   \KwOut{The interaction force $\tilde{\bm{\sigma}}_i = \tilde{\bm{\sigma}}_i^{ec} + \tilde{\bm{\sigma}}_i^{es}+ {\bm{\sigma}}_i^{pc} + \tilde{\bm{\sigma}}_i^{ps}$.} 
}
\end{algorithm}

Let $\bm{\mathcal{X}}_i = \tilde{\bm\sigma}_i - \bm{\sigma}_i+\bm{\sigma}_{ave}$, where $\bm{\sigma}_i$ is the interacting force defined in \eqref{eq:sigma_i} with cutoff radius $r_s$. To demonstrate the accuracy of the approximation for interaction force in \eqref{eq:total_sigma}, we needs to prove that expectation of $\bm{\mathcal{X}}_i$ is zero and its variance is bounded. 
Following the idea of RBL \cite{liang2021RBL}, we can estimate the contribution of pairwise force to expectation and variance of $\bm{\mathcal{X}}_i$. 
However, as an approximate host electron density $\tilde{\bar{\rho}}_i$ is used to construct the embedding force $\tilde{\bm{\sigma}}_i^{ec}$ and $\tilde{\bm{\sigma}}_i^{es}$, the calculation of $\mathbb{E}[\tilde{\bm{\sigma}}_i^{ec}+\tilde{\bm{\sigma}}_i^{es}-{\bm{\sigma}}_i^{e}]$ and its variance are more challenging than that for the pairwise force.  
For a linear function ${\cal F}'(\cdot)$, \autoref{theorem-2} proves that the force calculated by the RBL-EAM is unbiased (i.e., the expectation of $\bm{\mathcal{X}}_i$ is zero) and its variance is bounded. 

\begin{theorem}\label{theorem-2}
 The expectation of $\bm{\mathcal{X}}_i$ is zero for linear function ${\cal F}'(\cdot)$ and close to zero otherwise.  Moreover, the variance of $\bm{\mathcal{X}}_i$ is bounded.
\end{theorem}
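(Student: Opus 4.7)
The plan is to split $\bm{\mathcal{X}}_i$ according to the decomposition in \eqref{eq:total_sigma} and treat the pairwise and embedding contributions separately, recycling the \autoref{theorem-1} strategy on each piece and handling the coupling through $\mathcal{F}'$ by conditioning on the density-estimate batches. Specifically, I would write
\[
\bm{\mathcal{X}}_i = \bigl(\tilde{\bm{\sigma}}_i^{ps}-\bm{\sigma}_i^{ps}\bigr) + \bigl(\tilde{\bm{\sigma}}_i^{ec}+\tilde{\bm{\sigma}}_i^{es}-\bm{\sigma}_i^{e}\bigr) + \bm{\sigma}_{ave},
\]
using that $\bm{\sigma}_i^{pc}$ is deterministic. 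For the pairwise shell term, the indicator $H'_{ij}$ for $\{j\in\mathfrak{B}'_i\}$ satisfies $\mathbb{E}[H'_{ij}]=p/N_i^s$, so both $\mathbb{E}[\tilde{\bm{\sigma}}_i^{ps}]=\bm{\sigma}_i^{ps}$ and the corresponding second-moment bound repeat \eqref{eq:Expec-rho}--\eqref{eq:val_inequality} with $\rho(|\bm{r}_{ij}|)$ replaced by the components of $\bm{\sigma}_{ij}^{p}$.

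The substantive part is the embedding contribution, where $\tilde{\bm{\sigma}}_{ij,1}^{e}$ and $\tilde{\bm{\sigma}}_{ij,2}^{e}$ carry the random factors $\mathcal{F}'(\tilde{\bar{\rho}}_i)$ and $\mathcal{F}'(\tilde{\bar{\rho}}_j)$. I would use a two-level decomposition, conditioning first on all density-estimate batches $\{\mathfrak{B}_k\}$ and taking expectation over the independent force-selection batch $\mathfrak{B}'_i$. The inner step mirrors the pairwise analysis and yields $\mathbb{E}[\tilde{\bm{\sigma}}_i^{ec}+\tilde{\bm{\sigma}}_i^{es}\mid\{\mathfrak{B}_k\}]=\sum_{j\in\mathfrak{R}_i}\tilde{\bm{\sigma}}_{ij}^{e}$. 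The outer expectation over $\{\mathfrak{B}_k\}$ is where linearity of $\mathcal{F}'$ enters: by \eqref{eq:Expec-Q} and linearity, $\mathbb{E}[\mathcal{F}'(\tilde{\bar{\rho}}_k)]=\mathcal{F}'(\bar{\rho}_k)$, so $\mathbb{E}[\tilde{\bm{\sigma}}_{ij,1}^{e}]=\bm{\sigma}_{ij,1}^{e}$ and similarly for the second component; summing gives $\mathbb{E}[\tilde{\bm{\sigma}}_i^{e}]=\bm{\sigma}_i^{e}$. Combined with $\sum_k\bm{\sigma}_k=\bm{0}$ (which follows from $\bm{\sigma}_{ij}^{p}=-\bm{\sigma}_{ji}^{p}$ and the balance \eqref{forcebalance}), one obtains $\mathbb{E}[\bm{\sigma}_{ave}]=\bm{0}$ and hence $\mathbb{E}[\bm{\mathcal{X}}_i]=\bm{0}$ in the linear case. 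For a general $\mathcal{F}'$, a Taylor expansion $\mathcal{F}'(\tilde{\bar{\rho}}_k)=\mathcal{F}'(\bar{\rho}_k)+\mathcal{F}''(\bar{\rho}_k)\mathcal{Q}_k+\tfrac{1}{2}\mathcal{F}'''(\eta)\mathcal{Q}_k^{2}$ leaves a residual bias of order $\mathcal{F}''(\bar{\rho}_k)\mathbb{E}[\mathcal{Q}_k]+\mathcal{O}(\mathbb{E}[\mathcal{Q}_k^{2}])$; the first piece vanishes by \eqref{eq:Expec-Q} and the second is controlled by $\mathbb{V}[\mathcal{Q}_k]$, which \autoref{remark-1} shows becomes negligible once $r_c>r_{\mathrm{2NN}}$.

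For the variance I would apply the law of total variance $\mathbb{V}[\bm{\mathcal{X}}_i]=\mathbb{E}\bigl[\mathbb{V}[\bm{\mathcal{X}}_i\mid\{\mathfrak{B}_k\}]\bigr]+\mathbb{V}\bigl[\mathbb{E}[\bm{\mathcal{X}}_i\mid\{\mathfrak{B}_k\}]\bigr]$. The inner term captures fluctuations of $\mathfrak{B}'_i$ and is bounded by applying the calculation of \eqref{eq:17} componentwise to $|\bm{\sigma}_{ij}^{p}|$ and to $|\tilde{\bm{\sigma}}_{ij}^{e}|$, the latter being uniformly controlled because a linear $\mathcal{F}'$ is Lipschitz and $\rho'$ decays exponentially. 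The outer term measures how fluctuations in $\{\tilde{\bar{\rho}}_k\}$ propagate through the embedding force; it is dominated by $\sum_{k}\mathbb{V}[\mathcal{Q}_k]$ via $\mathbb{V}[X+Y]\le 2\mathbb{V}[X]+2\mathbb{V}[Y]$ applied across the finitely many neighbors in $\mathfrak{R}_i^{c}\cup\mathfrak{R}_i^{s}$, and the $\bm{\sigma}_{ave}$ correction contributes only an $\mathcal{O}(1/N)$ extra piece. The main obstacle I anticipate is exactly this coupling through $\mathcal{F}'$: without linearity one cannot swap expectation and $\mathcal{F}'$, so the clean ``condition-then-independent-batch'' argument collapses and one is forced to retain the $\mathbb{V}[\mathcal{Q}_k]$-level bias above; it is precisely linearity that reduces the embedding analysis back to the standard pairwise RBL case.
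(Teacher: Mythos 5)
Your proposal is correct and follows essentially the same route as the paper: the same splitting of $\bm{\mathcal{X}}_i$ into pairwise-shell, embedding, and $\bm{\sigma}_{ave}$ pieces, unbiasedness of the embedding part via the independence of the force batch $\mathfrak{B}'_i$ from the density batches together with linearity of $\mathcal{F}'$ (your conditioning on $\{\mathfrak{B}_k\}$ is just the tower-property form of the paper's identity $\mathbb{E}[H'_{ij}H_{ik}]=\mathbb{E}[H'_{ij}]\mathbb{E}[H_{ik}]$ in \eqref{eq:26}), a Taylor expansion leaving an $\mathcal{O}(\mathbb{V}[\mathcal{Q}_i])$ bias in the nonlinear case, and variance bounds controlled by $\mathbb{V}[\mathcal{Q}_k]$ and the squared shell forces. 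The only nitpick is that the second-moment bound of \eqref{eq:17} uses $\rho\geq 0$ in its last step, so for the sign-changing force components one should instead invoke the standard sampling-without-replacement variance formula (as the paper does by citing the RBL analysis), which yields the same $\frac{N_i^s-p}{p}\sum_j(\bm{\sigma}^{p}_{ij})^2$-type bound.
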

\begin{proof}
We decompose $\bm{\sigma}_i$ similarly to \eqref{eq:total_sigma}, i.e., $\bm{\sigma}_i = {\bm{\sigma}}_i^{e} + {\bm{\sigma}}_i^{pc} + {\bm{\sigma}}_i^{ps}$. According to the definition of $\bm{\sigma}_{ave}$, we have 
\begin{equation}\label{eq:sigma_ave}
  \bm{\sigma}_{ave} = \frac{1}{N} \sum_{i=1}^{N} \bigl(\tilde{\bm{\sigma}}_i^{e} + {\bm{\sigma}}_i^{pc} + \tilde{\bm{\sigma}}_i^{ps}\bigr).
\end{equation} 
The expectation of $\bm{\sigma}_{ave}$ is zero due to the force balance for pairwise force and \eqref{forcebalance}.
Thus, the expectation of $\bm{\mathcal{X}}_i$ is
\begin{equation}
  \begin{split}
    \mathbb{E}[\bm{\mathcal{X}}_i] &= \mathbb{E}[\tilde{\bm\sigma}_i - \bm{\sigma}_i] + \mathbb{E}\bigl[{\bm{\sigma}}_{ave} \bigr]  = \mathbb{E}\bigl[ \tilde{\bm{\sigma}}_i^{e} -{\bm{\sigma}}_i^{e} \bigr]
    + \mathbb{E}\bigl[\tilde{\bm{\sigma}}_i^{ps} - {\bm{\sigma}}_i^{ps} \bigr].
  \end{split}
\end{equation} 
Similar to the analysis of electron density, by introducing the indicative function $H'_{ij}$ for $j\in\mathfrak{B}'_i$, we have 
\begin{align}
  \mathbb{E} [\tilde{\bm{\sigma}}_i^{ps} - {\bm{\sigma}}_i^{ps}] &= 
  \frac{N_i^s}{p} \sum_{\substack{j(j \neq i)\\  r_c < |\bm{r}_{ij}| \leq r_s}}  \bm{\sigma}^{p}_{ij} \mathbb{E} [H'_{ij}]  
  - \sum_{\substack{j(j \neq i)\\  r_c < |\bm{r}_{ij}| \leq r_s}}  \bm{\sigma}^{p}_{ij} = 0. \label{eq:expect_ps}
\end{align}
If ${\mathcal{F}'(\rho)}$ is a linear function of $\rho$, due to $\mathbb{E} [\tilde{\bar{\rho}}_i]=\bar{\rho}_i$, we can further obtain 
\begin{equation*}
\mathbb{E} \bigl[ {\mathcal{F}'(\tilde{\bar{\rho}}_i)}\bigr]
=  \mathcal{F}'\bigl(\mathbb{E} [\tilde{\bar{\rho}}_i] \bigr)
=  \mathcal{F}'\bigl(\bar{\rho}_i \bigr).     
\end{equation*}
For $|\bm{r}_{ij}|\leq r_c$, the expectation of $\tilde{\bm{\sigma}}^{e}_{ij,1}$ is 
\begin{equation}
\label{eq:expect_ec-1}
\mathbb{E} \bigl[\tilde{\bm{\sigma}}^{e}_{ij,1}\bigr]=-\mathbb{E} \bigl[\mathcal{F}'(\tilde{\bar{\rho}}_i) \bigr]{\rho}'(|\bm{r}_{ij}|)\frac{\partial |\bm{r}_{ij}|}{\partial \bm{q}_i}={\bm{\sigma}}^{e}_{ij,1}. 
\end{equation}
Similarly, we get 
$\mathbb{E} \bigl[\tilde{\bm{\sigma}}^{e}_{ij,2}\bigr] ={\bm{\sigma}}^{e}_{ij,2}$. For $r_c<|\bm{r}_{ij}|\leq r_s$, we have
\begin{equation} 
\label{eq:26}
\begin{aligned}\mathbb{E} \bigl[(\tilde{\bar{\rho}}_i &-{\bar{\rho}}_i) H'_{ij}\bigr]= \frac{N_i^s}{p} \mathbb{E}\biggl[
      \sum_{r_c < |\bm{r}_{ik}| \leq r_s} \rho(|\bm{r}_{ik}|) H'_{ij}H_{ik}
    \biggr]-\frac{p}{N_i^s}{\bar{\rho}}^s_i\\
&=\frac{p}{N_i^s}
      \sum_{r_c < |\bm{r}_{ik}| \leq r_s} \rho(|\bm{r}_{ik}|) -\frac{p}{N_i^s}{\bar{\rho}}^s_i=0,
\end{aligned}
\end{equation}
where $\mathbb{E}\left[ H'_{ij}H_{ik}\right]=\mathbb{E}\left[ H'_{ij}\right]\mathbb{E}\left[ H_{ik}\right]=\frac{p^2}{(N_i^s)^2}$ due to the independence between $H'_{ij}$ and $H_{ik}$. Since $\cal{F}'(\rho)$ is a linear function, we obtain 
\begin{equation}
\label{eq:expect_es-1}
\begin{aligned}\mathbb{E} \bigl[\tilde{\bm{\sigma}}^{e}_{ij,1} H'_{ij}\bigr]&=-\mathbb{E} \bigl[\mathcal{F}'(\tilde{\bar{\rho}}_i) H'_{ij}\bigr]{\rho}'(|\bm{r}_{ij}|)\frac{\partial |\bm{r}_{ij}|}{\partial \bm{q}_i}\\
&=-\mathbb{E} \bigl[\mathcal{F}'(\bar{\rho}_i)H'_{ij} + \mathcal{F}''(\bar{\rho}_i) \cdot (\tilde{\bar{\rho}}_i -{\bar{\rho}}_i)H'_{ij}\bigr]{\rho}'(|\bm{r}_{ij}|)\frac{\partial |\bm{r}_{ij}|}{\partial \bm{q}_i}=\frac{p}{N_i^s}{\bm{\sigma}}^{e}_{ij,1},
\end{aligned}
\end{equation}
and 
\begin{equation}  
\label{eq:expect_es-2}  
\begin{aligned}\mathbb{E} \bigl[\tilde{\bm{\sigma}}^{e}_{ij,2} H'_{ij}\bigr]&=-\mathbb{E} \bigl[\mathcal{F}'(\tilde{\bar{\rho}}_j) H'_{ij}\bigr]{\rho}'(|\bm{r}_{ji}|)\frac{\partial |\bm{r}_{ji}|}{\partial \bm{q}_i}\\
&=-\mathbb{E} \bigl[\mathcal{F}'(\bar{\rho}_j)H'_{ij} + \mathcal{F}''(\bar{\rho}_j) \cdot (\tilde{\bar{\rho}}_j -{\bar{\rho}}_j)H'_{ij}\bigr]{\rho}'(|\bm{r}_{ji}|)\frac{\partial |\bm{r}_{ji}|}{\partial \bm{q}_i}=\frac{p}{N_i^s}{\bm{\sigma}}^{e}_{ij,2}.
\end{aligned}
\end{equation}
By combining \eqref{eq:expect_ps}, \eqref{eq:expect_ec-1}, \eqref{eq:expect_es-1} and \eqref{eq:expect_es-2}, we have
\begin{equation}
  \begin{split}
    \mathbb{E}[\bm{\mathcal{X}}_i] &=  \mathbb{E}\bigl[ \tilde{\bm{\sigma}}_i^{e} -{\bm{\sigma}}_i^{e} \bigr]\\&=\sum_{\substack{j(j \neq i)\\ |\bm{r}_{ij}| \leq r_c}} \mathbb{E} \left[ (\tilde{\bm{\sigma}}^{es}_{ij,1}+\tilde{\bm{\sigma}}^{es}_{ij,2})\right]+\frac{N_i^s}p\sum_{\substack{j(j \neq i)\\  r_c < |\bm{r}_{ij}| \leq r_s}} \mathbb{E} \left[ (\tilde{\bm{\sigma}}^{es}_{ij,1}+\tilde{\bm{\sigma}}^{es}_{ij,2}) H'_{ij}\right]-{\bm{\sigma}}_i^{e}  =0,
  \end{split}
\end{equation} 
which completes the proof of $\mathbb{E}[\bm{\mathcal{X}}_i]=0$ for a linear function ${\mathcal{F}'(\rho)}$.

If ${\mathcal{F}'(\rho)}$ is not a linear function, then we can rewrite ${\mathcal{F}'(\tilde{\bar{\rho}}_i)}$ as 
  \begin{equation}\label{taylorexpantionofF}
  {\mathcal{F}'(\tilde{\bar{\rho}}_i)} = \mathcal{F}'(\bar{\rho}_i) + \mathcal{F}''(\bar{\rho}_i) \cdot (\tilde{\bar{\rho}}_i -{\bar{\rho}}_i)+{\cal O}(\tilde{\bar{\rho}}_i -{\bar{\rho}}_i)^2,
  \end{equation}
  which implies that $  \mathbb{E} \bigl[ {\mathcal{F}'(\tilde{\bar{\rho}}_i)}\bigr]
     =  \mathcal{F}'\bigl(\bar{\rho}_i \bigr)+{\cal O}(\mathbb{V}[{\mathcal Q}_i])$. 
     Repeating the proof process for a linear function ${\mathcal{F}'(\tilde{\bar{\rho}}_i)}$, we have $\mathbb{E}[\bm{\mathcal{X}}_i]={\cal O}(\mathbb{V}[{\mathcal Q}_i])$.
     According to Remark~\ref{remark-1}, $\mathbb{V}[{\mathcal Q}_i]$ is close to zero for $r_c>r_{\rm{2NN}}$. Therefore, our approximation for force $\bm{\sigma}_i$ is close to unbiased.

Next, we estimate the variance of $\bm{\mathcal{X}}_i$. At the beginning, we still assume that ${\mathcal{F}'(\tilde{\bar{\rho}}_i)}$ is a linear function. Due to $\mathbb{E}[\bm{\mathcal{X}}_i]=0$, we have
\begin{equation}\label{eq:var_sigma}
  \begin{split}
    \mathbb{V}[\bm{\mathcal{X}}_i] 
    &  
      =  \mathbb{E}\bigl[ \tilde{\bm{\sigma}}^2_i+{\bm{\sigma}}^2_{ave}-2\tilde{\bm{\sigma}}_i{\bm{\sigma}}_{ave}-{\bm{\sigma}}^2_i\bigr]. 
  \end{split}
\end{equation}
Let $\hat{\sigma}_i=\tilde{\bm{\sigma}}_i^{ec} + \tilde{\bm{\sigma}}_i^{es}+ \tilde{\bm{\sigma}}_i^{ps}$. 
According to the force balance condition in \eqref{forcebalance}, $\sigma_{ij}^p=-\sigma_{ji}^p$, and the mutual independence of forces on atoms, we obtain   
\begin{equation}
\label{eq:var_sigma-1}
  \begin{split}
  \mathbb{E}[\tilde{\sigma}_i\sigma_{ave}]&=\frac{1}{N}\mathbb{E}\left[\tilde{\sigma}_i\sum_{k=1}^N(\hat{\sigma}_k+{\sigma}_k^{pc}) \right]=\frac{1}{N}\mathbb{E}\left[\tilde{\sigma}_i\right]\mathbb{E}\left[\sum_{k\neq i}^N(\hat{\sigma}_k+{\sigma}_k^{pc})+ {\sigma}_i^{pc}\right]+\frac{1}{N}\mathbb{E}[\tilde{\sigma}_i\hat{\sigma}_i]\\&=\frac{1}{N}\left\{\mathbb{E}[\hat{\sigma}^2_i]-(\sigma_i-{\sigma}_i^{pc})^2\right\}=\frac{1}{N}\mathbb{V}[\hat{\sigma}_i].
  \end{split}
\end{equation}
Since $ \mathbb{E}[\sigma_{ave}]=0$, the variance of $\sigma_{ave}$ is   
\begin{equation}
\label{eq:var_sigma-1-1}
  \begin{split}
  \mathbb{V}[\sigma_{ave}]=\frac{1}{N^2}\mathbb{E}\left[\sum_{i,j}^N(\hat{\sigma}_i+{\sigma}_i^{pc})(\hat{\sigma}_j+{\sigma}_j^{pc}) \right]=\frac{1}{N^2}\sum_{i}^N(\mathbb{E}[\hat{\sigma}^2_i]-(\sigma_i-{\sigma}_i^{pc})^2)=\frac{1}{N^2}\sum_{i}^N\mathbb{V}[\hat{\sigma}_i].
  \end{split}
\end{equation}
Similarly, the expectation of $\tilde{\sigma}^2_i$ is 
\begin{equation}
\label{eq:var_sigma-1-2}
  \begin{split}
  \mathbb{E}[\tilde{\sigma}^2_i]-{\sigma}^2_i&=\mathbb{E}\left[(\hat{\sigma}_i+{\sigma}_i^{pc})^2 \right]=\mathbb{E}[\hat{\sigma}^2_i]-(\sigma_i-{\sigma}_i^{pc})^2=\mathbb{V}[\hat{\sigma}_i].
  \end{split}
\end{equation}
Combining \eqref{eq:var_sigma-1}, \eqref{eq:var_sigma-1-1}, and \eqref{eq:var_sigma-1-2}, we can obtain
\begin{equation}
\label{eq:var_sigma-1-3}
  \begin{split}
    \mathbb{V}[\bm{\mathcal{X}}_i] 
    &  
      =  \frac{N-2}N\mathbb{V}[\hat{\sigma}_i]+\frac{1}{N^2}\sum_{j=1}^N\mathbb{V}[\hat{\sigma}_j]. 
  \end{split}
\end{equation}

Using the covariance inequality and the Cauchy-Schwarz inequality, the variance of $\hat{\sigma}_i$ is 
\begin{equation}
\label{variance-total-1}
  \begin{split}
  \mathbb{V}[\hat{\sigma}_i]&=\mathbb{V}[\tilde{\sigma}^{ec}_i+\tilde{\sigma}^{es}_i+\tilde{\sigma}^{ps}_i]\leq 3\left(\mathbb{V}[\tilde{\sigma}^{ec}_i]+\mathbb{V}[\tilde{\sigma}^{es}_i]+\mathbb{V}[\tilde{\sigma}^{ps}_i]\right).
  \end{split}
\end{equation}
We now analyze the three variances separately. Following the proof for $ \mathbb{V}[\tilde{\sigma}^{ps}_i]$ in \cite{liang2021RBL}, we have 
\begin{equation}
\label{variance-ps-1}
  \begin{split}
  \mathbb{V}[\tilde{\sigma}^{ps}_i]&\lesssim  \frac{N_i^s-p}{p} \sum_{\substack{j(j\neq i)\\ r_c<|\bm{r}_{ij}| \leq r_s}}
  ({\sigma}^{p}_{ij})^2\leq \frac{(N_i^s-p)N_i^s}{p} \max_{\substack{j(j\neq i)\\ r_c<|\bm{r}_{ij}| \leq r_s}}
  ({\sigma}^{p}_{ij})^2.
  \end{split}
\end{equation}
Due to the form of pairwise function ${\cal V}$, $\mathbb{V}[\tilde{\sigma}^{ps}_i]$ is bounded and decays rapidly as $r_c$ increase.
Similarly,  the variance of $\tilde{\sigma}^{ec}_i$ is 
\begin{equation}\label{variance-ec-1}
  \begin{split}
  \mathbb{V}[\tilde{\sigma}^{ec}_i]&=\mathbb{V}\Bigl[\sum_{\substack{j(j\neq i)\\  |\bm{r}_{ij}| \leq r_c}} 
\left(\tilde{\bm{\sigma}}^{ec}_{ij,1}+ \tilde{\bm{\sigma}}^{ec}_{ij,2}\right)\Bigr]\leq  2 N_i^c\sum_{\substack{j(j\neq i)\\  |\bm{r}_{ij}| \leq r_c}}\left(\mathbb{V} 
\left[\tilde{\bm{\sigma}}^{ec}_{ij,1}\right]+ \mathbb{V}\left[\tilde{\bm{\sigma}}^{ec}_{ij,2}\right]\right)
\\&\leq 2 N_i^c\sum_{\substack{j(j\neq i)\\  |\bm{r}_{ij}| \leq r_c}}\left(\big[\mathcal{F}''(\bar{\rho}_i) {\rho}'(|\bm{r}_{ij}|)\big]^2\mathbb{V} 
\left[{\cal Q}_i\right]+\big[\mathcal{F}''(\bar{\rho}_j) {\rho}'(|\bm{r}_{ji}|)\big]^2\mathbb{V} 
\left[{\cal Q}_j\right]\right)\\&
\lesssim 4 (N_i^c)^2 \max_{\substack{j(j\neq i)\\ |\bm{r}_{ij}| \leq r_c}}
  \mathbb{V} 
\left[{\cal Q}_j\right],
  \end{split}
\end{equation}
which is bounded and decays exponentially with the increasing $r_c$. Here $N_i^c$ is the number of atoms in the core of $i$-th atom.

If the $j$-th atom is in the shell of the $i$-th atom, we have 
\begin{equation}
\begin{aligned}\mathbb{V} \bigl[\tilde{\bm{\sigma}}^{e}_{ij} H'_{ij}\bigr]&=\mathbb{E} \bigl[(\tilde{\bm{\sigma}}^{e}_{ij})^2 H'_{ij}\bigr]
-\mathbb{E} \bigl[\tilde{\bm{\sigma}}^{e}_{ij} H'_{ij}\bigr]^2
=\frac{p}{N_i^s}\mathbb{E} \bigl[(\tilde{\bm{\sigma}}^{e}_{ij})^2 \bigr]
-\frac{p^2}{(N_i^s)^2}\mathbb{E} \bigl[\tilde{\bm{\sigma}}^{e}_{ij} \bigr]^2\\
&=\frac{p}{N_i^s}\mathbb{V}\bigl[\tilde{\bm{\sigma}}^{e}_{ij} \bigr]+\frac{p(N_i^s-p)}{(N_i^s)^2}({\bm{\sigma}}^{e}_{ij})^2.
\end{aligned}
\end{equation}
Similar to \eqref{variance-ec-1}, 
we can prove $\mathbb{V}\bigl[\tilde{\bm{\sigma}}^{e}_{ij} \bigr]\lesssim 2 (\mathbb{V} 
\left[{\cal Q}_i\right]+\mathbb{V} 
\left[{\cal Q}_j\right])$.
Then we get 
\begin{equation}
\label{variance-es-1}
\begin{aligned}\mathbb{V} \bigl[\tilde{\bm{\sigma}}^{es}_{i}\bigr]&=\frac{(N_i^s)^2}{p^2}\mathbb{V}\Bigl[\sum_{\substack{j(j\neq i)\\  r_c<|\bm{r}_{ij}| \leq r_s}} \tilde{\bm{\sigma}}^{e}_{ij}H'_{ij}\Bigr]\leq \frac{(N_i^s)^3}{p^2}\sum_{\substack{j(j\neq i)\\  r_c<|\bm{r}_{ij}| \leq r_s}}\mathbb{V}\Bigl[ \tilde{\bm{\sigma}}^{e}_{ij}H'_{ij}\Bigr]\\
&=\frac{(N_i^s)^2}{p}\sum_{\substack{j(j\neq i)\\  r_c<|\bm{r}_{ij}| \leq r_s}}\mathbb{V}\Bigl[ \tilde{\bm{\sigma}}^{e}_{ij}\Bigr]+\frac{N_i^s(N_i^s-p)}{p}\sum_{\substack{j(j\neq i)\\  r_c<|\bm{r}_{ij}| \leq r_s}} ({\bm{\sigma}}^{e}_{ij})^2\\
&\lesssim   \frac{2(N_i^s)^2}{p}\sum_{\substack{j(j\neq i)\\  r_c<|\bm{r}_{ij}| \leq r_s}}\Bigl(\mathbb{V} 
\left[{\cal Q}_i\right]+\mathbb{V}\left[{\cal Q}_j\right]\Bigr)+\frac{(N_i^s)^2(N_i^s-p)}{p}\max_{\substack{j(j\neq i)\\ r_c<|\bm{r}_{ij}| \leq r_s}}
  (\bm{\sigma}^{e}_{ij})^2\\
&\lesssim \frac{4(N_i^s)^3}{p}\max_{\substack{j(j\neq i)\\ |\bm{r}_{ij}| \leq r_s}}
  \mathbb{V} 
\left[{\cal Q}_j\right] +\frac{(N_i^s)^2(N_i^s-p)}{p}\max_{\substack{j(j\neq i)\\ r_c<|\bm{r}_{ij}| \leq r_s}}
  (\bm{\sigma}^{e}_{ij})^2,
\end{aligned}
\end{equation}
which is also bounded and decays exponentially with the increasing $r_c$.
Finally, substituting (\ref{variance-total-1}-\ref{variance-es-1}) into \eqref{eq:var_sigma-1-3}, we conclude that $\mathbb{V}[\bm{\mathcal{X}}_i]$ is bounded and decays rapidly as $r_c$ increase.

If ${\cal F}'(\cdot)$ is not a linear function, the calculation of $\mathbb{V}[\bm{\mathcal{X}}_i]$ is quite complicated. Let us denote $\bm{\mathcal{X}}_i=\bm{\mathcal{X}}_i^1+\bm{\mathcal{X}}_i^2$, where $\bm{\mathcal{X}}_i^1$ denotes the contribution of $ \mathcal{F}'(\bar{\rho}_i) + \mathcal{F}''(\bar{\rho}_i) \cdot (\tilde{\bar{\rho}}_i -{\bar{\rho}}_i)$ to $\bm{\mathcal{X}}_i$. 
Repeating the proof for a linear linear function ${\cal F}'(\cdot)$, we find that $\mathbb{V}[\bm{\mathcal{X}}_i^1]$  is bounded and decays rapidly with the increase in $r_c$. 
From \eqref{taylorexpantionofF} and the relationship between ${\cal F}$ and $\bm{\sigma}^{e}$, we observe that $\bm{\mathcal{X}}_i^1$ is a random quantity with exponentially decaying variance. 
Therefore, we conclude that $\mathbb{V}[\bm{\mathcal{X}}_i]$ is bounded and decays rapidly with the increasing $r_c$. 
This completes the proof of the theorem.
\end{proof}

\begin{figure}[htbp]
  \centering
  \subfloat[FCC]{\includegraphics[width = 0.3\textwidth]{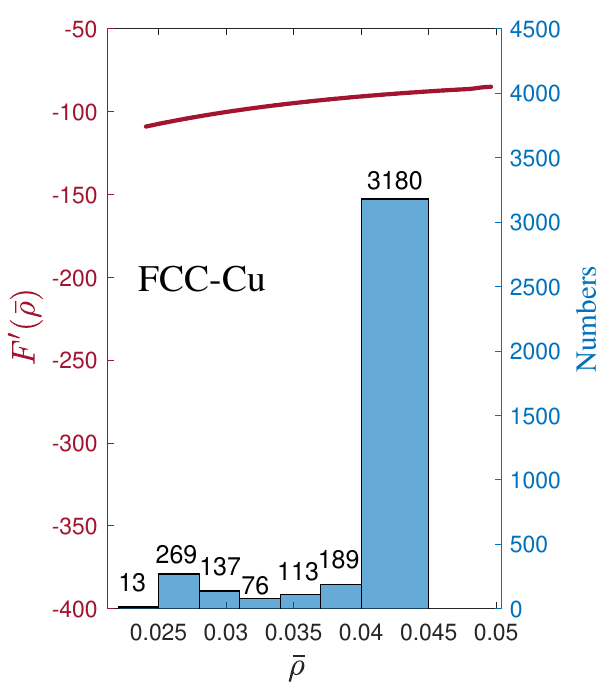}}
  \subfloat[BCC]{\includegraphics[width = 0.3\textwidth]{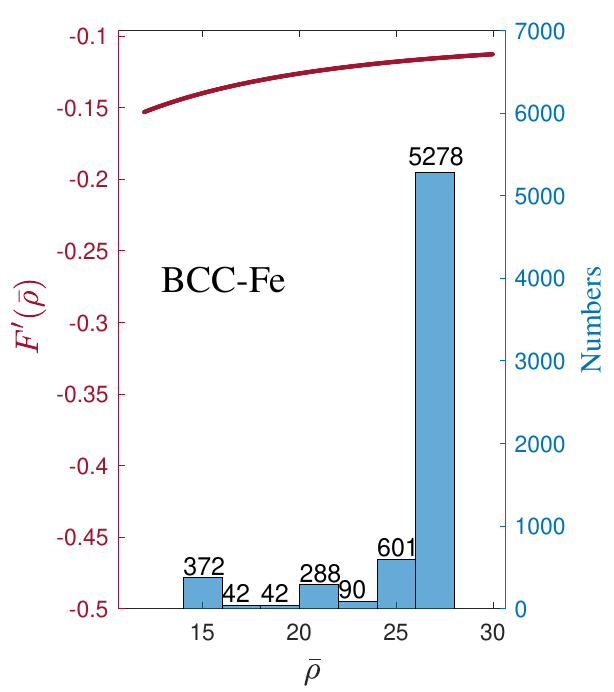}}
  \subfloat[HCP]{\includegraphics[width = 0.3\textwidth]{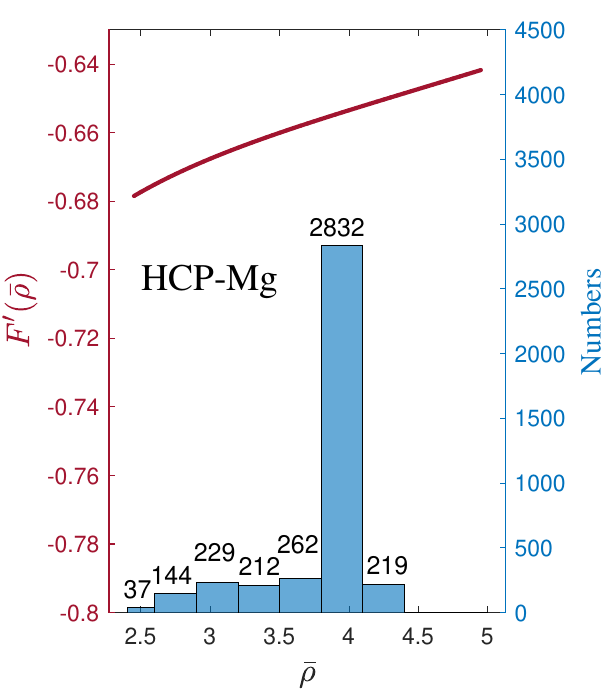}}
  \caption{The solid lines represent the distributions of the function $\mathcal{F}'(\bar{\rho})$, while the histogram depicts the density distribution of all atoms within the given systems.}\label{fig:F_drho_FCC-BCC-HCP}
\end{figure}
\begin{remark}\label{remark-2}
In the EAM potential, the function ${\mathcal{F}'(\rho)}$ is typically not strictly linear for $\rho > 0$. 
However, in most applications, when focusing on the range of actual density values, it can be approximated as nearly linear, as illustrated in \autoref{fig:F_drho_FCC-BCC-HCP}. This observation supports the plausibility of our proposed linear function assumption for ${\mathcal{F}'(\rho)}$.
\end{remark}

\subsection{RBL-EAM method for Newton-Pair system}
In MD simulations, the computational cost of force calculations can be reduced by half when the interaction forces adhere to Newton's third law. Such a system is typically referred as a Newton-pair system. 
For metallic systems described by the EAM potential, as demonstrated by \eqref{eq:sigma_i_components-p} and \eqref{forcebalance}, we observe that these systems exhibit Newton-pair behavior. 
Consequently, the computational cost of force calculations can be further minimized by integrating the Newton-pair property into the RBL-EAM method.
To achieve this, we first follow the rule used in LAMMPS to construct a half neighbor list. As illustrated in \autoref{fig:neigh_list}, let us consider a system with 14 atoms. The full neighbor list for the 6-th atom includes all other atoms in the system except itself. 
Furthermore, the 6-th atom also appear in the neighbor list of other atoms. 
However, the half neighbor list for the 6-th atom includes only atoms with indices $j>6$, but the 6-th atom would not appear in the neighbor lists of atoms with indices $j>6$. This reduces redundancy in force calculations.
\begin{figure}[htbp]
  \centering
  \subfloat[Full neighbor list]{\includegraphics[width = 0.45\textwidth]{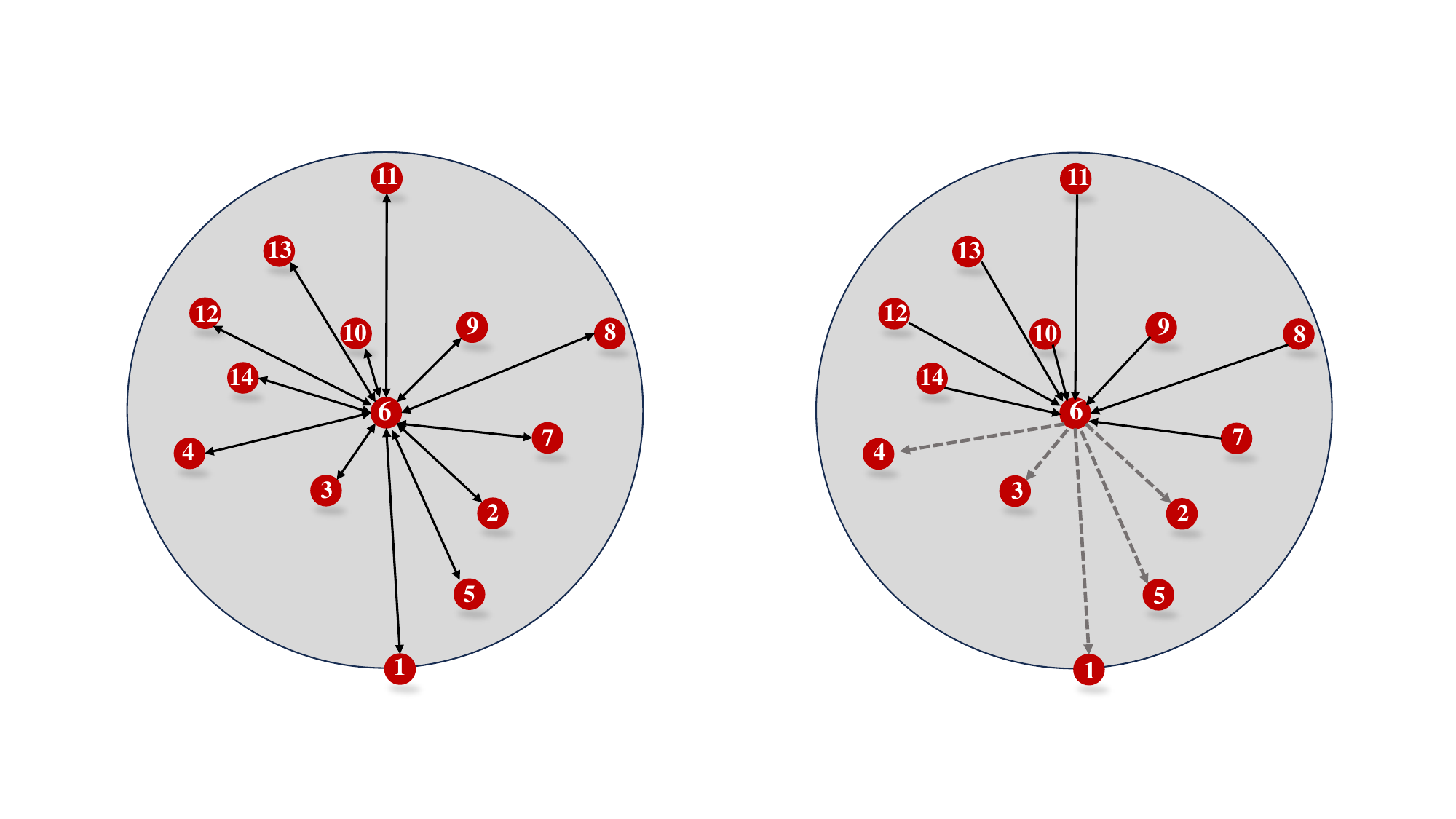}} \quad
  \subfloat[Half neighbor list]{\includegraphics[width = 0.45\textwidth]{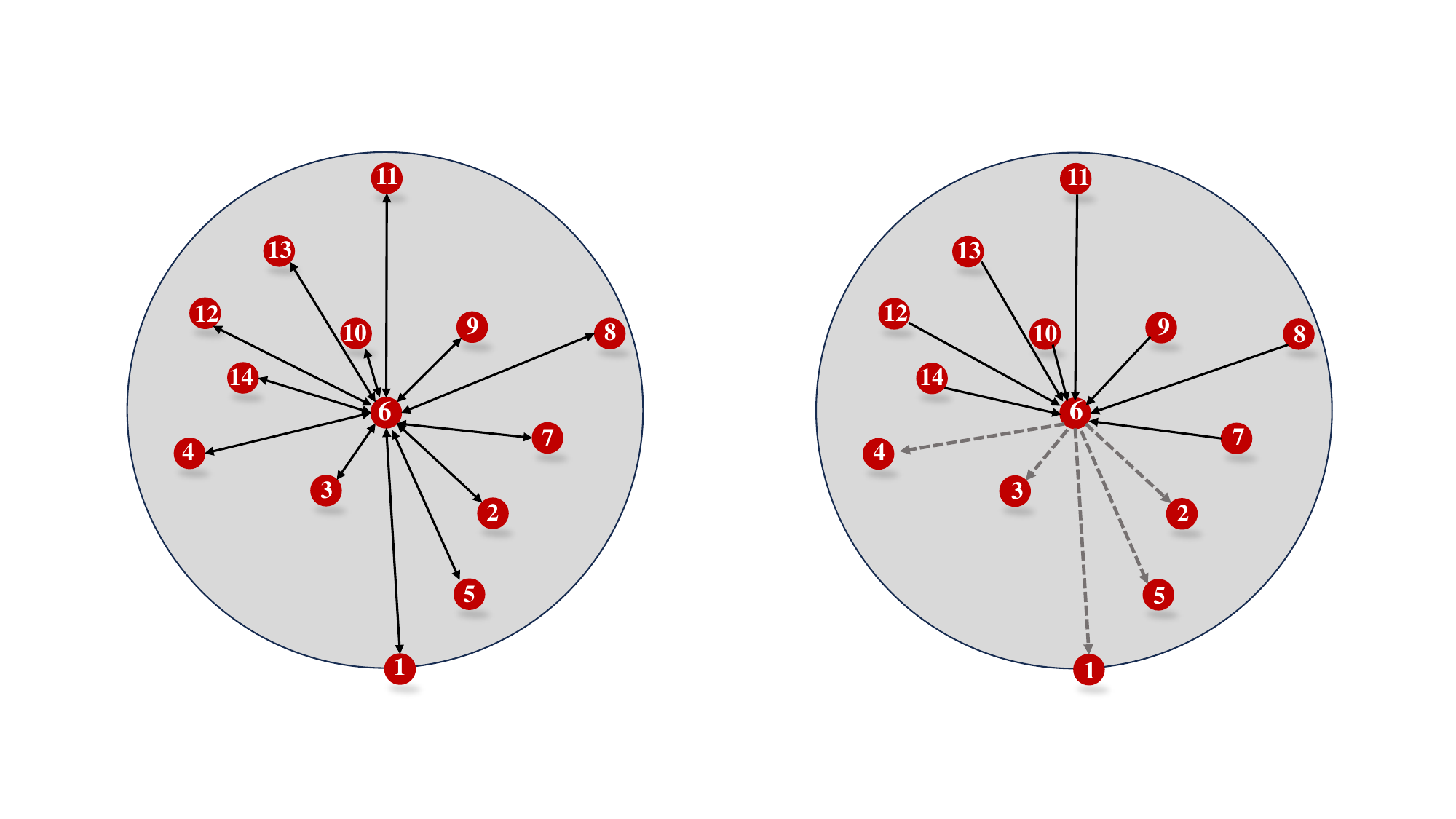}}
  \caption{Sketch of the (a) full and (b) half neighbor list for the 6th atom. Here an arrow pointing to atom $i$ means that atom $j$ belongs to the neighbor list of atom $i$.}\label{fig:neigh_list}
\end{figure}

For a Newton-pair system, we further categorize the atoms in the half neighbor list into two types: (1) I-type atoms within the small core region, defined as $\mathfrak{R}_i^{c} = \bigl\{j\big| |\bm{r}_{ij}|\leq r_c \bigr\}$; and (2) II-type atoms outside of the core but within the shell region,  defined as $\mathfrak{R}_i^{s} = \bigl\{j\big| r_c<|\bm{r}_{ij}|\leq r_s +\Delta r \bigr\}$.
In the case of half neighbor list, if $|\bm{r}_{ij}|\leq r_c$, then exactly one of the following conditions holds:  $j\in \mathfrak{R}_i^{c}$ and $i\in \mathfrak{R}_j^{c}$. 
Similarly, if $ r_c<|\bm{r}_{ij}|\leq r_s+\Delta r$, then exactly one of the following conditions holds:  $j\in \mathfrak{R}_i^{s}$ and $i\in \mathfrak{R}_j^{s}$. 
Using the half neighbor list, the host electron density is calculated as 
\begin{align} \label{eq:rho_c-np}
  \bar{\rho}_i = \sum_{j\in\mathfrak{R}_i^{c}\cup\mathfrak{R}_i^{s}} \rho(|\bm{r}_{ij}|)+\sum_{\substack{j'(j'\neq i)\\ i\in\mathfrak{R}_{j'}^{c}\cup\mathfrak{R}_{j'}^{s}}} \rho(|\bm{r}_{j'i}|).
\end{align}
The second term on the right-hand side of \eqref{eq:rho_c-np} can be computed by copying the results from the $j'$-th atom, thereby reducing the computational cost. Similarly, the interaction force ${\bm{\sigma}}_i$ is obtained as following
 \begin{align}\label{eq:sigma_i-np}
  \begin{split}
    \bm{\sigma}_i  = \sum_{j\in\mathfrak{R}_i^{c}\cup\mathfrak{R}_i^{s}}\left(\bm{\sigma}^{e}_{ij,1}+ \bm{\sigma}^{e}_{ij,2}+\bm{\sigma}^{p}_{ij}\right)-\sum_{\substack{j'(j'\neq i)\\ i\in\mathfrak{R}_{j'}^{c}\cup\mathfrak{R}_{j'}^{s}}}\left(\bm{\sigma}^{e}_{j'i,2}+ \bm{\sigma}^{e}_{j'i,1}+\bm{\sigma}^{p}_{j'i}\right).
  \end{split} 
\end{align}
The second term on the right-hand side of \eqref{eq:sigma_i-np} can be computed by copying the results from the $j'$-th atom and then change the sign.

Next, we apply the RBL-EAM method to the Newton-pair system. For the $i$-th atom, we randomly select $p$ atoms from $\mathfrak{R}_i^s$ to form the batch $\mathfrak{B}_i$. The host electron density is approximated as 
\begin{align} \label{eq:rho_c_n}
  \tilde{\bar{\rho}}_i := {\bar{\rho}}_i^c+\tilde{\bar{\rho}}_i^s=\sum_{j\in\mathfrak{R}_i^{c}} \rho(|\bm{r}_{ij}|)+\sum_{\substack{j'(j'\neq i)\\i\in\mathfrak{R}_{j'}^{c}}} \rho(|\bm{r}_{j'i}|)+\frac{N_i^s}{p} \sum_{j \in \mathfrak{B}_i} \rho(|\bm{r}_{ji}|) +  \sum_{\substack{j'(j'\neq i)\\i \in \mathfrak{B}_{j'}}} \frac{N_{j'}^s}{p}\rho(|\bm{r}_{j'i}|), 
\end{align}
where $N_i^s=|\mathfrak{R}_i^{s}|$ and $N_{j'}^s=|\mathfrak{R}_{j'}^{s}|$.
Let $\mathcal{Q}_i = \tilde{\bar{\rho}}_i - \bar{\rho}_i$, we can further prove in \autoref{theorem-1-n} that this approximation for the host electron density is unbiased, namely, the expectation of $\mathcal{Q}_i$ is zero and its variance is bounded.

%
\begin{theorem}\label{theorem-1-n}
 The expectation of  $\mathcal{Q}_i$ is zero and its variance is bounded.
\end{theorem}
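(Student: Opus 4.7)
The plan is to mirror the strategy of Theorem~\ref{theorem-1}, since the Newton-pair construction of $\tilde{\bar{\rho}}_i$ only redistributes the same shell neighbors of atom $i$ into an ``outgoing'' sum over $j\in\mathfrak{R}_i^s$ and an ``incoming'' sum over $j'$ with $i\in\mathfrak{R}_{j'}^s$. First I would observe that the two core sums in $\tilde{\bar{\rho}}_i$ together reproduce exactly the full-list core contribution to $\bar{\rho}_i$, because the half-list rule guarantees that every pair $\{i,j\}$ with $|\bm{r}_{ij}|\le r_c$ appears in precisely one of $\mathfrak{R}_i^c$ or $\mathfrak{R}_j^c$. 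Consequently all randomness in $\mathcal{Q}_i$ is carried by the two batch terms, and I would split
\begin{equation*}
\mathcal{Q}_i = \mathcal{Q}_i^{\mathrm{out}} + \mathcal{Q}_i^{\mathrm{in}},
\quad \mathcal{Q}_i^{\mathrm{out}} := \frac{N_i^s}{p}\sum_{j\in\mathfrak{B}_i}\rho(|\bm{r}_{ij}|) - \sum_{j\in\mathfrak{R}_i^s}\rho(|\bm{r}_{ij}|),
\end{equation*}
with $\mathcal{Q}_i^{\mathrm{in}}$ defined analogously from the incoming terms.

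For the expectation, I would introduce indicator functions $H_{ij}=\mathbbm{1}\{j\in\mathfrak{B}_i\}$ and $H_{j'i}=\mathbbm{1}\{i\in\mathfrak{B}_{j'}\}$, with $\mathbb{E}[H_{ij}]=p/N_i^s$ and $\mathbb{E}[H_{j'i}]=p/N_{j'}^s$. The outgoing part $\mathcal{Q}_i^{\mathrm{out}}$ is literally the quantity shown to be mean-zero in \eqref{eq:Expec-rho}, and each incoming term is individually mean-zero by exactly the same identity with $N_i^s$ replaced by $N_{j'}^s$. Summing then gives $\mathbb{E}[\mathcal{Q}_i]=0$.

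For the variance, the key observation is that the batches $\mathfrak{B}_i$ and $\{\mathfrak{B}_{j'}\}_{j'\neq i}$ are drawn \emph{independently} for distinct host atoms, so $\mathcal{Q}_i^{\mathrm{out}}$ is independent of $\mathcal{Q}_i^{\mathrm{in}}$, and the indicators $\{H_{j'i}\}_{j'}$ are mutually independent across $j'$ (each $j'$ samples its own batch once). Therefore $\mathbb{V}[\mathcal{Q}_i] = \mathbb{V}[\mathcal{Q}_i^{\mathrm{out}}] + \mathbb{V}[\mathcal{Q}_i^{\mathrm{in}}]$, where $\mathbb{V}[\mathcal{Q}_i^{\mathrm{out}}]$ is already controlled by the bound \eqref{eq:val_inequality}, and $\mathbb{V}[\mathcal{Q}_i^{\mathrm{in}}]$ collapses, by the independence of the $H_{j'i}$, into a simple sum of Bernoulli variances of the form $\frac{N_{j'}^s-p}{p}\rho^2(|\bm{r}_{j'i}|)$. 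The exponential decay of $\rho$ then yields the same asymptotic decay in $r_c$ as the homogeneous estimate at the end of Theorem~\ref{theorem-1}.

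The main obstacle I anticipate is being careful about the independence structure: within a single batch $\mathfrak{B}_{j'}$ the indicators of distinct target atoms are strongly correlated (hypergeometric sampling without replacement), but for a fixed target atom $i$ only one such indicator per source $j'$ ever enters $\mathcal{Q}_i^{\mathrm{in}}$, and those across different $j'$ are independent by construction. Once this separation is spelled out and the cross-terms between outgoing and incoming contributions are shown to vanish, the rest is a routine adaptation of the estimates in Theorem~\ref{theorem-1}, and the bound inherits its exponential decay in $r_c$.
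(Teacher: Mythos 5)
Your proposal is correct and follows essentially the same route as the paper's proof: the same indicator functions $H_{ij}$, $H_{j'i}$ with means $p/N_i^s$, $p/N_{j'}^s$, the same observation that batches of distinct host atoms are drawn independently (which the paper encodes as $\mathbb{E}[H_{ij}H_{j'i}]=\tfrac{p^2}{N_i^sN_{j'}^s}$ and $\mathbb{E}[H_{j''i}H_{j'i}]=\tfrac{p^2}{N_{j''}^sN_{j'}^s}$ rather than as an explicit splitting of $\mathcal{Q}_i$ into independent outgoing and incoming parts), and the same reduction to the estimates of \autoref{theorem-1}. Your bookkeeping is slightly cleaner and your bound on the incoming variance (a sum of Bernoulli variances) is a bit sharper than the paper's bound $\tfrac{\max_j N_j^s}{p}(\bar{\rho}_i^s)^2$, but the mathematical content is the same.
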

\begin{proof}
By introducing the indicative functions $H_{ij}$ and $H_{j'i}$ with $\mathbb{E}[H_{ij}]=p/N_i^s$ and $\mathbb{E}[H_{j'i}]=p/N_{j'}^s$, the expectation of $\tilde{\bar{\rho}}_i^s$ is calculated as 
\begin{equation}\label{eq:Expec-rho-n}
  \mathbb{E} [\tilde{\bar{\rho}}_i^{s}] = \frac{N_i^s}{p} \sum_{j\in\mathfrak{R}^s_i} \rho(|\bm{r}_{ij}|) \mathbb{E} [H_{ij}]
  +  \sum_{\substack{j'(j'\neq i)\\i\in\mathfrak{R}^s_{j'}}} \frac{N_{j'}^s}{p}\rho(|\bm{r}_{j'i}|) \mathbb{E} [H_{j'i}]
  = \bar{\rho}_i^s.
\end{equation}
According to the derivation in \eqref{eq:Expec-Q}, it is clear that $\mathbb{E}[\mathcal{Q}_i] = 0$. Similar to \eqref{eq:17}, we can further calculate the variance of the quantity $\mathcal{Q}_i$ as $\mathbb{V}[{\mathcal Q}_i] = \mathbb{V}[\tilde{\bar{\rho}}^{s}_i - {\bar{\rho}}^s_i]  = \mathbb{E}\bigl[ (\tilde{\bar{\rho}}_i^{s})^2\bigr] - ({\bar{\rho}}_i^s)^2$, where 
\begin{align}
  \begin{split}
    \mathbb{E}\bigl[ (\tilde{\bar{\rho}}_i^{s})^2\bigr] 
    & = \frac{(N_i^s)^2}{p^2} \mathbb{E}\biggl[
     \Bigl( \sum_{j \in \mathfrak{R}^s_i} \rho(|\bm{r}_{ij}|) H_{ij}\Bigr)^2
    \biggr] + \mathbb{E}\biggl[
     \Bigl( \sum_{\substack{j'(j'\neq i)\\i\in\mathfrak{R}^s_{j'}}}\frac{N_{j'}^s}{p} \rho(|\bm{r}_{j'i}|) H_{j'i}\Bigr)^2
    \biggr] \\
    &\quad + 2\frac{N_i^s }{p^2} \mathbb{E}\biggl[
     \Bigl( \sum_{j \in \mathfrak{R}^s_i} \rho(|\bm{r}_{ij}|) H_{ij}\Bigr)  
     \Bigl( \sum_{\substack{j'(j'\neq i)\\i\in\mathfrak{R}^s_{j'}}} N_{j'}^s \rho(|\bm{r}_{j'i}|) H_{j'i}\Bigr)
    \biggr]\\
    &\leq \frac{N_i^s}{p} \Bigl( \sum_{j \in \mathfrak{R}^s_i} \rho(|\bm{r}_{ij}|) \Bigr)^2 + \frac{\max_{j'}N_{j'}^s}{p}\Bigl( \sum_{\substack{j'(j'\neq i)\\i\in\mathfrak{R}^s_{j'}}}\rho(|\bm{r}_{j'i}|) \Bigr)^2
    \\&\quad+2
     \Bigl( \sum_{j \in \mathfrak{R}^s_i} \rho(|\bm{r}_{ij}|) \Bigr)  
     \Bigl( \sum_{\substack{j'(j'\neq i)\\i\in\mathfrak{R}^s_{j'}}} N_{j'}^s \rho(|\bm{r}_{j'i}|) \Bigr)\leq \frac{\max_{j} N_j^s}{p} ({\bar{\rho}}_i^s)^2. 
  \end{split}
  \end{align}
Here we use the fact $\mathbb{E}[H_{ij}H_{j'i}]=\frac{p^2}{N_i^sN_{j'}^s}$ and $\mathbb{E}[H_{j''i}H_{j'i}]=\frac{p^2}{N_{j''}^sN_{j'}^s}$, which hold due to the independence of the random processes. 
Following the proof of \autoref{theorem-1}, we complete the proof of this theorem.
\end{proof}

Similar to the host electron density $\tilde{\bar{\rho}}_i$, the approximate interaction force is given by
\begin{equation}\label{eq:total_sigma_n}
  \tilde{\bm{\sigma}}_i =\tilde{\bm{\sigma}}_i^{ec}+\tilde{\bm{\sigma}}_i^{es}+{\bm{\sigma}}_i^{pc} + \tilde{\bm{\sigma}}_i^{ps},
\end{equation}
where each component is expressed as follows:
\begin{subequations}
  \begin{align}
   \tilde{\bm{\sigma}}_i^{ec} &= 
\sum_{j\in\mathfrak{R}_i^{c}}\left(\tilde{\bm{\sigma}}^{e}_{ij,1}+ \tilde{\bm{\sigma}}^{e}_{ij,2}\right)-\sum_{\substack{j'(j'\neq i)\\ i\in\mathfrak{R}_{j'}^{c}}}\left(\tilde{\bm{\sigma}}^{e}_{j'i,2}+ \tilde{\bm{\sigma}}^{e}_{j'i,1}\right)   ,\label{eq:sigma_ec-n}\\ 
   \tilde{\bm{\sigma}}_i^{es} & 
   =  \frac{N_i^s}{p}\sum_{j\in\mathfrak{B}'_i}\left( \tilde{\bm{\sigma}}^{e}_{ij,1}+  \tilde{\bm{\sigma}}^{e}_{ij,2}\right) 
   - \sum_{\substack{j'(j'\neq i)\\ i\in\mathfrak{B}'_{j'}}} \left(\frac{N_{j'}^s}{p}\left(\tilde{\bm{\sigma}}^{e}_{j'i,1}+ \tilde{\bm{\sigma}}^{e}_{j'i,2}\right)\right), \label{eq:sigma_es-n}\\ 
    \bm{\sigma}_i^{pc} & = \sum_{j\in\mathfrak{R}_i^{c}}{\bm{\sigma}}^{p}_{ij}-\sum_{\substack{j'(j'\neq i)\\ i\in\mathfrak{R}_{j'}^{c}}}{\bm{\sigma}}^{p}_{j'i}, \label{eq:sigma_pc-n}
    \\
    \tilde{\bm{\sigma}}_i^{ps} & = \frac{N_i^s}{p} \sum_{j \in \mathfrak{B}'_i}  \bm{\sigma}^{p}_{ij} - \sum_{\substack{j'(j'\neq i)\\ i\in\mathfrak{B}'_{j'}}} \frac{N_{j'}^s}{p} \bm{\sigma}^{p}_{j'i}. \label{eq:sigma_ps-n}
  \end{align}
\end{subequations}
Here $\tilde{\bm{\sigma}}^{e}_{ij,1}=-\mathcal{F}'(\tilde{\bar{\rho}}_i) {\rho}'(|\bm{r}_{ij}|)\frac{\partial |\bm{r}_{ij}|}{\partial \bm{q}_i}$ and $\tilde{\bm{\sigma}}^{e}_{ij,2}=- \mathcal{F}'(\tilde{\bar{\rho}}_j) {\rho}'(|\bm{r}_{ji}|)\frac{\partial |\bm{r}_{ji}|}{\partial \bm{q}_i}$. 
For a given batch size $p$, we randomly select $p$ atoms from $\mathfrak{R}_i^{s}$, resulting in $\mathfrak{B}_i'$. 
It should be noted that this random processor is independent of the one involved in the calculation of $\tilde{\bar{\rho}}_i$. 
Due to the half-list structure, we observe that the net force $\bm{\sigma}_{ave}$ is always equals to zero.
Let $\bm{\mathcal{X}}_i = \tilde{\bm\sigma}_i - \bm{\sigma}_i$, where $\bm{\sigma}_i$ is the interacting force defined in \eqref{eq:sigma_i} or \eqref{eq:sigma_i-np} with cutoff radius $r_s$. 
We establish the following theorem by following the ideas of  \autoref{theorem-2} and \autoref{theorem-1-n}. 
The proof of this theorem is omitted here for brevity.
\begin{theorem}\label{theorem-2-n}
 The expectation of $\bm{\mathcal{X}}_i$ is zero for linear function ${\cal F}'(\cdot)$ and close to zero otherwise.  Moreover, the variance of $\bm{\mathcal{X}}_i$  is bounded.
\end{theorem}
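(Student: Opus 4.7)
The plan is to mirror the structure of the proof of \autoref{theorem-2}, replacing the full-list decomposition of $\bm{\sigma}_i$ by the half-list decomposition \eqref{eq:sigma_ec-n}--\eqref{eq:sigma_ps-n} and invoking \autoref{theorem-1-n} wherever \autoref{theorem-1} was used. First, I would split $\bm{\mathcal{X}}_i=\bm{\mathcal{X}}_i^{e}+\bm{\mathcal{X}}_i^{p}$ according to embedding and pairwise contributions, and further split each piece into an ``outgoing'' sum over $j\in \mathfrak{R}_i^{c}\cup\mathfrak{R}_i^{s}$ and an ``incoming'' sum over $j'$ such that $i\in \mathfrak{R}_{j'}^{c}\cup\mathfrak{R}_{j'}^{s}$. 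Note that, unlike the full-list case, the half-list structure makes the corrective term $\bm{\sigma}_{ave}$ unnecessary, so the analysis is slightly cleaner on this front.

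For the expectation, I would introduce the two independent families of indicator functions $H'_{ij}$ (for $j\in\mathfrak{B}'_i$) and $H'_{j'i}$ (for $i\in\mathfrak{B}'_{j'}$), with $\mathbb{E}[H'_{ij}]=p/N_i^s$ and $\mathbb{E}[H'_{j'i}]=p/N_{j'}^s$. The pairwise part $\mathbb{E}[\tilde{\bm{\sigma}}_i^{ps}-\bm{\sigma}_i^{ps}]=0$ follows exactly as in \eqref{eq:expect_ps}, applied separately to the outgoing and incoming sums. For the embedding part, when $\mathcal{F}'(\cdot)$ is linear, by \autoref{theorem-1-n} we have $\mathbb{E}[\mathcal{F}'(\tilde{\bar{\rho}}_i)]=\mathcal{F}'(\bar{\rho}_i)$ and, using independence between the random batches constructed for density and for force, the analogs of \eqref{eq:expect_ec-1}--\eqref{eq:expect_es-2} go through verbatim, yielding $\mathbb{E}[\bm{\mathcal{X}}_i]=0$. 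For nonlinear $\mathcal{F}'$, the Taylor expansion \eqref{taylorexpantionofF} gives a residual of order $\mathbb{V}[\mathcal{Q}_i]$, which is close to zero for $r_c>r_{\rm{2NN}}$ by Remark~\ref{remark-1}.

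For the variance, I would assume a linear $\mathcal{F}'$ first and split $\mathbb{V}[\bm{\mathcal{X}}_i]\le 4(\mathbb{V}[\tilde{\bm{\sigma}}_i^{ec}]+\mathbb{V}[\tilde{\bm{\sigma}}_i^{es}]+\mathbb{V}[\bm{\sigma}_i^{pc}]+\mathbb{V}[\tilde{\bm{\sigma}}_i^{ps}])$ via Cauchy--Schwarz. Each term splits again into outgoing and incoming pieces; I would bound each using the Cauchy--Schwarz and covariance estimates from \eqref{variance-ps-1}--\eqref{variance-es-1}, together with $\mathbb{V}[\mathcal{Q}_j]$-bounds from \autoref{theorem-1-n}, and exploit that the outgoing and incoming indicator families are mutually independent to discard covariance terms. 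The new ingredient compared to \autoref{theorem-2} is that each sum over $j'$ also involves the random variables $\tilde{\bar{\rho}}_{j'}$, so factors of the form $\max_{j'} N_{j'}^s$ and $\max_{j'}\mathbb{V}[\mathcal{Q}_{j'}]$ will appear; these are still controlled because $N_{j'}^s$ and the number of $j'$ whose shell contains $i$ are both bounded by a constant times the total neighbor count. Putting these estimates together yields a bound analogous to \eqref{eq:var_sigma-1-3} that decays rapidly with $r_c$.

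The main obstacle will be the bookkeeping associated with the incoming sums: unlike the full-list version, a quantity like $\tilde{\bm{\sigma}}^{e}_{j'i,k}$ depends on $\tilde{\bar{\rho}}_{j'}$ which is itself a random approximation, so the conditional-independence structure between the density batches $\mathfrak{B}_{j'}$ and the force batches $\mathfrak{B}'_{j'}$, as well as between those indexed by different $j'$, needs to be tracked carefully when expanding cross terms. Once one notes that the density and force batches are independent by construction, and that batches at different atoms are independent, all cross terms either vanish or reduce to products of previously estimated quantities, and the nonlinear-$\mathcal{F}'$ case is handled by the same $\bm{\mathcal{X}}_i=\bm{\mathcal{X}}_i^1+\bm{\mathcal{X}}_i^2$ decomposition used at the end of the proof of \autoref{theorem-2}, completing the argument.
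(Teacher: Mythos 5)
Your proposal is correct and follows exactly the route the paper intends: the paper omits the proof of this theorem, stating only that it follows by combining the ideas of \autoref{theorem-2} and \autoref{theorem-1-n}, and your sketch is a faithful elaboration of that argument, correctly handling the half-list outgoing/incoming splitting, the vanishing of $\bm{\sigma}_{ave}$, and the independence between the density batches $\mathfrak{B}_\cdot$ and the force batches $\mathfrak{B}'_\cdot$ that makes the expectation and variance estimates go through.
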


\begin{remark}
In \eqref{eq:sigma_es} and \eqref{eq:sigma_es-n}, a new random batch set $\mathfrak{B}_i'$ is introduced to compute $\tilde{\bm{\sigma}}_i^{es}$, which incurs additional computational cost. 
A natural approach to avoid this cost is to set $\mathfrak{B}_i'=\mathfrak{B}_i$. 
However, as shown in \ref{A-1}, this choice prevents us from rigorously proving that the force calculated by the RBL-EMA method is unbiased, even when $ {\cal F}'(\cdot)$ is a linear function. 
Nonetheless, numerical simulations in \ref{sec:4} indicate that the accuracy of RBL-EMA method remains acceptable even when $\mathfrak{B}_i'=\mathfrak{B}_i$. 
Therefore, to enhance the efficiency of the proposed approach, we will consistently set $\mathfrak{B}_i'=\mathfrak{B}_i$ throughout the remainder of this paper.
\end{remark}

Let us define the Wasserstein-2 distance \cite{santambrogio2015Wasserstein} as:
\begin{equation}
{\mathcal W}_2(\varphi,\phi)
=\left(\inf_{\pi\in\Pi(\varphi,\phi)}\int_{\mathbb{R}^3\times\mathbb{R}^3}\|\bm{x}-\bm{y}\|^2\textnormal{d}\pi(\bm{x},\bm{y})\right)^{1/2},
\end{equation}
where $\Pi(\varphi,\phi)$ denotes the set of all joint distributions with marginal distributions $\varphi$ and $\phi$. 
Assume that $\tau$ represents the time step of the velocity-Verlet algorithm for the underdamped Langevin dynamics. 
The following theorem demonstrates that our method is effective in capturing finite-time dynamics.
\begin{theorem}\label{dynamicproof}
  Let $(\bm{Q}_i, \bm{P}_i)$ be the solution to the following dynamic system:
  \begin{subequations}
  \label{SDE1}
    \begin{align}
      m_i \mathrm{d}{\bm{Q}}_i & = \bm{P}_i \mathrm{d} t, \\
      \mathrm{d} {\bm{P}}_i & = (\bm{\sigma}_i - \gamma \bm{P}_i)\mathrm{d} t+\xi \mathrm{d} W_i, 
      \end{align}
  \end{subequations}
  where $\{W_i\}_{i=1}^N$ are i.i.d. Wiener processes, $\bm{Q}_i$ and $\bm{P}_i$ denote the coordinate and momentum of the $i$-th atom, respectively. The processes generated by the RBL-EAM method satisfy the following stochastic differential equations (SDEs):
  \begin{subequations}\label{SDE2}
    \begin{align}
      m_i \mathrm{d}{\tilde{\bm{Q}}}_i & = \tilde{\bm{P}}_i \mathrm{d} t, \\
      \mathrm{d}{\tilde{\bm{P}}}_i & = (\tilde{\bm{\sigma}}_i - \gamma \tilde{\bm{P}}_i)\mathrm{d} t+\xi \mathrm{d} W_i. 
      \end{align}
  \end{subequations}
  Suppose that both the above equations share the same initial data, and let $\bm{R}$ denote the initial configuration of the system. 
  Assume that the mass of each atom is bounded, the force $\bm{\sigma}_i$ is bounded and Lipschitz continuous, and the variances $\mathcal{Q}_i$ are bounded by a constant $C_1$.
  For any time $t^* > 0$, there exists a constant $C(t^*)$ independent of $N$ such that the Wasserstein distance satisfies
  \begin{equation}
   \sup_{\bm{R}} {\mathcal W}_2(Q(\bm{R},\cdot),\tilde{Q}(\bm{R},\cdot))\leq C(t^*)\sqrt{C_2 \tau +(1+\xi/2 + C_1 )\tau^2},
  \end{equation}
  where $C_2= \|\mathbb{E}[\bm{\mathcal{X}}^2_i]\|_\infty$, $Q(\bm{R},\cdot)$ and $\tilde{Q}(\bm{R},\cdot)$ represent the transition probabilities of the SDEs of the direct truncation method \eqref{SDE1} and the RBL-EAM method \eqref{SDE2}, respectively.
\end{theorem}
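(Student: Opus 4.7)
The plan is to proceed by synchronous coupling: run the two systems \eqref{SDE1} and \eqref{SDE2} with the same initial condition $\bm{R}$ and the same realizations of the driving Wiener processes $\{W_i\}$, while the random batch draws in \eqref{SDE2} constitute an independent source of randomness updated at every discretization time $t_k=k\tau$. Under this admissible coupling, the Wasserstein-2 distance is bounded by the expected squared $L^{2}$ deviation of the coupled trajectories, so it suffices to control the error processes $\Delta\bm{Q}_i:=\bm{Q}_i-\tilde{\bm{Q}}_i$ and $\Delta\bm{P}_i:=\bm{P}_i-\tilde{\bm{P}}_i$ in $L^2$, then take a square root at the end.

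Subtracting \eqref{SDE2} from \eqref{SDE1} cancels the Brownian terms and leaves an ODE-like dynamics for $(\Delta\bm{Q}_i,\Delta\bm{P}_i)$ driven by the force discrepancy $\bm{\sigma}_i(\bm{Q})-\tilde{\bm{\sigma}}_i(\tilde{\bm{Q}})$ and damped by $-\gamma\Delta\bm{P}_i$. I would take as Lyapunov functional $\mathcal{E}(t):=\sum_{i=1}^N\mathbb{E}\bigl[|\Delta\bm{Q}_i|^2+|\Delta\bm{P}_i|^2\bigr]$ and apply It\^o's formula. The key decomposition is
\begin{equation*}
\bm{\sigma}_i(\bm{Q})-\tilde{\bm{\sigma}}_i(\tilde{\bm{Q}})=\bigl[\bm{\sigma}_i(\bm{Q})-\bm{\sigma}_i(\tilde{\bm{Q}})\bigr]-\bm{\mathcal{X}}_i(\tilde{\bm{Q}}),
\end{equation*}
in which the first bracket is controlled by the Lipschitz assumption on $\bm{\sigma}_i$ and therefore contributes an $L\,\mathcal{E}(t)$ term, whereas $\bm{\mathcal{X}}_i$ carries the random-batch fluctuation. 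By \autoref{theorem-2} and \autoref{theorem-2-n}, $\mathbb{E}[\bm{\mathcal{X}}_i]$ vanishes in the linear-$\mathcal{F}'$ case and is $\mathcal{O}(\mathbb{V}[\mathcal{Q}_i])=\mathcal{O}(C_1)$ otherwise via the Taylor expansion \eqref{taylorexpantionofF}, while $\|\mathbb{E}[\bm{\mathcal{X}}_i^2]\|_\infty\le C_2$.

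The pivotal step is the treatment of the cross term $\mathbb{E}[\Delta\bm{P}_i\cdot\bm{\mathcal{X}}_i]$ on each frozen-batch interval $[t_k,t_{k+1})$. Conditioning on the filtration at $t_k$, I would expand $\Delta\bm{P}_i(t)=\Delta\bm{P}_i(t_k)+\mathcal{O}(\tau)+\xi\,(W_i(t)-W_i(t_k))$ and split the cross term into a conditional bias contribution of size $(1+\xi/2+C_1)\tau^2$ (from the Taylor remainder of $\mathcal{F}'$ together with the $\sqrt{\tau}$ Brownian excursion) and a variance contribution of size $C_2\tau$ (from integrating the bounded-second-moment noise over one interval). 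Summing across $\lfloor t^*/\tau\rfloor$ subintervals and inserting into the It\^o identity for $\mathcal{E}$ yields an inequality
\begin{equation*}
\mathcal{E}(t)\le C_L\int_0^t \mathcal{E}(s)\,\mathrm{d}s+\bigl(C_2\tau+(1+\xi/2+C_1)\tau^{2}\bigr)\,t,
\end{equation*}
and Gr\"onwall's lemma then produces $\mathcal{E}(t^*)\le C(t^*)\bigl(C_2\tau+(1+\xi/2+C_1)\tau^{2}\bigr)$ with $C(t^*)=t^*e^{C_Lt^*}$. Taking square roots delivers the claimed Wasserstein-2 bound, and uniformity in $\bm{R}$ is automatic because the coupling is built from the same initial state.

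The main obstacle is twofold: first, ensuring $C(t^*)$ is genuinely independent of $N$, which requires that the per-atom noises $\bm{\mathcal{X}}_i$ aggregate into an $L^2$ error whose normalized size does not grow with the system size -- this hinges on the independence of the batch draws across different atoms and on the boundedness assumptions in the statement; second, handling the nonlinear case of $\mathcal{F}'$, where the $\mathcal{O}(\mathcal{Q}_i^2)$ Taylor remainder in \eqref{taylorexpantionofF} forces one to control fourth-order moments of $\mathcal{Q}_i$ uniformly in the trajectory, which is why the constant $C_1$ explicitly appears in the final bound. The rest of the argument is a fairly standard synchronous-coupling/Gr\"onwall computation familiar from the convergence analyses of random batch methods.
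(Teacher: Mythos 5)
The paper does not actually write out a proof of this theorem: it states only that the argument ``follows a similar approach to that of Theorem II.1 in \cite{liang2021RBL}'' and omits the details, and that cited proof is precisely the synchronous-coupling plus Gr\"onwall argument you outline (same Brownian paths, force discrepancy split into a Lipschitz part and the zero-mean batch fluctuation $\bm{\mathcal{X}}_i$, cross term handled by conditioning on the batch-refresh filtration over each interval of length $\tau$). Your proposal therefore takes essentially the same route as the paper's intended proof, and at the level of detail the paper provides there is nothing to fault in it.
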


The proof of \autoref{dynamicproof} follows a similar approach to that of Theorem II.1 in \cite{liang2021RBL}. Due to space constraints, we omit the details here.
\section{Numerical Examples}\label{sec:4}
In this section,  several benchmark tests, including the calculation of the lattice constant, the radial distribution function, and the elastic constants for metallic system, are implemented to verify the accuracy and efficiency of the newly proposed RBL-EAM method. 
This RBL-EAM method is implemented on top of the LAMMPS framework. All simulations are performed on a Linux system equipped with 32-core Intel(R) Xeon(R) Platinum 8358 CPUs and 32 GB of memory.
For each benchmark test, we consider three metals: Cu, $\alpha$-Fe, and Mg, which correspond to the FCC, BCC, and HCP lattices, respectively. The EAM potentials employed for these metals are “Cu$\_$u6.eam” \cite{Foiles_Cu}, “Fe$\_$mm.eam.fs” \cite{MendelevbccFe}, and “Mg$\_$mm.eam.fs” \cite{WilsonhcpMg}, respectively. 
It is worthy noted that the respective cutoff radii (i.e., $r_s$ in this study) specified in these potential files are $4.95$, $5.3$, and $7.5$\AA.
The time step for all simulation is set as $\Delta t = 0.001$ps.

\subsection{Lattice constant}\label{sec:4.1}
The lattice constant (or lattice parameter) refers to the physical dimensions and angles that define the unit cell geometry in a crystal lattice. This parameter is critical for describing atomic arrangements in crystals and is primarily determined by the balance of interatomic forces. 
As a measurable physical property, the accuracy of the lattice constant serves as an essential benchmark for evaluating how well a potential function models the real atomic interactions. Therefore, we use it to validate the accuracy of the RBL-EAM method.

By treating each atom as a hard sphere with a radius of $r/2$, the equilibrium configuration of a crystal is obtained by minimizing the potential energy:
\begin{equation*}
    r_0 = \arg \min_{r} ~ \bar{\mathcal{U}}(r),
\end{equation*}
where $\bar{\mathcal{U}}(r)$ denotes the average potential energy per atom in the given configuration. 
For FCC and BCC crystals (see the unit cell in \autoref{fig:RVE_Structure}), the lattice constants are given by $a_0:=\sqrt{2} r_0$ and $a_0 := \sqrt{3} r_0/2$, respectively. 
For HCP crystals, the lattice constants are defined as $(a_0,c_0)=(r_0,\sqrt{8/3}r_0)$. 
To determine $r_0$, we compute $\bar{\mathcal{U}}(r_j)$ over a range $r_j\in[\underline{r},\,\overline{r}]$ using the RBL-EAM method. The computation process is summarized in \autoref{Algorithm_3}. 
All simulations are conducted at a temperature of $T=300$K. 
In the RBL-EAM method, we consider batch sizes $p = 0$, $5$, $10$, $15$, $20$, $25$, $30$, and set the core cutoff radius as 
$r_c = 2.8$, $3.1$, $4.0$\AA~for FCC, BCC, and HCP crystals, respectively. The corresponding ratios  $r_c /r_s$ for the three crystals are $0.565$, $0.584$, and $0.533$, respectively. 
\begin{algorithm}[htbp] 
\footnotesize{
  \caption{The computation process of lattice constant. }\label{Algorithm_3}
  \SetAlgoLined
  \SetKwInOut{Initialize}{Initialize}
  \SetKwInOut{Compute}{Compute}
  \KwIn{An EAM potential file for given crystal.}
  \Initialize{Set the core cutoff radius $r_c$ and batch size $p$ for the RBL-EAM method; \\ Set the temperature $T=300$K for all this system; }
  \For{$r_j = \underline{r}\,: \Delta r\,:\overline{r}$}{
       For given $r_j$, build the initial configuration\;
       Relax the system in the NVT ensemble for 10,000 steps to minimize the total potential energy\;
       Calculate the average potential energy $\bar{\mathcal{U}}(r_j)$ for the current configuration.
    }
  \Compute{$r_0 = \arg \min\limits_{r_j} ~ \bar{\mathcal{U}}(r_j) $.}
   \KwOut{The average potential energy $\bar{\mathcal{U}}(r_j)$ and $r_0$.} 
}
\end{algorithm}

The computed average potential energies, $\bar{\mathcal{U}}(r_j)$, for these three crystal structures are presented in \autoref{fig:potential_a0_total}, where they are plotted as functions of $r_j$. 
\begin{figure}[htbp]
  \centering
  \subfloat[Cu-FCC]{\includegraphics[width = 0.85\textwidth]{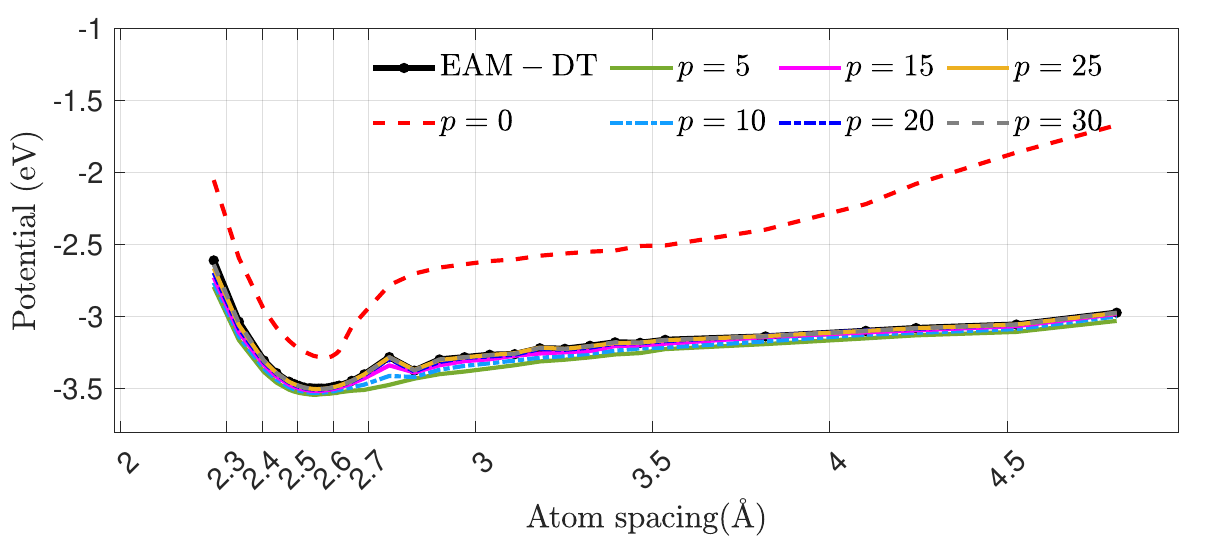}}\\
  \subfloat[$\alpha$-Fe-BCC]{\includegraphics[width = 0.89\textwidth]{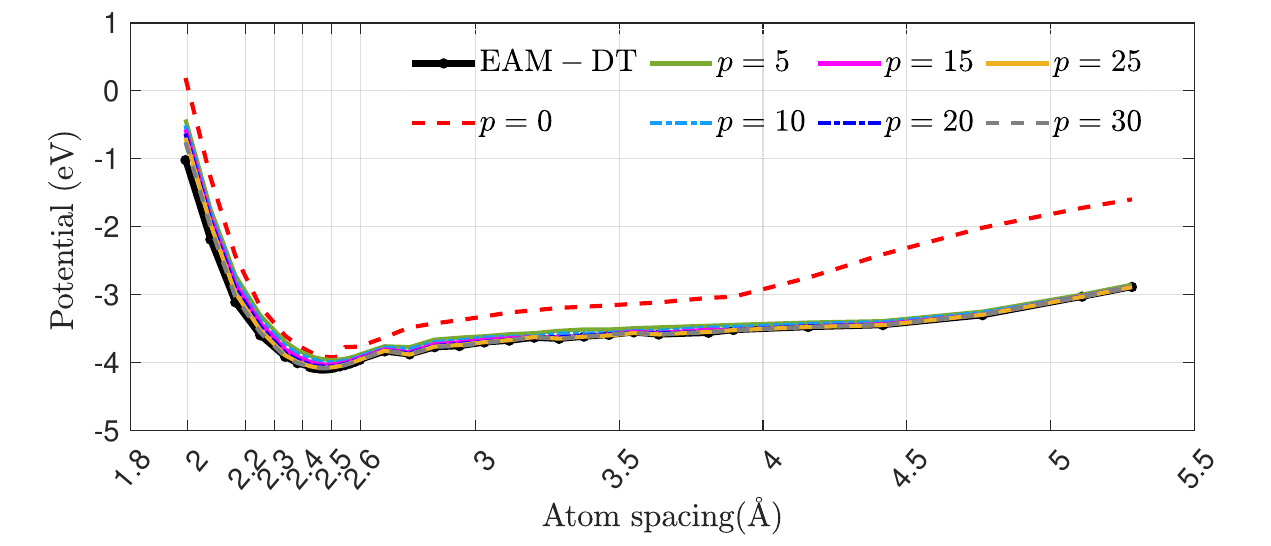}}\\
  \subfloat[Mg-HCP]{\includegraphics[width = 0.85\textwidth]{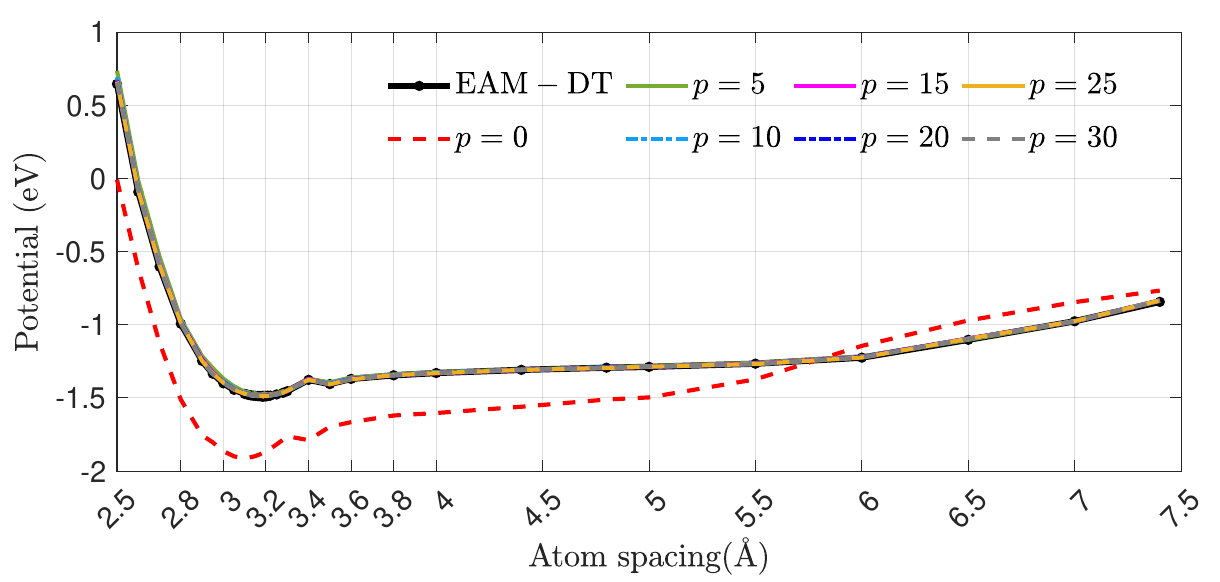}}
  \caption{The curves of the average potential energy $\bar{\mathcal{U}}(r_j)$ for (a) Cu of FCC lattice with $r_c = 2.8$\AA, (b) $\alpha$-Fe of BCC lattice with $r_c = 3.1$\AA, and (c) Mg of HCP lattice with $r_c= 4.0$\AA.  }\label{fig:potential_a0_total}
\end{figure}
The reference solutions are obtained using the EAM potential with direct truncation (EAM-DT). As shown in \autoref{fig:potential_a0_total}, the potential energy curves generated by the RBL-EAM method with $p > 0$ closely match those of the EAM-DT for all three crystals. 
These results indicate that the RBL-EAM method can achieve accurate average potential energies even with a relatively small batch size, such as $p = 5$.
Using the computed average potential energies, we extract the lattice constants for all three crystals, as summarized in \autoref{tab:lattice_a0}. 
Across all batch sizes, the lattice constants obtained with the RBL-EAM method exhibit excellent agreement with the experimental data, with accuracy improving as $p$ increases.

\begin{table}[htbp]
  \centering
  \caption{The lattice constants $a_0$ calculated by the RBL-EAM method with different batch sizes.}\label{tab:lattice_a0}
  \begin{tabular}{ccccccccccc}
  \toprule
    & $p$ & 0 & 5 & 10 & 15 & 20 & 25 & 30 & EAM-DT & Exp.\\
  \midrule   
   Cu& $a_0$  & 3.640  & 3.610   & 3.600  & 3.611  & 3.611   & 3.612  & 3.613  & 3.615     &   \textbf{3.615}      \\ 
    $\alpha$-Fe&$a_0$  & 2.920  & 2.890   & 2.883  & 2.875   & 2.87   & 2.857  &  2.855   & 2.855     &  \textbf{2.850}\\ 
    Mg&$a_0$  & 3.105  & 3.194   & 3.187  & 3.182   & 3.185   & 3.182   &  3.181   &  3.180   &  \textbf{3.184}\\   
  \bottomrule
  \end{tabular}
\end{table}

\subsection{Radial distribution function}\label{sec:4.2}
The radial distribution function (RDF), $g(r)$, is a key statistical measure that characterizes the spatial distribution of atoms within a system. It quantifies how atomic density varies as a function of distance 
${r}$ from a reference atom, providing critical insights into both interatomic potentials and atomic arrangements. 
For example, peaks in the RDF correspond to characteristic atomic distances, revealing underlying crystal lattice structures, 
Then features of the RDF curves, including the height and shape of these peaks, are influenced by interatomic interactions.
Thus, it indicate that the RDF is a valuable tool for assessing the performance of the RBL-EAM method by comparing its RDF curves with reference results.

In MD simulations, the RDF $g(r)$ of a reference atom is calculated as:
\begin{equation}
  g(r) = \frac{1}{\varrho}\frac{\langle n(r)\rangle}{dr},
\end{equation}
where $\langle n(r)\rangle=\frac{n(r)}{4\pi r^2}$ represents the average number of atoms found within the radial interval $[r,r+dr]$.
To evaluate the RDF, the initial MD system configuration is constructed using the experimental lattice constant.
The system is then relaxed and equilibrated in the NVT ensemble for 5,000 steps using the RBL-EAM method. The RDF $g(r)$ is computed based on the relaxed configuration. 
The batch sizes considered for the RBL-EAM method are $p = 0$, $5$, $10$, $20$, and $30$. The core cutoff radius are set as $r_c = 2.8$, $2.7$, and $3.5$\AA~for FCC, BCC, and HCP lattices, respectively. 
The ratios $r_c/r_s$ for these crystals are$ 0.565$, $0.509$, and $0.467$, respectively. 
\begin{figure}[htbp]
  \centering
  \subfloat[Cu-FCC]{\includegraphics[width = 0.8\textwidth]{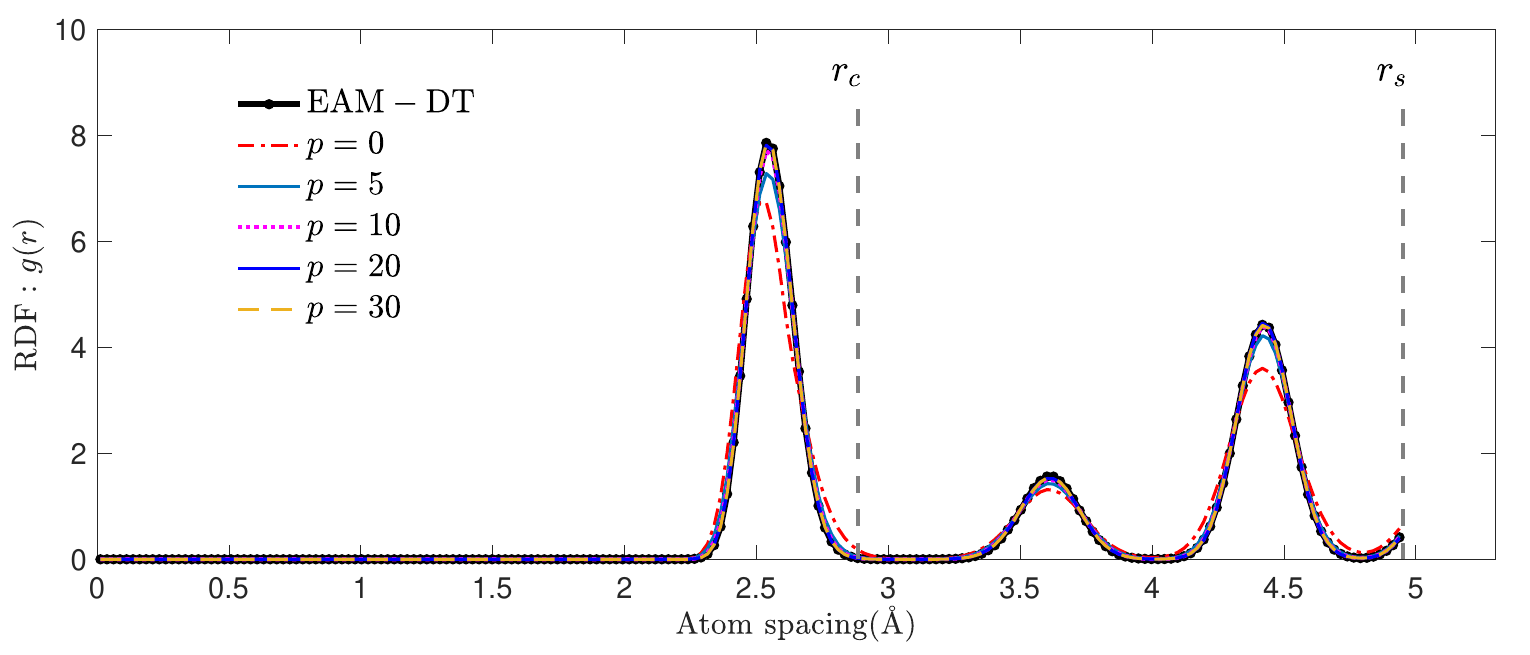}}\\
  \subfloat[$\alpha$-Fe-BCC]{\includegraphics[width = 0.83\textwidth]{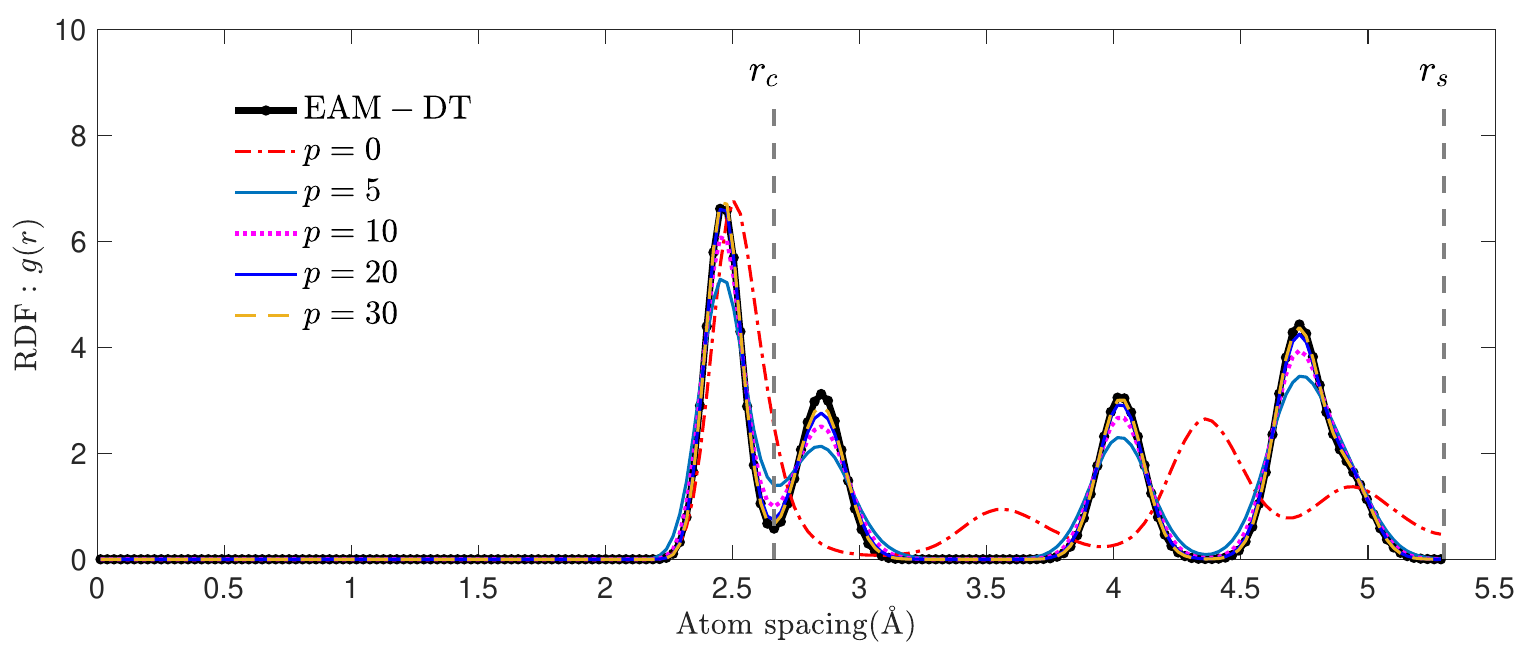}}\\
  \subfloat[Mg-HCP]{\includegraphics[width = 0.8\textwidth]{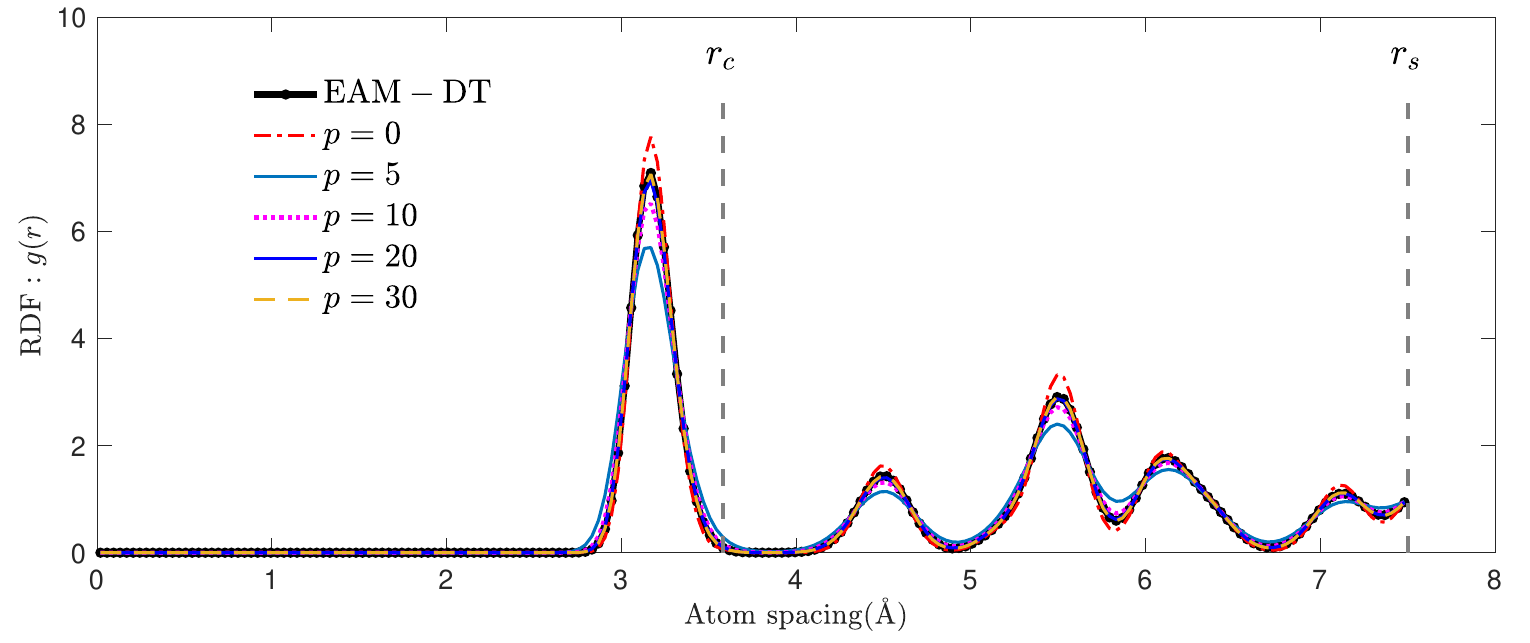}}
  \caption{The RDF curves considering different batch sizes $p = 0, 5, 10, 20, 30$ for (a) Cu of FCC lattice with $r_c = 2.6$\AA; (b) $\alpha$-Fe of BCC lattice with $r_c = 2.7$\AA; and (c) Mg of HCP lattice with $r_c = 3.5$\AA.}\label{fig:RDF-FCC-BCC-HCP}
\end{figure}

The RDF results obtained using the RBL-EAM and EAM-DT methods are presented in \autoref{fig:RDF-FCC-BCC-HCP}.
For the case of $p=0$, noticeable deviations in both peak positions and magnitudes for the RDF curves are observed, particularly for $\alpha$-Fe, when compared to the reference EAM-DT curve.
However, for $p>0$, the peak positions across all lattice types align closely with the reference EAM-DT curve, and as $p$ increases, the peak magnitudes also converge to the EAM-DT values.
Even when the core cutoff radius $r_c$ is near the first nearest-neighboring distance $r_{\rm{1NN}}$, the RBL-EAM method can achieve accurate RDF curves with a small batch size of $p=5$. 
Notably, the first peak of the RDF curve corresponds to the first nearest-neighboring distance $r_{\rm{1NN}}$, providing a reference for selecting the value of $r_c$ (see the gray dashed lines in \autoref{fig:RDF-FCC-BCC-HCP}).
Typically, the atomic spacing at the "valley" of the RDF curve, which marks the transition region between two nearest-neighboring regions, is chosen as the core cutoff value $r_c$.
For the specific $r_c$ and $r_s$ values in \autoref{fig:RDF-FCC-BCC-HCP}, the average accumulated number of neighboring atoms within the core region, $\langle N_i^c \rangle$, and the total accumulated number of neighboring atoms, $\langle N_i^c+N_i^s \rangle$, are listed in \autoref{tab:neighboring_number}. 
These data provide strong evidence for subsequent efficiency analysis, further detailed in \autoref{sec:4.4}.

\begin{table}[htbp]
  \centering
  \caption{The cutoff radius and corresponding average accumulated numbers of neighboring atoms calculated by the RDF function.}\label{tab:neighboring_number}
  \begin{tabular}{c|cc|cc|cc}
  \toprule
   ~ &\multicolumn{2}{c|}{FCC} & \multicolumn{2}{c|}{BCC} & \multicolumn{2}{c}{HCP} \\
  \midrule   
   Radius (\AA) & $r_c = 2.8$ & $r_s = 4.95$ & $r_c = 2.7$ & $r_s = 5.3$ & $r_c = 3.5$ & $r_s = 7.5$ \\ 
     \hline  
   {Numbers of } & $\langle N_i^c \rangle$ & $\langle N_i^c +N_i^s \rangle$& $\langle N_i^c \rangle$ & $\langle N_i^c+N_i^s \rangle$& $\langle N_i^c \rangle$ & $\langle N_i^c+N_i^s \rangle$ \\ 
   neighboring atoms & 11.9964 & 42.5239  &  8.0345  & 58.0079 &  11.9646  &  71.0491\\ 
  \bottomrule
  \end{tabular}
\end{table}

Additionally, we further investigate the convergence behavior of the RBL-EAM method with respect to $r_c$ under a fixed batch size of $p=5$. Using $\alpha$-Fe metal as an example, 
we consider the  core cutoff radii $r_c = 2.5$, $3.0$, $3.8$, and $4.2$\AA. The corresponding RDF curves of the RBL-EAM method, shown in \autoref{fig:RDF-rc}, demonstrate rapid convergence to the reference EAM-DT curve as $r_c$ increases. 
These results confirm that with an appropriately chosen small batch size $p$ and core cutoff radius $r_c$, the RBL-EAM method can effectively capture the RDF with desired accuracy.
\begin{figure}[htbp]
    \centering
    \includegraphics[width = 0.95\textwidth]{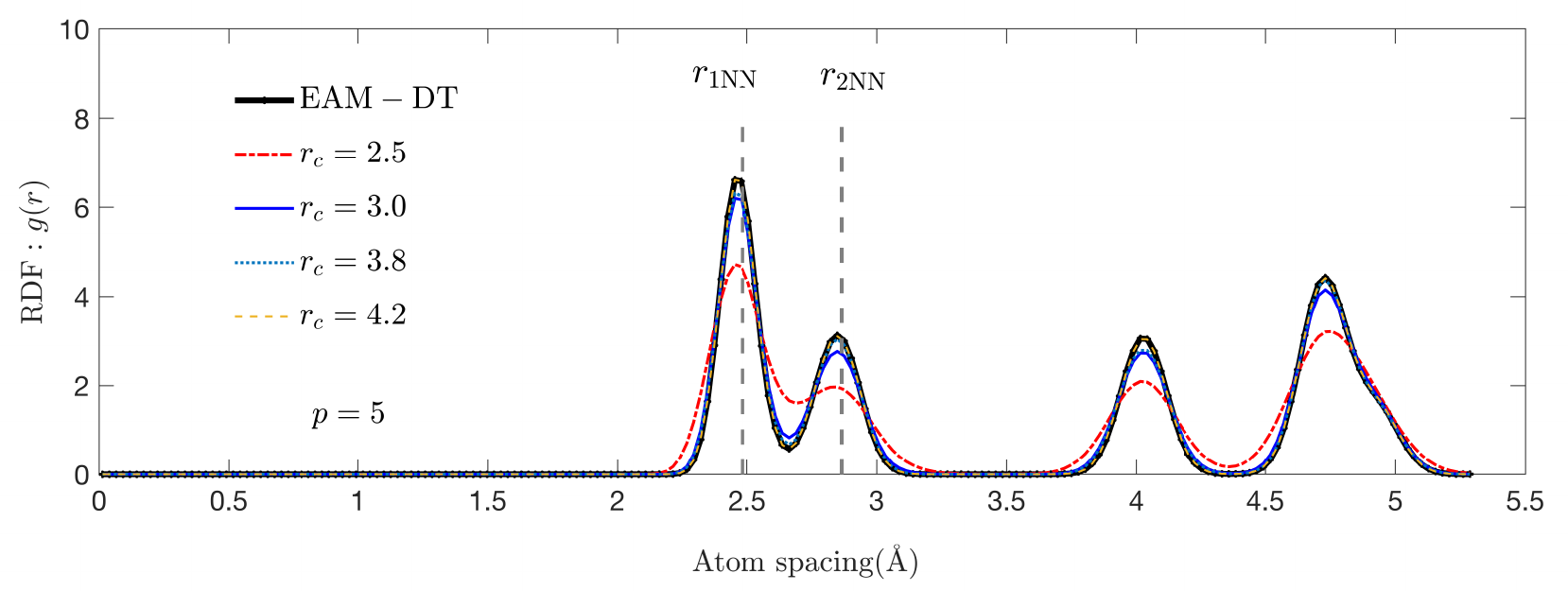}
    \caption{The RDF curves for $\alpha$-Fe with various core cutoff radii $r_c = 2.5$, $3.0$, $3.8$, $4.0$\AA\ and a fix batch size of $p=5$.}
    \label{fig:RDF-rc}
\end{figure}

\subsection{Elastic constants}\label{sec:4.3}
Elastic constants are fundamental parameters that characterize a material's stiffness in response to applied stresses. They play a crucial role in defining the elastic behavior of materials and are essential for solving linear elasticity equations. 
When experimental measurements of elastic constants are unavailable, MD simulations offer a reliable approach for their computation, serving as a bridge between atomic-scale interactions and continuum-level mechanical properties.
Elastic constants are determined by the interatomic forces acting on atoms when they are displaced from their equilibrium positions, thereby quantifying the material’s stiffness. The accuracy of these calculations is often used as a benchmark to assess the reliability of an interatomic potential. To evaluate the effectiveness of the proposed RBL-EAM method in practical applications, we employ it to compute the elastic constants of several metals.

For an atoms system in the NVT ensemble, the components $C_{\alpha \beta \mu \nu}$( $\alpha, \beta, \mu, \nu = 1, 2, 3$) of the elastic stiffness tensor $\bm{C}$  are calculated as follows \cite{ray1984statistical,shinoda2004rapid,clavier2017computation}:
\begin{equation}
  C_{\alpha \beta \mu \nu} = \langle C_{\alpha \beta \mu \nu}^B \rangle 
  - \frac{V}{k_B T}\Big[ \langle \varsigma_{\alpha \beta}^B \varsigma_{\mu \nu }^B \rangle - \langle \varsigma_{\alpha \beta}^B \rangle \langle\varsigma_{\mu \nu}^B\rangle\Big] 
  + \frac{N k_B T}{V} (\delta_{\alpha \mu}\delta_{\beta \nu} + \delta_{\alpha \nu}\delta_{\beta \mu}), 
\end{equation}
where $\bm{C}^B=(C_{\alpha \beta \mu \nu}^B)$ is the Born contribution to the elastic stiffness tensor, $\varsigma_{\alpha \beta}^B$ is the Born contribution to the stress tensor $\bm{\varsigma}$, $\delta_{\alpha\mu}$ is the Kronecker delta, and $V$ is the volume of the system.
The components of two tensors are given by:
\begin{equation*}
    C_{\alpha \beta \mu \nu}^B = \frac{1}{V} \frac{\partial^2 \mathcal{U}(\bm{q}_1, \bm{q}_2, \cdots \bm{q}_{N})}{\partial \epsilon_{\alpha \beta} \partial \epsilon_{\mu \nu}}, \quad
    \varsigma_{\alpha \beta}^B  = \frac{1}{V} \frac{\partial \mathcal{U}(\bm{q}_1, \bm{q}_2, \cdots \bm{q}_{N})}{\partial \epsilon_{\alpha \beta}}, 
\end{equation*}
where $\epsilon_{\alpha \beta}$ are the components of the strain tensor $\bm{\epsilon}$. 
Due to the symetry of $\bm{\varsigma}$ and $\bm{\epsilon}$, by introducing the Voigt notation (i.e., 11→1, 22→2, 33→3, 23→4, 13→5, 12→6), the generalized Hooke's law \cite{dell2009Hooke} can be written as $\varsigma_i = C_{ij} \epsilon_j$, $i,j = 1, 2, 3, 4, 5, 6$.
For the FCC and BCC crystals which exhibit cubic symmetry, there are only 3 independent elastic constants: $C_{11}$, $C_{12}$, $C_{44}$.
For the HCP crystals with hexagonal symmetry, the elastic tensor $\bm{C}$ has 6 independent elements, which are $C_{11}$, $C_{12}$, $C_{13}$, $C_{33}$, $C_{44}$, $C_{66}$.


%
To compute the elastic constants, we implement the explicit deformation method \cite{clavier2017computation}, which determines the set of elastic constants by loading six different deformations to the atomic system. 
The computational procedure is summarized in \autoref{Algorithm_4}, where the interaction forces and potentials are computed using the RBL-EAM method. 
We consider various batch sizes by $p = 0$, $5$, $10$, $15$, $20$ and $30$, respectively. 
Based on the results from RDF, we select two typical cutoff radii for each lattice type, i.e., 
$r_c = 2.8$, $4.0$\AA \ for the FCC lattice, $r_c = 2.7$, $3.1$\AA \ for the BCC lattice, and $r_c = 3.5$, $5.3$\AA \ for the HCP lattice, respectively. 
The shell cutoff radii are the same as those specified in the potential files. 
Each simulation is performed 5 times (except for $p=0$) to test the fluctuation of the elastic constants obtained by the RBL-EAM method. 
For comparison, we also run \autoref{Algorithm_4} for each simulation with EAM-DT method and take the resulting elastic constants as the reference values. 

\begin{algorithm}[H] 
\footnotesize{
  \caption{The computation process of elastic constants. }\label{Algorithm_4}
  \SetKwInOut{Initialize}{Initialize}
  \SetKwInOut{Compute}{Compute}
  \SetAlgoLined
  \KwIn{An EAM potential file for the given crystal.}
   \For{$ j = 1:6$}{
  \Initialize{The initial configuration and the specified temperature $T$.\\ Set the core cutoff radius $r_c$ and batch size $p$ for the RBL-EAM method.\\
  Relax the system in NVT ensemble for 2,000 steps;}
  \Compute{The ensemble average of stress tensor for initial configuration, denoted as $\bm{\varsigma}^0$.}
       Load a given elementary deformation $\epsilon_j$ and relax the system in NVT ensemble for 2,000 steps\;
       Run 1,000 steps to obtain the ensemble average stress $\bm{\varsigma}^j$ for the current configuration\;
       Calculate the components of elastic constants by the initial stress $\bm{\varsigma}^0$ and current stress $\bm{\varsigma}^j$.
    }
   \KwOut{The independent elastic constants of the given crystal.
   .
   } 
}
\end{algorithm}

\begin{figure}[htbp]
  \centering
  \subfloat[Cu-FCC]{\includegraphics[width = 0.8\textwidth]{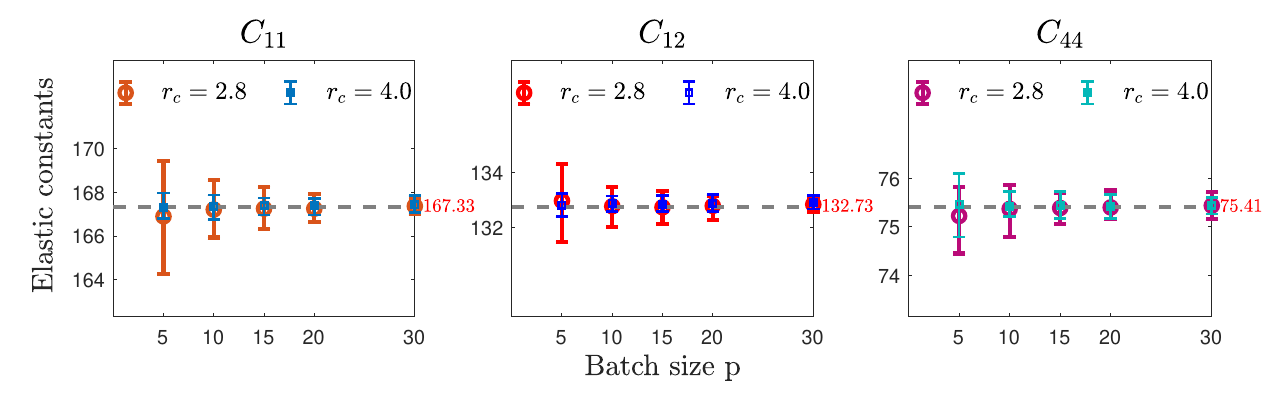}}\\
  \subfloat[$\alpha$-Fe-BCC]{\includegraphics[width = 0.82\textwidth]{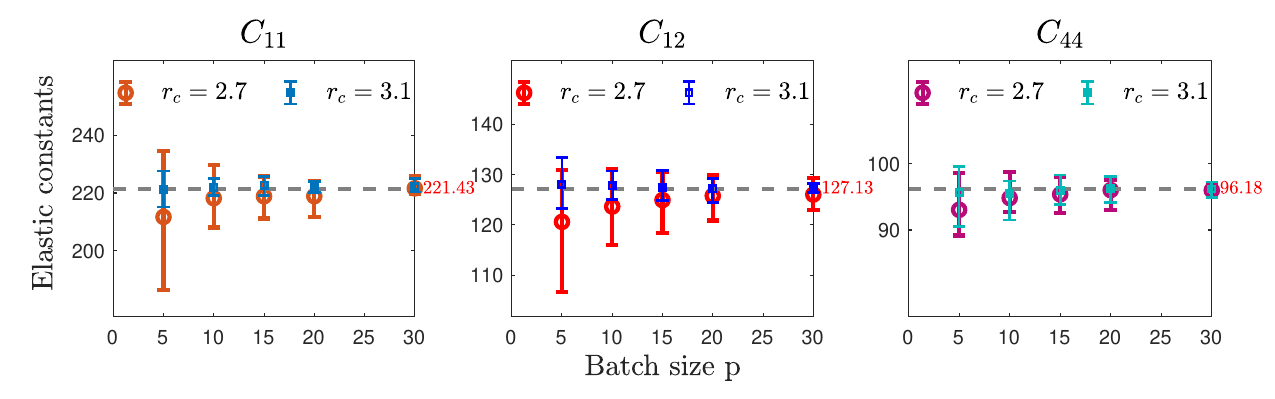}}\\
  \caption{Elastic constants $C_{11}$, $C_{12}$ and $C_{44}$ for the cubic crystals: (a) Cu and (b) $\alpha$-Fe. The results of RBL-EAM method with batch sizes $p = 5$, $10$, $15$, $20$, $30$ and two types of core cutoff radii are plotted. The reference values obtained by EAM-DT method is given at the right side. For easy comparison, we plot a dish line in these figures to identify the location of the reference value. 
  Similar phenomenon can be observed from the results of metal Mg. 
  }\label{fig:elastic_cubic}
\end{figure}
\begin{figure}[htbp]
  \centering
  \includegraphics[width = 0.82\textwidth]{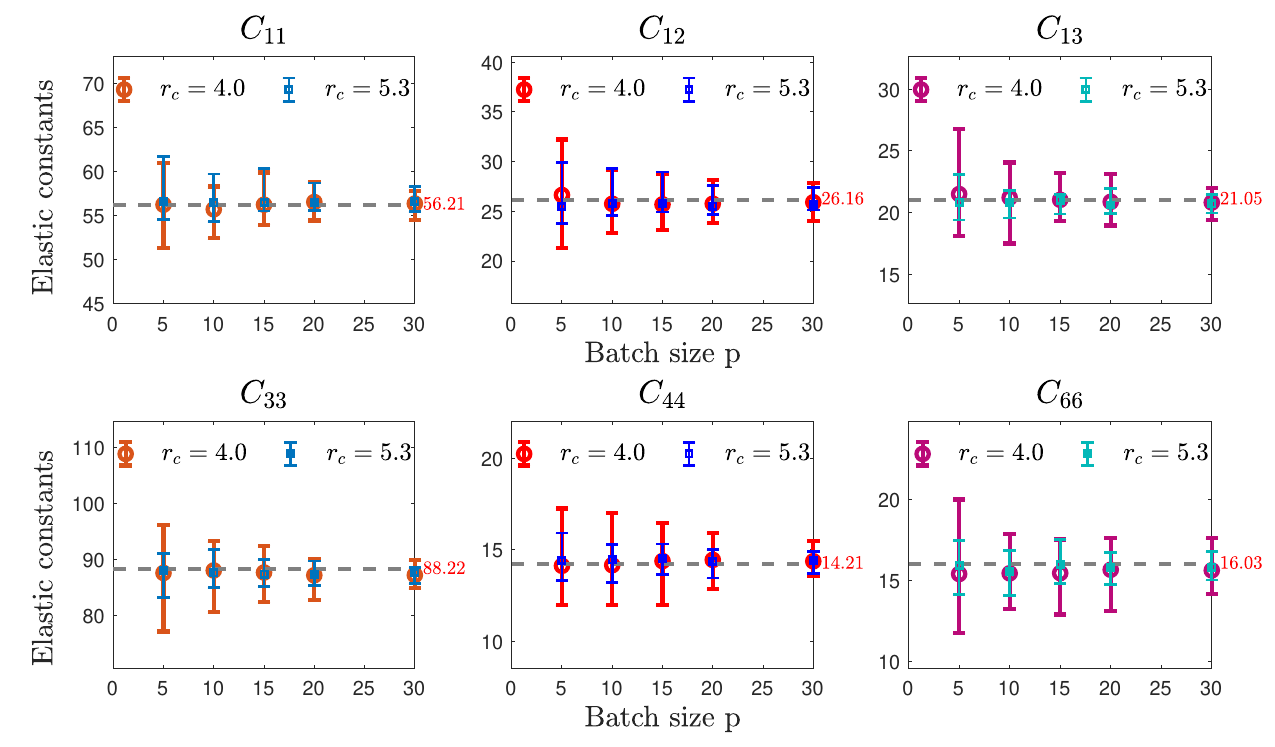}
  \caption{The elastic constants $C_{11}$, $C_{12}$, $C_{13}$, $C_{33}$, $C_{44}$ and $C_{66}$ for the hexagonal crystal Mg of HCP lattice, calculated by RBL-EAM method with batch sizes $p = 5, 10, 15, 20, 30$ with two core cutoff radii $r_c = 4.0$\AA\ and $r_c = 5.3$\AA and compared with the values (the right side in red) of EAM-DT method .}\label{fig:elastic_hcp}
\end{figure}

The results of elastic constants obtained by the RBL-EAM method are presented in \autoref{fig:elastic_cubic} and \autoref{fig:elastic_hcp} for Cu, $\alpha$-Fe, and Mg, respectively. 
The results shown in \autoref{fig:elastic_cubic} and Figure \autoref{fig:elastic_hcp} clearly demonstrate that the elastic constants computed using the RBL-EAM method with $p>0$ converge to the reference values as the batch size $p$ or core cutoff radius $r_c$ increases. 
Additionally, the stochastic fluctuation in the computed values decreases as the batch size or core cutoff radius increases.
For metal Cu, we suggest to use $p=5$ and $r_c=4.0$\AA \  or $p=10$ and $r_c=2.8$\AA. For metal $\alpha$-Fe,  if the batch size $p$ is set as 5, we should select a large core cutoff radius $r_c$, i.e., $r_c=3.1$\AA \  , to achieve an accurate elastic constants. 
We guess this is due to number ratio of neighboring atoms in the core and in the shell for $\alpha$-Fe is smaller than that of Cu, as reported in \autoref{tab:neighboring_number}. 
The elastic constants obtained by the RBL-EAM method with $p=0$ are listed in \autoref{tab:elastic_eam_0}.
From which, we observe that the elastic constants obtained by the RBL-EAM with $p=0$ exhibit a significant deviation from the reference values, which indicates that setting the cutoff radius $r_s=r_{\rm{1NN}}$ or $r_{\rm{2NN}}$ may lead to non-physical values for the elastic constants.

\begin{table}[htbp]\footnotesize
  \centering
  \caption{The elastic constants obtained by EAM-RBL method with $p=0$ or $p=5$ and EAM-DT method.}\label{tab:elastic_eam_0}
  \begin{tabular}{c|cc|ccccccc}
  \toprule
   ~ & ~ & ~& \multicolumn{6}{c}{Elastic constant (GPa)} \\
  Lattice  & {Methods} & Batch size & $c_{11}$ & $c_{12}$ & $c_{13}$ & $c_{33}$ & $c_{44}$ & $c_{66}$ \\
  \midrule   
   \multirow{3}*{FCC} & EAM-RBL & $p=0$ & 183.28 & 139.17 & - & - & 74.68 & - \\
                ~ &  ($r_c = 2.80$) & $p=5$ & 166.90 & 132.96 & - & - & 75.22 & - \\
                \cline{2-9}
     ~ & \makecell{EAM-DT\\($r_s = 4.94$)}&  & \textbf{167.33} & \textbf{132.73} & - & - & \textbf{75.41} & - \\
   \midrule                  
   \multirow{3}*{BCC} & {EAM-RBL} & $p=0$ & 120.39 & 158.02 & - & - & 20.18 & - \\
    ~ & ($r_c = 2.70$) & $p=5$ & 211.72 & 120.58 & - & - & 93.03 & - \\
    \cline{2-9}
     ~ & \makecell{EAM-DT\\($r_s = 5.30$)}&  & \textbf{221.43} & \textbf{127.13} & - & - & \textbf{96.18} & - \\
   \midrule  
   \multirow{3}*{HCP} & EAM-RBL & $p=0$ & 73.55 & 24.62 & 10.44 & 86.82 & 22.34 & 25.28 \\
     ~ &  ($r_c = 3.50$) & $p=5$ & 56.24 & 26.64 & 21.51 & 87.57 & 14.12 & 15.42 \\
     \cline{2-9}
     ~ & \makecell{EAM-DT\\($r_s = 7.50$)}& ~ & \textbf{56.76} & \textbf{25.93} & \textbf{21.16} & \textbf{87.61} & \textbf{13.92} & \textbf{15.83} \\
  \bottomrule
  \end{tabular}
\end{table}

\subsection{Efficiency of the RBL-EAM method}\label{sec:4.4}
Finally, we investigate the speedup of the proposed RBL-EAM method compared with the EAM-DT method. 
For solid materials with lattices under small deformations, atoms undergo random perturbations around their equilibrium positions. In this case, we can estimate the computational costs of the RBL-EAM and EAM-DT methods by considering the number of neighboring atoms in the core and shell regions.
The ratio of  computational costs  for these two methods should be close to $\frac{<N_i^c>+p}{N_i^c+N_i^s}$. Thus, the theoretical speedup of the RBL-EAM method is expected to be slightly less than $\frac{N_i^c+N_i^s}{<N_i^c>+p}$.
To verify the speedup, we simulate six systems with varying numbers of atoms $N=500$, $4,000$, $32,000$, $256,000$, $2,048,000$, $16,384,000$ using both RBL-EAM and EAM-DT methods, respectively. 
For each simulation, we run 20 steps in the NVT ensemble and measure the computational time for force and potential calculations to compute the speedup.
We consider two batch sizes $p=5$ and $10$. 
For metals Cu, $\alpha$-Fe, and Mg, the core cutoff radii are $r_c=2.8$, $2.7$ and $3.5$\AA, respectively. 

\begin{figure}[htbp]
  \centering
  \subfloat[Cu-FCC]{\includegraphics[width = 0.32\textwidth]{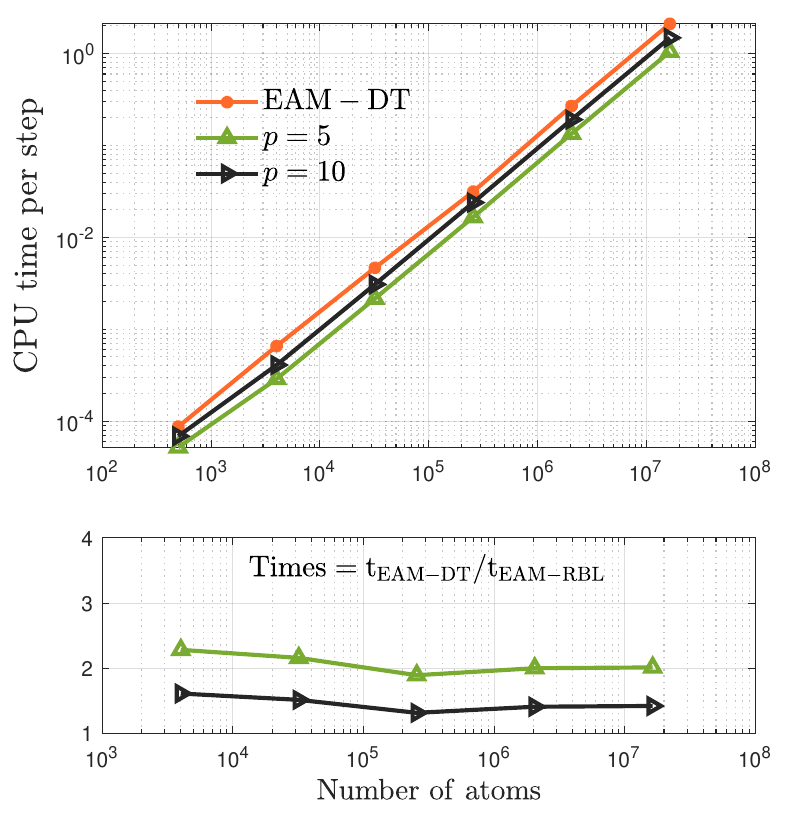}}
  \subfloat[$\alpha$-Fe-BCC]{\includegraphics[width = 0.32\textwidth]{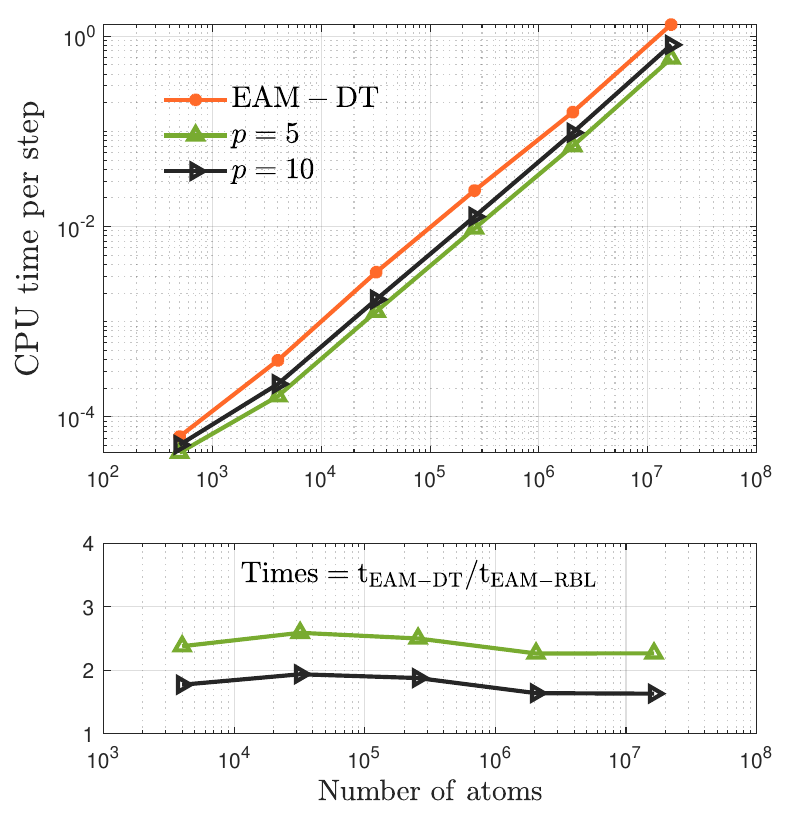}}
  \subfloat[Mg-HCP]{\includegraphics[width = 0.32\textwidth]{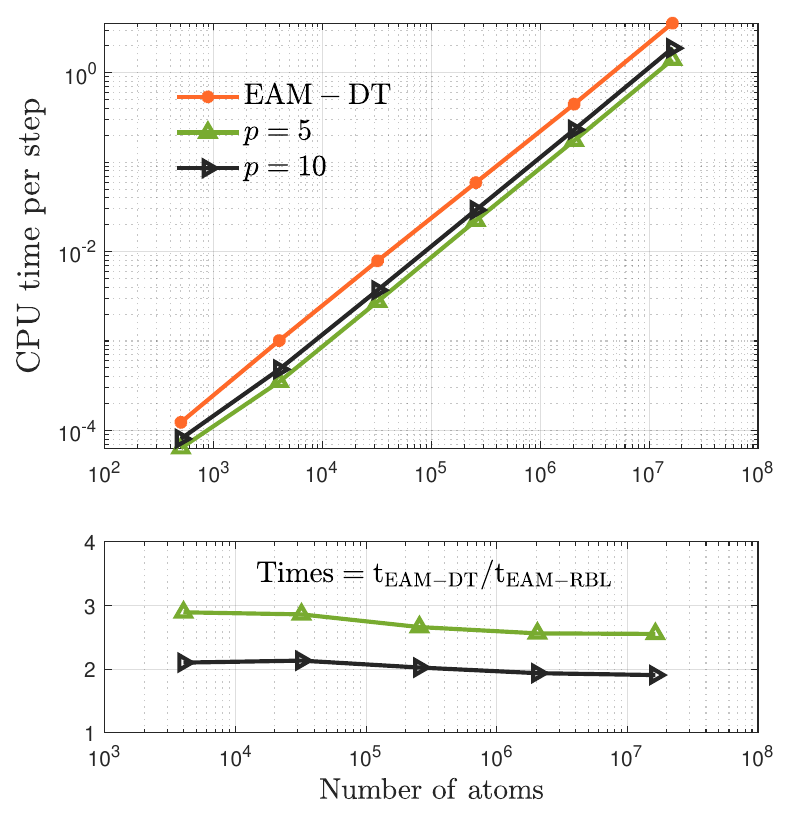}}
  \caption{The computational time (s) per time step for RBL-EAM and EAM-DT method. The speedup is computed by $t_{\rm{EAM-DT}}/t_{\rm{RBL-EAM}}$.}\label{fig:CPU_time}
\end{figure}

The computational times (in second per time step) and corresponding speedups for all simulations are reported in \autoref{fig:CPU_time}, which clearly shows that the RBL-EAM method achieves good speedup compared the EAM-DT method. 
Since both methods only involve near neighbor interaction, the computational time for these methods grows linearly with the increase of atoms number $N$.  
Furthermore, as shown in \autoref{fig:elastic_cubic}, the observed speedups of all simulations are close to the theoretical speedup. 
In summary, the RBL-EAM method is more efficient than the EAM-DT method.

%
\section{Conclusions} \label{sec:5}

In this paper, we have extended the RBL method to metallic systems with EAM potential. Leveraging the random mini-batch idea, both pairwise forces and embedding forces are approximately computed using two “core-shell” lists. The accuracy and efficiency of the proposed RBL-EAM method are validated through comprehensive numerical tests, including the calculation of lattice constants, radial distribution functions, and elastic constants. 
The RBL-EAM method achieves several times of the speedup compared to the traditional EAM-DT method, without compromising accuracy. 
This method shows great potential for large-scale simulations of metallic systems, with further improvements possible in batch size optimization and broader applications in material science, particularly in modeling complex atomic interactions and higher-order elastic constants.


\appendix
\section{The expectation of ${\mathcal{X}_i}$ in the case of $\mathfrak{B}_i'=\mathfrak{B}_i$.}\label{A-1}

For simplicity, let us consider a system with a full neighbor list, as used in \ref{forcefullsystem}. By taking $\mathfrak{B}_i'=\mathfrak{B}_i$, we have $H'_{ij}=H_{ij}$, which leads to $\mathbb{E}\left[ H'_{ij}H_{ik}\right]\neq \mathbb{E}\left[ H'_{ij}\right]\mathbb{E}\left[ H_{ik}\right]$.
Similar to \eqref{eq:26}, for $r_c<|\bm{r}_{ij}|\leq r_s$, we have 
\begin{equation}
\label{A-1-1}
\begin{aligned}\mathbb{E} \bigl[(\tilde{\bar{\rho}}_i &-{\bar{\rho}}_i) H_{ij}\bigr]= \frac{N_i^s}{p} \mathbb{E}\biggl[
      \sum_{r_c < |\bm{r}_{ik}| \leq r_s} \rho(|\bm{r}_{ik}|) H_{ij}H_{ik}
    \biggr]-\frac{p}{N_i^s}{\bar{\rho}}^s_i\\
&=\frac{N_i^s}{p} \mathbb{E}\biggl[\rho(|\bm{r}_{ij}|) H_{ij}+
      \sum_{\substack{ k(k\neq j)\\  r_c < |\bm{r}_{ik}| \leq r_s}} \rho(|\bm{r}_{ik}|) H_{ij}H_{ik}
    \biggr]-\frac{p}{N_i^s}{\bar{\rho}}^s_i\\
    &=\rho(|\bm{r}_{ij}|) +\frac{p-1}{N_i^s-1}
      \sum_{\substack{ k(k\neq j)\\  r_c < |\bm{r}_{ik}| \leq r_s}} \rho(|\bm{r}_{ik}|) 
    -\frac{p}{N_i^s}{\bar{\rho}}^s_i=\frac{N_i^s-p}{N_i^s-1}\rho(|\bm{r}_{ij}|) -\frac{N_i^s-p}{N_i^s(N_i^s-1)}{\bar{\rho}}^s_i,
\end{aligned}
\end{equation}
and 
\begin{equation}
\label{A-1-2}
\begin{aligned}\mathbb{E} \bigl[(\tilde{\bar{\rho}}_j &-{\bar{\rho}}_j) H_{ij}\bigr]= \frac{N_j^s}{p} \mathbb{E}\biggl[
      \sum_{r_c < |\bm{r}_{jk}| \leq r_s} \rho(|\bm{r}_{jk}|) H_{ij}H_{jk}
    \biggr]-\frac{p}{N_i^s}{\bar{\rho}}^s_j\\
&=\frac p{N_i^s} 
      \sum_{r_c < |\bm{r}_{jk}| \leq r_s} \rho(|\bm{r}_{jk}|)-\frac{p}{N_i^s}{\bar{\rho}}^s_j=0.
\end{aligned}
\end{equation}
By taking a summation of $j$ in \eqref{A-1-1}, we get
 \begin{equation}
 \label{A-1-3}
\sum_{\substack{j(j \neq i)\\  r_c < |\bm{r}_{ij}| \leq r_s}}\mathbb{E} \bigl[(\tilde{\bar{\rho}}_i -{\bar{\rho}}_i) H_{ij}\bigr]= \sum_{\substack{j(j \neq i)\\  r_c < |\bm{r}_{ij}| \leq r_s}}\frac{N_i^s-p}{N_i^s-1}\rho(|\bm{r}_{ij}|) -\frac{N_i^s-p}{N_i^s-1}{\bar{\rho}}^s_i=0. 
\end{equation}
Therefore, we have
\begin{equation}\label{A-2}    
\begin{aligned}\mathbb{E} \bigl[\tilde{\bm{\sigma}}^{e}_{ij,1} H_{ij}\bigr]&-\frac{p}{N_i^s}{\bm{\sigma}}^{e}_{ij,1}=-\mathbb{E} \bigl[\mathcal{F}'(\tilde{\bar{\rho}}_i) H_{ij}\bigr]{\rho}'(|\bm{r}_{ij}|)\frac{\partial |\bm{r}_{ij}|}{\partial \bm{q}_i}-\frac{p}{N_i^s}{\bm{\sigma}}^{e}_{ij,1}\\
&=-\mathbb{E} \bigl[\mathcal{F}'(\bar{\rho}_i)H_{ij} + \mathcal{F}''(\bar{\rho}_i) \cdot (\tilde{\bar{\rho}}_i -{\bar{\rho}}_i)H_{ij}\bigr]{\rho}'(|\bm{r}_{ij}|)\frac{\partial |\bm{r}_{ij}|}{\partial \bm{q}_i}-\frac{p}{N_i^s}{\bm{\sigma}}^{e}_{ij,1}\\
&=-\mathbb{E} \bigl[ (\tilde{\bar{\rho}}_i -{\bar{\rho}}_i)H_{ij}\bigr]\mathcal{F}''(\bar{\rho}_i) {\rho}'(|\bm{r}_{ij}|)\frac{\partial |\bm{r}_{ij}|}{\partial \bm{q}_i},
\end{aligned}
\end{equation}
and
\begin{equation}  
\label{A-3}
\begin{aligned}\mathbb{E} \bigl[\tilde{\bm{\sigma}}^{e}_{ij,2} H_{ij}\bigr]&-\frac{p}{N_i^s}{\bm{\sigma}}^{e}_{ij,2}=-\mathbb{E} \bigl[\mathcal{F}'(\tilde{\bar{\rho}}_j) H_{ij}\bigr]{\rho}'(|\bm{r}_{ji}|)\frac{\partial |\bm{r}_{ji}|}{\partial \bm{q}_i}-\frac{p}{N_i^s}{\bm{\sigma}}^{e}_{ij,2}\\
&=-\mathbb{E} \bigl[\mathcal{F}'(\bar{\rho}_j)H_{ij} + \mathcal{F}''(\bar{\rho}_j) \cdot (\tilde{\bar{\rho}}_j -{\bar{\rho}}_j)H_{ij}\bigr]{\rho}'(|\bm{r}_{ji}|)\frac{\partial |\bm{r}_{ji}|}{\partial \bm{q}_i}-\frac{p}{N_i^s}{\bm{\sigma}}^{e}_{ij,2}\\
&=-\mathbb{E} \bigl[ (\tilde{\bar{\rho}}_j -{\bar{\rho}}_j)H_{ij}\bigr]\mathcal{F}''(\bar{\rho}_j) {\rho}'(|\bm{r}_{ji}|)\frac{\partial |\bm{r}_{ji}|}{\partial \bm{q}_i}=0.
\end{aligned}
\end{equation}
By combining \eqref{eq:expect_ps}, \eqref{eq:expect_ec-1}, \eqref{A-2},  and \eqref{A-3}, we have
\begin{equation}
\label{A-4}
  \begin{split}
    \mathbb{E}[\bm{\mathcal{X}}_i] &=  \mathbb{E}\bigl[ \tilde{\bm{\sigma}}_i^{e} -{\bm{\sigma}}_i^{e} \bigr]\\&=\sum_{\substack{j(j \neq i)\\ |\bm{r}_{ij}| \leq r_c}} \mathbb{E} \left[ (\tilde{\bm{\sigma}}^{es}_{ij,1}+\tilde{\bm{\sigma}}^{es}_{ij,2})\right]+\frac{N_i^s}p\sum_{\substack{j(j \neq i)\\  r_c < |\bm{r}_{ij}| \leq r_s}} \mathbb{E} \left[ (\tilde{\bm{\sigma}}^{es}_{ij,1}+\tilde{\bm{\sigma}}^{es}_{ij,2}) H_{ij}\right]-{\bm{\sigma}}_i^{e}  \\&=-\frac{N_i^s}p\sum_{\substack{j(j \neq i)\\  r_c < |\bm{r}_{ij}| \leq r_s}}\mathbb{E} \bigl[ (\tilde{\bar{\rho}}_i -{\bar{\rho}}_i)H_{ij}\bigr]\mathcal{F}''(\bar{\rho}_i) {\rho}'(|\bm{r}_{ij}|)\frac{\partial |\bm{r}_{ij}|}{\partial \bm{q}_i}.
  \end{split}
\end{equation} 
Even with \eqref{A-1-3}, the last term in \eqref{A-4} is still nonzero because  ${\rho}'(|\bm{r}_{ij}|)\frac{\partial |\bm{r}_{ij}|}{\partial \bm{q}_i}$  differs for different $j$. Therefore, we have  $\mathbb{E}[\bm{\mathcal{X}}_i]=0$ does not hold, even for a linear function ${\mathcal{F}'(\rho)}$.



\bibliographystyle{elsarticle-num}
\bibliography{ref.bib}

\end{document}